\newtheorem*{rep@theorem}{\rep@title}
\newcommand{\newreptheorem}[2]{%
\newenvironment{rep#1}[1]{%
 \def\rep@title{#2 \ref{##1}}%
 \begin{rep@theorem}}%
 {\end{rep@theorem}}}
\newtheorem{theorem}{Theorem}
\newtheorem{lemma}{Lemma}
\newtheorem{proposition}{Proposition}
\newtheorem{corollary}{Corollary}
\theoremstyle{remark}
\newtheorem{example}{Example}
\newtheorem{remark}{Remark}
\DeclareMathOperator{\rank}{rank}
\DeclareMathOperator{\linspan}{span}
\newcommand{\ignore}[1]{}
\newcommand{\bi}{\leftrightarrow}
\newcommand{\sign}{ \text{sgn} }
\numberwithin{equation}{section}
\newcommand{\pa}{{\rm pa}}       
\newcommand{\sib}{{\rm sib}}	   
\newcommand{\cardY}{V} 
\begin{document}

\begin{frontmatter}

\title{Computation of Maximum Likelihood Estimates in Cyclic Structural Equation Models}
\runtitle{Maximum likelihood estimation for structural equation models}
  
\author{\fnms{Mathias} \snm{Drton}\thanksref{m1}
  \ead[label=e3]{md5@uw.edu}}, 
  \author{\fnms{Christopher} \snm{Fox}\thanksref{m2}\ead[label=e1]{chrisfox.galton@gmail.com}}
    \and
  \author{\fnms{Y. Samuel} \snm{Wang}\thanksref{m1}\ead[label=e2]{ysamwang@uw.edu}}

\address{Department of Statistics\\
University of Washington\\
Seattle,   WA, U.S.A.\\
\printead{e2,e3}\\}
\address{Department of Statistics\\
The University of Chicago\\
Chicago, IL, U.S.A.\\
\printead{e1}
}

  \affiliation{ University of Washington \thanksmark{m1} and University of Chicago \thanksmark{m2}}

\begin{abstract}
  Software for computation of maximum likelihood estimates in linear
  structural equation models typically employs general techniques from
  non-linear optimization, such as quasi-Newton methods.  In practice,
  careful tuning of initial values is often required to avoid
  convergence issues.  As an alternative approach, we propose a
  block-coordinate descent method that cycles through the considered
  variables, updating only the parameters related to a given variable in
  each step.  We show that the resulting 
  block update problems can be solved in closed form even when the structural
  equation model comprises feedback cycles.  Furthermore, we give a
  characterization of the models for which the block-coordinate
  descent algorithm is well-defined, meaning that for generic data and
  starting values all block optimization problems admit a unique
  solution.  For the characterization, we represent each model by its
  mixed graph (also known as path diagram), which leads to
  criteria that can be checked in time that is polynomial in the
  number of considered variables.
\end{abstract}
\begin{keyword}[class=MSC]
\kwd{62H12}
\kwd{62F10}
\end{keyword}

\begin{keyword}
\kwd{cyclic graph, feedback, linear structural equation model,
  graphical model, maximum likelihood estimation}
\end{keyword}

\end{frontmatter}

\section{Introduction}
\label{sec:introBCD}

Structural equation models (SEMs) provide a general framework for
modeling stochastic dependence that arises through cause-effect
relationships between random variables.  The models form a cornerstone
of multivariate statistics with applications ranging from biology to
the social sciences \citep{bollen:1989,hoyle,kline}.  Through their
representation by path diagrams, which originate in the work of
\cite{wright:1921,wright:1934}, the models encompass directed
graphical models \citep{lauritzen:1996}.  While SEMs can naturally be
interpreted as models of causality that predict effects of
experimental interventions \citep{spirtes:2000,pearl:2009}, the focus
of this paper is on observational scenarios.  In other words, we
consider statistical inference based on a single independent sample
from a distribution in an SEM.  Concretely, we will treat linear SEMs
in which the effects of any latent variables are marginalized out and
represented through correlation among the error terms in the
structural equations; see e.g.~\citet[Section 3.7]{pearl:2009},
\citet[Chap.~6]{spirtes:2000} or \cite{wermuth:2011}.  This setting
arises, in particular, in problems of network recovery through model
selection as treated, e.g., by \cite{colombo:2012}, \cite{silva:2013}
or \cite{nowzohour:2015}.  For further details and references, see
Section 5.2 in \cite{drton:maathuis:2017}.

The specific problem we address is the computation of maximum
likelihood estimates (MLEs) in linear SEMs with Gaussian errors in the
structural equations.  The R packages `sem' \citep{fox:2006} and
`lavaan' \citep{lavaan} as well as commercial software
\citep{narayanan:2012} solve this problem by applying general
quasi-Newton methods for non-linear optimization.  However, these methods are often subject to convergence problems and may require careful choice of starting values  \citep{steiger:driving:2001}.  This is particularly exacerbated when computing MLEs in poorly fitting models as part of model
selection \citep{drton:eichler:richardson:2009}.  As a software manual puts it: ``It can be devilishly difficult for
software to obtain results for SEMs'' \cite[p.~112]{stata:2013}.

As an alternative, we propose a block-coordinate descent (BCD) method
that cycles through the considered variables, updating the parameters
related to a given variable in each step.  Each update is performed
through partial maximization of the likelihood function.  This method
generalizes the iterative conditional fitting algorithm of
\cite{chaudhuri:drton:richardson:2007} as well as the algorithm of
\cite{drton:eichler:richardson:2009}.  In contrast to this earlier
work, our extension is applicable to models that comprise feedback
cycles.  Models with feedback cycles have been treated by \cite{spir:uaicyc},
\cite{Richardson96Algorithm,rich:dcgmarkov}, and more
recently by \cite{Lacerda08}, \cite{mooij:heskes:2013} and
\cite{raskutti:2016}.  An example of a recent application can be found
in the work of \cite{nature}.

The presence of feedback loops complicates likelihood inference as
even in settings without latent variables MLEs are generally
high-degree algebraic functions of the data.  For example, the MLE in
the model given by the graph in Figure~\ref{fig:mldegree} is  an
algebraic function of degree 7; see Chapter 2.1 in \cite{oberwolfach}
for how to compute this ML degree.  Somewhat surprisingly, however, the
update steps in our BCD algorithm admit a closed form even in the
presence of feedback loops, and the computational effort is on the
same order as in the case without feedback loops.  In numerical
experiments the BCD algorithm is seen to avoid convergence problems.

\begin{figure}[t]
  \centering
      \begin{tikzpicture}[->,>=triangle 45,shorten >=1pt,
        auto,thick,
        main node/.style={circle,inner
          sep=2pt,fill=gray!20,draw,font=\sffamily}]  
      
        \node[main node] (1) {1}; 
        \node[main node] (2) [right=1.5cm of 1] {2}; 
        \node[main node] (3) [right=1.5cm of 2] {3}; 
        \node[main node] (4) [right=1.5cm of 3] {4};
        \node[main node] (5) [right=1.5cm of 4] {5};
      
        \path[color=black!20!blue,every
        node/.style={font=\sffamily\small}] 
        (1) edge node {} (2)
        (5) edge node {} (4);
        \path[color=black!20!blue,every
        node/.style={font=\sffamily\small}, bend right] 
        (2) edge node {} (3)
        (3) edge node {} (4)
        (4) edge node {} (3)
        (3) edge node {} (2)
        ;
      \end{tikzpicture}
      \caption{Graph of a cyclic linear SEM with maximum likelihood
        degree 7.}
  \label{fig:mldegree}
\end{figure}
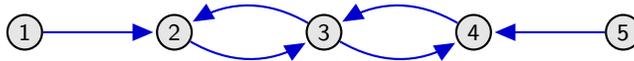

As a second main contribution, we show that the algorithm applies to
interesting models with `bows'.  In terms of the mixed graph/path
diagram, a bow is a subgraph on two nodes $i$ and $j$ with two edges
$i\to j$ and $i\bi j$.  Such a subgraph indicates that there is both a
direct effect of the $i$-th variable on the $j$-th variable as well as a latent
confounder with effects on the two variables.  Bows can lead to
collinearity issues in the BCD algorithm, and we are able to give a
characterization of the models for which the algorithm is
well-defined, meaning that for generic data and starting values all
block optimization problems admit a unique and feasible solution.  
For the characterization, we represent
each model by its mixed graph/path diagram, which leads to criteria
that can be checked in time that is polynomial in the number of
considered variables.

The paper is organized as follows.  In Section \ref{sec:sem}, we
review necessary background on linear SEMs.
The new BCD
algorithm is derived in Section \ref{sec:cycles}.  Its properties are
discussed in Section \ref{sec:assumptionDiscussion}.  Numerical
examples are presented in Section \ref{sec:simulations}.  Finally, we
conclude with a discussion of the considered problem in Section
\ref{sec:conclusion}.


\section{Linear structural equation models}
\label{sec:sem}

\subsection{Basics}

A structural equation model (SEM) captures dependence among a set of
variables $\{Y_i: i\in\cardY\}$.
Each model is built from a system of equations, with one equation for
each considered variable.  Each such \emph{structural} equation
specifies how a variable $Y_i$
arises as a function of the other variables and a stochastic error
term $\epsilon_i$.  In the linear case considered here, we have
\begin{equation}
Y_i = \sum_{j \in \cardY \setminus \{i\}} \beta_{ij}Y_j + \epsilon_i,
\qquad i\in \cardY.
\label{eq:structural_eqns}
\end{equation}
Collecting the $Y_i$
and $\epsilon_i$
terms into the vectors $Y$
and $\epsilon$,
respectively, (\ref{eq:structural_eqns}) can be rewritten as
\begin{equation} \label{eq:matrix_form_Y}
Y = BY + \epsilon,
\end{equation}
where $B = (\beta_{ij})$ is a matrix of coefficients that
are sometimes termed \emph{structural parameters} \citep{bollen:1989}.
Specific models of interest are obtained by assuming that for some
index pairs $(i,j)$, variable $Y_j$ has no direct effect on $Y_i$,
which in the linear framework is encoded by the restriction that
$\beta_{ij} = 0$.

Techniques for statistical inference are often based on the assumption
that $\epsilon$
follows a multivariate normal distribution with possible dependence
among its coordinates.  So,
\begin{equation} \label{eq:epsilon}
\epsilon \sim \mathcal{N}(0,\Omega),
\end{equation}
where $\Omega=(\omega_{ij})$ is a symmetric, positive definite matrix
of parameters.  An entry $\omega_{ij}$ may capture effects of
potential latent variables that are common causes of $Y_i$ and $Y_j$.
When no latent common cause of $Y_i$ and $Y_j$ is believed to exist,
constrain $\omega_{ij} = \omega_{ji} = 0$ \citep[see e.g.][]{spirtes:2000,pearl:2009}.  As a result of
(\ref{eq:matrix_form_Y}) and (\ref{eq:epsilon}), the observed random
variables, $Y$, have a centered normal distribution with covariance
matrix
\begin{equation} \label{eq:sigmaMap}
\Sigma = (I-B)^{-1}\Omega(I-B)^{-T}.
\end{equation}
Here, $I$ is the $V\times V$ identity matrix.  Note that the assumption of
centered variables can be made without loss of generality
\cite[Chapter 7]{MR1990662}.

It is often convenient to represent an SEM by a mixed graph or path
diagram \citep{wright:1921, wright:1934}.  The graph has vertex set
$V$ and is mixed in the sense of having both a set of \emph{directed
  edges} $E_{\to}$ and a set of \emph{bi-directed edges} $E_{\bi}$.
The directed edges in $E_{\to}$ are ordered pairs in $V\times V$,
whereas the edges in $E_{\bi}$ have no orientation and are unordered
pairs $\{i,j\}$ with $i,j\in V$.  We will often write $i\to j$ in
place of $(i,j)$ for a potential edge in $E_{\to}$ and $i\bi j$ for a
potential edge $\{i,j\}$ in $E_{\bi}$.  In this setup, each variable
$Y_i$ is then represented by a node, corresponding to its index
$i \in V$.  An edge $j\to i$ is not in $E_{\to}$ if and only if the
model imposes the constraint that $\beta_{ij} = 0$.  Note that in our
context there are no self-loops $i\to i$.  Similarly, the edge
$i\bi j$ is absent from $E_{\bi}$ if and only if the model imposes the
constraint that $\omega_{ij} = \omega_{ji} = 0$.  Finally, for each
node $j\in V$, we define two sets $\pa(j)$ and $\sib(j)$ that we refer
to as the \emph{parents} and \emph{siblings} of $j$, respectively.
The set $\pa(j)$ comprises all nodes $i\in V$ such that
$i\to j \in E_{\to}$, and $\sib(j)$ is the set of all nodes $i\in V$
such that $i\bi j \in E_{\bi}$.

Let $G=(V,E_{\to},E_{\bi})$ be a mixed graph, and define
$\mathbf{B}(G)$ to be the set of real $V\times V$ matrices
$B=\left(\beta_{ij}\right)$ such that $I-B$ is invertible and
\begin{equation}
\beta_{ij}=0 \quad \text{ whenever } j\to i \notin E_{\to}.
\label{eq:BofG}
\end{equation}
Similarly, define $\mathbf{\Omega}(G)$ to be the set of all positive
definite symmetric $V\times V$ matrices $\Omega = (\omega_{ij})$ that
satisfy
\begin{equation}
\omega_{ij} = 0 \quad \text{ whenever } j\bi i \notin E_{\bi}.
\label{eq:OofG}
\end{equation}
The \emph{linear SEM} $\mathbf{N}(G)$ 
associated with graph $G$ is then the family of multivariate
normal distributions $\mathcal{N}(0,\Sigma)$ with covariance matrix
$\Sigma$ as in~(\ref{eq:sigmaMap}) for $B\in\mathbf{B}(G)$ and
$\Omega\in\mathbf{\Omega}(G)$.

A mixed graph $G$ and the associated model $\mathbf{N}(G)$ are
\emph{cyclic} if $G$ contains a directed cycle, that is, a subgraph of
the form
\[
i_1 \to i_2 \to \cdots \to i_k \to i_1
\]
for distinct nodes $i_1,\ldots i_k \in V$, $k\ge 2$.  If there is no
such cycle, the graph and corresponding model are said to be
\emph{acyclic}.  Acyclicity brings about great simplifications as we
have $\det(I-B)=1$ for every $B\in\mathbf{B}(G)$ if and only if $G$ is
acyclic.  To see this note that when $G$ is acyclic, there exists a
topological ordering of $V$, i.e., a relabeling of $V$ such that
$i\to j \in E_{\to}$ only if $i < j$.  Under such an ordering every
matrix in $\mathbf{B}(G)$ is strictly lower triangular.  If $G$ is an
acyclic digraph, such that $E_{\bi}=\emptyset$ then the MLE in
$\mathbf{N}(G)$ is obtained by solving a linear regression problem for
each variable $Y_i$, $i\in V$.  For an acyclic graph with
$E_{\bi}\not=\emptyset$, this is generally no longer the case but the
MLE can be found by iterative least squares computations
\citep{drton:eichler:richardson:2009}.

\subsection{Cyclic models}

A challenge in the computation of MLEs in models with cyclic path
diagrams is the fact that $\det(I-B)$ is not constant one.
For example, $\det(I-B) = 1-\beta_{32}\beta_{43}\beta_{24}$
for matrices $B\in\mathbf{B}(G)$ when $G$ is the mixed graph in Figure
\ref{fig:example}.  We observe a correspondence between the term
$\beta_{32}\beta_{43}\beta_{24}$ and the directed cycle
$2\to 3 \to 4 \to 2$ in the graph.  We now review this connection in
the setting of a general mixed graph $G$.

Let $\mathbf{S}_V$ be the symmetric group of all permutations of the vertex set $V$.  Every permutation
$\sigma\in\mathbf{S}_V$ has a unique decomposition into disjoint
permutation cycles.  Let $\mathcal{C}(\sigma)$ be the set of
permutation cycles of $\sigma$, and let $\mathcal{C}_2(\sigma)$ be the
subset containing cycles of length 2 or more.  Write $n(\sigma)$ for
the cardinality of $\mathcal{C}_2(\sigma)$, and $V(\sigma)$ for the
set of nodes that are contained in a cycle in $\mathcal{C}_2(\sigma)$.
Moreover,
define
\begin{equation}
\mathbf{S}_V(G) = \{\sigma\in \mathbf{S}_V \ :\  i=\sigma(i) \
\text{ or } \ i\to\sigma(i) \in
E_{\to} \ \text{ for all } i\in V \}.
\end{equation}

\begin{lemma}
  \label{lem:det}
  Let $B=(\beta_{ij})\in\mathbf{B}(G)$
  for a mixed graph $G=(V,E_{\to},E_{\bi})$. Then
  \[
    \det{(I-B)} \ = \sum_{\sigma\in\mathbf{S}_V(G)}(-1)^{n(\sigma)}
    \prod_{i\in V(\sigma)}\beta_{\sigma(i),i}. 
  \]
\end{lemma}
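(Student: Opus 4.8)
The plan is to start from the Leibniz expansion of the determinant and identify which permutations survive. First I would write, using invariance of the determinant under transposition,
\[
\det(I-B) \;=\; \det\!\bigl((I-B)^{T}\bigr) \;=\; \sum_{\sigma\in\mathbf{S}_V}\sign(\sigma)\prod_{i\in V}\bigl(\delta_{i,\sigma(i)}-\beta_{\sigma(i),i}\bigr),
\]
where $\delta$ denotes the Kronecker delta; transposing at the outset is what makes the indices of $\beta$ come out in the order $\beta_{\sigma(i),i}$ required by the statement. Since $G$ has no self-loops, $\beta_{ii}=0$, so for a fixed point $i=\sigma(i)$ the factor equals $\delta_{ii}=1$, while for $i$ with $\sigma(i)\neq i$ the factor equals $-\beta_{\sigma(i),i}$. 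Writing $F(\sigma)=\{i\in V:\sigma(i)\neq i\}$, the product over $i$ therefore collapses to $(-1)^{|F(\sigma)|}\prod_{i\in F(\sigma)}\beta_{\sigma(i),i}$.

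Next I would reconcile this with the right-hand side of the claimed identity. The non-fixed points of $\sigma$ are precisely the nodes lying in the cycles of $\mathcal{C}_2(\sigma)$, so $F(\sigma)=V(\sigma)$ and the product over $i$ becomes $(-1)^{|V(\sigma)|}\prod_{i\in V(\sigma)}\beta_{\sigma(i),i}$. For the sign, I would invoke the elementary fact $\sign(\sigma)=\prod_{c\in\mathcal{C}(\sigma)}(-1)^{|c|-1}$, in which the $1$-cycles contribute $1$, so that $\sign(\sigma)=\prod_{c\in\mathcal{C}_2(\sigma)}(-1)^{|c|-1}$. Combining this with $(-1)^{|V(\sigma)|}=\prod_{c\in\mathcal{C}_2(\sigma)}(-1)^{|c|}$ gives
\[
\sign(\sigma)\,(-1)^{|V(\sigma)|} \;=\; \prod_{c\in\mathcal{C}_2(\sigma)}(-1)^{2|c|-1} \;=\; (-1)^{n(\sigma)},
\]
so each $\sigma$ contributes exactly $(-1)^{n(\sigma)}\prod_{i\in V(\sigma)}\beta_{\sigma(i),i}$ to $\det(I-B)$.

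Finally I would discard the permutations not in $\mathbf{S}_V(G)$: if $\sigma\notin\mathbf{S}_V(G)$, there is some $i$ with $\sigma(i)\neq i$ and $i\to\sigma(i)\notin E_{\to}$, hence $\beta_{\sigma(i),i}=0$ by the support restriction~\eqref{eq:BofG}, so the corresponding term vanishes; restricting the sum to $\mathbf{S}_V(G)$ then yields the stated identity (with the identity permutation contributing the constant term $1$, consistent with $\det(I-B)$ having constant term one). As for difficulty, there is no real obstacle here: the whole argument is a direct unwinding of the Leibniz formula, and the only step demanding care is the sign bookkeeping, i.e.\ checking that $\sign(\sigma)$ together with the factor $(-1)$ coming from each off-diagonal entry of $I-B$ telescopes to $(-1)^{n(\sigma)}$.
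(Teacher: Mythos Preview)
Your proof is correct and follows essentially the same route as the paper's: a Leibniz expansion, the observation that fixed points contribute $1$ while non-fixed points contribute $-\beta_{\sigma(i),i}$, the cycle-by-cycle sign bookkeeping $\sign(\sigma)\,(-1)^{|V(\sigma)|}=(-1)^{n(\sigma)}$, and the restriction to $\mathbf{S}_V(G)$ via the support constraint on $B$. The only cosmetic differences are that the paper restricts to $\mathbf{S}_V(G)$ before doing the sign computation and writes $\prod_i (I-B)_{\sigma(i),i}$ directly rather than passing through the transpose.
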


The lemma follows from a Leibniz expansion of the determinant.  It
could be derived from Theorem 1 in \cite{harary:1962} by treating the
diagonal of $I-B$ as self-loops with weight 1 and taking into account
that $B$ is negated.  Because the lemma is of importance for later
developments, we include its proof in Appendix~\ref{app:lem:det}.

When deriving the block-coordinate descent algorithm proposed in
Section~\ref{sec:cycles}, we treat $\det(I-B)$ as a function of only
the entries in a given row.  By multilinearity of the determinant this
function is linear and its coefficients are obtained in a Laplace
expansion.  Throughout the paper, we let $-i:=V\setminus\{i\}$ and
denote the $U\times W$ submatrix of a matrix $A$ by $A_{U,W}$.

\begin{figure}[t]
  \centering
      \begin{tikzpicture}[->,>=triangle 45,shorten >=1pt,
        auto,thick,
        main node/.style={circle,inner
          sep=2pt,fill=gray!20,draw,font=\sffamily}]  
      
        \node[main node] (1) {1}; 
        \node[main node] (2) [right=1.5cm of 1] {2}; 
        \node[main node] (3) [right=1.5cm of 2] {3}; 
        \node[main node] (4) [below=1.5cm of 3] {4};
        \node[main node] (5) [left=1.5cm of 4] {5};
      	\node[main node] (6) [left=1.5cm of 5] {6};
      
        \path[color=black!20!blue,every
        node/.style={font=\sffamily\small}] 
        (1) edge node {} (2)
        (2) edge node {} (3)
        (3) edge node {} (4)
        (4) edge node {} (5)
        (4) edge node {} (2)
        (5) edge node {} (6);
        
        \path[color=black!20!blue,every
        node/.style={font=\sffamily\small}, style={<->, dashed}] 

        (2) edge node {} (5)
        (3) edge node {} (5)
        ;
      \end{tikzpicture}
   \caption{\label{fig:example}Cyclic mixed graph that is almost everywhere identifiable.}

\end{figure}
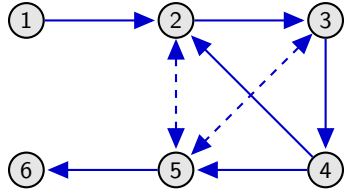

\begin{lemma}
  \label{thm:detForm}
  Let $B=(\beta_{ij})\in\mathbf{B}(G)$ for a mixed graph
  $G=(V,E_{\to},E_{\bi})$. Fix an arbitrary node $i\in V$.  Then
  $\det(I-B)$ is linear in the entries of
  $B_{i,\pa(i)}=(\beta_{ij}:j\in\pa(i))$ with
  \[
    \det(I-B) = c_{i,0} + B_{i,\pa(i)} c_{i,\pa(i)},
  \]
  where $c_{i,0}\in\mathbb{R}$ and the entries of
  $c_{i,\pa(i)}\in\mathbb{R}^{\pa(i)}$ are subdeterminants, namely, 
  \[
    c_{i,0} \;=\; \det\left( (I-B)_{-i,-i} \right), \qquad
    c_{i,p} \;=\; (-1)^{i+p-1}\det\left( (I-B)_{-i,-p}\right), \quad p\in\pa(i);
  \]
  to define $(-1)^{i+p-1}$ enumerate $V$ in accordance with
  the layout of the matrix $I-B$.
\end{lemma}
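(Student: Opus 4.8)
The plan is to apply the Laplace (cofactor) expansion of the determinant along the $i$-th row and to exploit that this row is sparse. First I would record the structure of row $i$ of $I-B$: since $G$ has no self-loops we have $\beta_{ii}=0$, so the diagonal entry is $(I-B)_{ii}=1$; by (\ref{eq:BofG}) we have $\beta_{ij}=0$ unless $j\in\pa(i)$, so the remaining nonzero entries of row $i$ are exactly $(I-B)_{ip}=-\beta_{ip}$ for $p\in\pa(i)$, and all other entries of row $i$ vanish.

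Next I would expand along this row, with indices written according to an enumeration of $V$ matching the layout of $I-B$ (this is what makes the exponents below well-defined):
\[
\det(I-B) \;=\; \sum_{j\in V}(I-B)_{ij}\,(-1)^{i+j}\det\!\big((I-B)_{-i,-j}\big).
\]
All terms with $j\notin\pa(i)\cup\{i\}$ drop out. The $j=i$ term equals $1\cdot(-1)^{2i}\det((I-B)_{-i,-i})=\det((I-B)_{-i,-i})$, which I would call $c_{i,0}$. For $p\in\pa(i)$, the term equals $-\beta_{ip}(-1)^{i+p}\det((I-B)_{-i,-p}) = \beta_{ip}(-1)^{i+p-1}\det((I-B)_{-i,-p})$, using $(-1)^{i+p+1}=(-1)^{i+p-1}$; I would set $c_{i,p}=(-1)^{i+p-1}\det((I-B)_{-i,-p})$. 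Summing the surviving terms gives $\det(I-B)=c_{i,0}+\sum_{p\in\pa(i)}\beta_{ip}c_{i,p}=c_{i,0}+B_{i,\pa(i)}c_{i,\pa(i)}$.

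Finally, to justify the claim of linearity, I would observe that every minor appearing above, $\det((I-B)_{-i,-j})$, is the determinant of a submatrix from which row $i$ has been deleted, and hence does not involve any entry $\beta_{ip}$, $p\in\pa(i)$; therefore $c_{i,0}$ and each $c_{i,p}$ are genuinely constant with respect to $B_{i,\pa(i)}$, and the displayed expression is affine in these entries, as asserted. There is no substantive obstacle here — the only point requiring care is the sign bookkeeping, namely reconciling the cofactor sign $(-1)^{i+p}$ with the minus sign carried by the entry $-\beta_{ip}$ to arrive at $(-1)^{i+p-1}$, and being explicit that the enumeration of $V$ is fixed to match $I-B$ so that $(-1)^{i+p-1}$ is well-defined. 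As an alternative one could read off the same decomposition from Lemma~\ref{lem:det}, by splitting $\mathbf{S}_V(G)$ according to whether $\sigma$ fixes $i$ or maps $i$ to some $p\in\pa(i)$, but the single-row Laplace expansion is the cleanest route.
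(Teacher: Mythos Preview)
Your proof is correct and follows exactly the approach the paper indicates: the text preceding the lemma says that ``its coefficients are obtained in a Laplace expansion,'' and the paper gives no further argument. Your cofactor expansion along row $i$, together with the observation that row $i$ of $I-B$ has a $1$ on the diagonal and $-\beta_{ip}$ in positions $p\in\pa(i)$, is precisely what is intended, and your sign bookkeeping is right.
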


\begin{example} \rm
  The mixed graph $G$ from Figure \ref{fig:example} encodes the
  equation system
\[
\begin{array}{lcl} 
Y_1 = \epsilon_1, & \hspace{4em} & Y_2 = \beta_{21}Y_1+\beta_{24}Y_4+\epsilon_2, \\
Y_3 = \beta_{32}Y_2+\epsilon_3, & \hspace{4em} & Y_4 = \beta_{43}Y_3+\epsilon_4, \\
Y_5 = \beta_{54}Y_4+\epsilon_5, & \hspace{4em} & Y_6 = \beta_{65}Y_5+\epsilon_6, \\
\end{array}
\]
where $\epsilon_1$, $\epsilon_2$, $\epsilon_3$, $\epsilon_4$, and
$\epsilon_6$ are all pairwise uncorrelated, and $\epsilon_5$ is
uncorrelated with $\epsilon_1$, $\epsilon_4$, and $\epsilon_6$.  The system contains the directed cycle $2\to3 \to 4 \to2$. Consequently, 
\[\det(I-B) = 1 - \beta_{32}\beta_{43}\beta_{24}.
\]
Hence, the coefficients must satisfy
$\beta_{32}\beta_{43}\beta_{24}\neq 1$ for the equation system to
yield a positive definite covariance matrix.  When fixing node
$i\in V$ and writing $\det(I-B)$ as a linear function of
$(\beta_{ij})_{j\in\pa(i)}$ as in Lemma~\ref{thm:detForm}, we have
\[
\begin{array}{lll}
  c_{1,0} = 1-\beta_{32}\beta_{43}\beta_{24}, & \pa(1) = \emptyset; & \\
  c_{2,0} = 1, & \pa(2) = \{1,4\}, & c_{2,\pa(2)} = \left(0,-\beta_{32}\beta_{43}\right)^T; \\
  c_{3,0} = 1, & \pa(3) = \{2\}, & c_{3,\pa(3)} = -\beta_{43}\beta_{24}; \\
   c_{4,0} = 1, & \pa(4) = \{3\}, & c_{4,\pa(4)} = -\beta_{32}\beta_{24};  \\
   c_{5,0} = 1-\beta_{32}\beta_{43}\beta_{24}, & \pa(5) = \{4\}, & c_{5,\pa(5)} = 0;  \\
   c_{6,0} = 1-\beta_{32}\beta_{43}\beta_{24}, & \pa(6) = \{5\}, & c_{6,\pa(6)} = 0.  \\
\end{array}
\]
\end{example}


\subsection{Likelihood inference}
\label{sec:lik_inference}

Suppose we are given a sample of $N$ observations in $\mathbb{R}^V$.
Let $Y$ be the $\cardY \times N$ matrix with these observations as
columns, and let $S = \frac{1}{N}YY^T$ be the associated $V\times V$
sample covariance matrix (for known zero mean).  Fix a possibly cyclic
mixed graph $G$.  Ignoring an additive constant and dividing out a
factor of $N/2$, model $\mathbf{N}(G)$ has log-likelihood function
\begin{align} 
  \notag
  \ell_{G,Y}(\Omega,B) 
  &=
  -\log\det\left((I-B)^{-1}\Omega (I-B)^{-T}\right) -
  \text{tr}\left\{(I-B)^T\Omega^{-1}(I-B)S\right\}
  \\
\label{eq:log_likelihood} 
&=
  -\log\det(\Omega)-\log\det(I-B)^2-\text{tr}\left\{(I-B)^T\Omega^{-1}(I-B)S\right\}. 
\end{align}

Throughout the paper, we assume that $Y$ has full rank $|V|$.  This
holds with probability one if the sample is from a continuous
distribution and $N\geq |\cardY|$.  Full rank of $Y$ implies that $S$
is positive definite, and the log-likelihood function $\ell_{G,Y}$ is
then bounded for any graph $G$.  However, if $G$ is sparse with a
bi-directed part $(V,E_{\bi})$ that is not connected, then
$\ell_{G,Y}$ may also be bounded if $S$ is not positive definite
\citep{fox:2014}.

Our problem of interest is to compute (local) maxima of the
log-likelihood function.  These solve the likelihood equations, which
are obtained by equating to zero the gradient of
$\ell_{G,Y}(B,\Omega)$.  To be precise, the partial derivatives are
taken with respect to the free entries in $B$ and $\Omega$, which we
denote by $\beta$ and $\omega$, respectively.  So, $\beta$ has
$|E_{\to}|$ entries, and $\omega$ has $|V|+|E_{\bi}|$ entries.  Let
$\text{vec}(A)$ denote the vectorization (stacking of the columns) of
a matrix $A$.  Then there are 0/1-valued matrices $P$ and $Q$ such
that $\text{vec}(B) = P\beta$, and $\text{vec}(\Omega)=Q\omega$.

\begin{proposition}
The likelihood equations of the model $\mathbf{N}(G)$ can be written as
\begin{align} \label{eq:likelihood_eq1}
P^T \ \mathrm{vec}\left[\Omega^{-1}(I-B)S -(I-B)^T \right] &= 0,
\\
 \label{eq:likelihood_eq2}
Q^T \ \mathrm{vec}\left(\Omega^{-1}-\Omega^{-1}(I-B)S(I-B)^T\Omega^{-1}\right) &= 0.
\end{align}

\end{proposition}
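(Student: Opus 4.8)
The likelihood equations are, by definition, the system $\nabla_{\beta}\ell_{G,Y}=0$, $\nabla_{\omega}\ell_{G,Y}=0$, where the gradients are taken with respect to the free parameters. The plan is to compute these gradients in two stages. First, differentiate $\ell_{G,Y}$ with respect to \emph{all} entries of $B$ and all entries of $\Omega$, treating these momentarily as unconstrained, obtaining matrix-valued gradients $\nabla_{\mathrm{vec}(B)}\ell$ and $\nabla_{\mathrm{vec}(\Omega)}\ell$. Second, use the linearity of the maps $\mathrm{vec}(B)=P\beta$ and $\mathrm{vec}(\Omega)=Q\omega$ together with the chain rule to get $\nabla_{\beta}\ell=P^{T}\nabla_{\mathrm{vec}(B)}\ell$ and $\nabla_{\omega}\ell=Q^{T}\nabla_{\mathrm{vec}(\Omega)}\ell$, and equate these to zero. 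Throughout, the formal differentiation is valid on an open neighbourhood of any parameter point, since $I-B$ is invertible for $B\in\mathbf{B}(G)$ and $\Omega$ is positive definite, hence invertible.

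For the $B$-gradient, set $M=I-B$ and treat the three summands of \eqref{eq:log_likelihood} separately. The summand $-\log\det(\Omega)$ does not involve $B$. For the determinant summand $\log\det(I-B)^{2}=2\log\lvert\det M\rvert$, Jacobi's formula gives $d\log\lvert\det M\rvert=\text{tr}(M^{-1}\,dM)=-\text{tr}(M^{-1}\,dB)$, contributing a term proportional to $(I-B)^{-T}$ to the matrix gradient. For the trace summand $\text{tr}(M^{T}\Omega^{-1}MS)$, differentiate by the product rule; its two resulting pieces coincide after using the symmetry of $S$ and $\Omega$ and cyclicity of the trace, giving a contribution proportional to $\Omega^{-1}(I-B)S$. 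Adding the contributions, taking $\mathrm{vec}$, and left-multiplying by $P^{T}$ produces \eqref{eq:likelihood_eq1}.

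For the $\Omega$-gradient, split \eqref{eq:log_likelihood} again: $\log\det(I-B)^{2}$ does not involve $\Omega$; from $-\log\det(\Omega)$ one gets $-\text{tr}(\Omega^{-1}\,d\Omega)$, contributing $-\Omega^{-1}$; and for $\text{tr}(M^{T}\Omega^{-1}MS)$ the identity $d(\Omega^{-1})=-\Omega^{-1}(d\Omega)\Omega^{-1}$ together with cyclicity of the trace yields the contribution $\Omega^{-1}(I-B)S(I-B)^{T}\Omega^{-1}$. Summing, taking $\mathrm{vec}$, and left-multiplying by $Q^{T}$ gives \eqref{eq:likelihood_eq2}. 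One should note that $\Omega$ is symmetric, so the free parameter attached to a bi-directed edge $i\bi j$ occupies the two entries $\omega_{ij}$ and $\omega_{ji}$; but the unconstrained matrix gradient computed above is itself symmetric, so $Q^{T}$ merely adds two equal contributions and the zero set of the equations is unchanged up to a harmless factor of $2$.

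The one place that needs genuine care — and the only respect in which the cyclic case differs from the acyclic theory of \cite{drton:eichler:richardson:2009} — is the determinant summand $\log\det(I-B)^{2}$, which is a nonconstant polynomial in the entries of $B$ whenever $G$ contains a directed cycle (cf.\ Lemma~\ref{lem:det}); it must be handled via Jacobi's formula rather than discarded. The remaining work is routine matrix calculus, the main subtlety being careful bookkeeping of transposes so that $\Omega^{-1}(I-B)S$, and not its transpose, appears in \eqref{eq:likelihood_eq1}. An equivalent organization, should it be cleaner, is to differentiate the Gaussian log-likelihood first with respect to the covariance matrix $\Sigma$, obtaining the score $\Sigma^{-1}S\Sigma^{-1}-\Sigma^{-1}$, and then apply the chain rule through the map \eqref{eq:sigmaMap}, using $\Sigma^{-1}=(I-B)^{T}\Omega^{-1}(I-B)$ to simplify; this trades the use of Jacobi's formula for the computation of $d\Sigma$ in terms of $dB$ and $d\Omega$.
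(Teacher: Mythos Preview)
Your proposal is correct and follows essentially the same route as the paper's derivation in Appendix~\ref{chap:lik_eqns}: apply the chain rule through $\mathrm{vec}(B)=P\beta$ and $\mathrm{vec}(\Omega)=Q\omega$, compute the unconstrained matrix gradients term by term (Jacobi's formula for the determinant piece, expansion and product rule for the trace piece, the identity $d(\Omega^{-1})=-\Omega^{-1}(d\Omega)\Omega^{-1}$ for the $\Omega$ part), and then left-multiply by $P^{T}$ or $Q^{T}$. Your remarks on the symmetry of $\Omega$ and the alternative route via $\Sigma$ are extras the paper does not spell out, but the core argument is the same.
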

A derivation of this result is provided in Appendix
\ref{chap:lik_eqns}.  In general, the likelihood equations are
difficult to solve analytically; recall the example from
Figure~\ref{fig:mldegree}.  Instead, it is common practice to use
iterative maximization techniques. 

\section{Block-coordinate descent for cyclic mixed graphs}
\label{sec:cycles}

\subsection{Algorithm overview}

We now introduce our block-coordinate descent (BCD) procedure for
computing the MLE in a possibly cyclic mixed graph model
$\mathbf{N}(G)$.
The method requires initializing with a choice of $B\in\mathbf{B}(G)$
and $\Omega\in\mathbf{\Omega}(G)$.  The algorithm then proceeds by
repeatedly iterating through all nodes in $V$ and performing update
steps.  In the update for node $i$, we maximize the log-likelihood
function with respect to all parameters corresponding to edges with a
head at $i$ (i.e., $B_{i,\pa(i)}$ and $\Omega_{i,\sib(i)\cup\{i\}}$)
while holding all other structural parameters fixed.  The parameters
that are updated determine the $i$-th row in $B$ and the $i$-th row
and column in the symmetric matrix $\Omega$.  The algorithm stops when
a convergence criterion is satisfied.

In the derivation of the block update, we write $Y_C$ for the
$C\times N$ submatrix of $Y$, for subset $C\subset V$.  In particular,
$Y_{-i}= Y_{V\setminus \{i\}}$ and 
$Y_i$ is the $i$-th row of $Y$.
 Finally, we note that we will invoke assumptions to ensure
that the optimization problem yielding the block update admits a
unique solution.  The graphs $G$ for which these assumptions hold will
be characterized in Section \ref{sec:assumptionDiscussion}.

\subsection{Block update problem}
\label{sec:block-updates}

In the $i$-th block update problem, we seek to maximize the
log-likelihood function $\ell_{G,Y}$ while holding the submatrices
$\Omega_{-i,-i}$ and $B_{-i}$ fixed.  Let
\[
\omega_{ii.-i} =
\omega_{ii}-\Omega_{i,-i}\Omega_{-i,-i}^{-1}\Omega_{-i,i}
\]
 be the
conditional variance of the error term $\epsilon_i$ given
$\epsilon_{-i}$; here $\Omega_{-i,-i}^{-1}= (\Omega_{-i,-i})^{-1}$.
In analogy to Theorem 12 in \citet{drton:eichler:richardson:2009}, the
log-likelihood function can be decomposed as
\begin{align}
  \ell_{G,Y}(\Omega,B) 
  &=  
 -\log{\omega_{ii.-i}}-\frac{1}{N\omega_{ii.-i}}\|Y_i-B_{i,\pa(i)}Y_{\pa(i)}-\Omega_{i,\sib(i)}(\Omega_{-i,-i}^{-1}\epsilon_{-i})_{\sib(i)}\|^2 \notag  \\
 &\hspace{2em}  -\log\det(\Omega_{-i,-i})-\frac{1}{2N}\text{tr}(\Omega_{-i,-i}^{-1}\epsilon_{-i}\epsilon_{-i}^T) + \log\det(I-B)^2. \label{eq:decomp}
\end{align}
This follows by factoring the joint distribution of $\epsilon$
into the marginal distribution of $\epsilon_{-i}$ and the conditional
distribution of $\epsilon_i$ given $\epsilon_{-i}$.  The key difference
between~(\ref{eq:decomp}) and the corresponding log-likelihood
decomposition in \citet{drton:eichler:richardson:2009} is the presence
of the term $\log\det(I-B)^2$, which is nonzero for cyclic graphs.

With $\Omega_{-i,-i}$ and $B_{-i}$ fixed, we can first compute the
error terms
\begin{equation} \label{eq:fixed_residuals}
\epsilon_{-i} = (I-B)_{-i}Y
\end{equation}
and subsequently the \emph{pseudo-variables}
\begin{equation} \label{eq:pseudo_variables}
Z_{-i} = \Omega_{-i,-i}^{-1}\epsilon_{-i}.
\end{equation}
From (\ref{eq:decomp}), it is clear that, for fixed $\Omega_{-i,-i}$
and $B_{-i}$, the maximization of $\ell_{G,Y}$ reduces to the maximization
of the function
\begin{multline} \label{eq:max}
\ell_{G,Y,i}\left(\Omega_{i,\sib(i)},\omega_{ii.-i},B_{i,\pa(i)}\right) 
= -\log{\omega_{ii.-i}}-\frac{1}{N\omega_{ii.-i}}\|Y_i-B_{i,\pa(i)}Y_{\pa(i)}-\Omega_{i,\sib(i)}Z_{\sib(i)}\|^2\\
 \quad +\log[(c_{i,0} + B_{i,\pa(i)}c_{i,\pa(i)})^2].
\end{multline}
Here, we applied Lemma~\ref{thm:detForm}, and let
$B_{i,\pa(i)}=(\beta_{ij}:j\in\pa(i))$ and
$\Omega_{i,\sib(i)}=(\omega_{ik}:k\in \sib(i))$.  The domain of
definition of $\ell_{G,Y,i}$ is
$\mathbb{R}^{\sib(i)}\times(0,\infty)\times\mathbb{R}^{\pa(i)}_{\text{inv}}$,
where
\[
\mathbb{R}^{\pa(i)}_{\text{inv}} = \mathbb{R}^{\pa(i)}\setminus \{ B_{i,\pa(i)} :
c_{i,0}+B_{i,\pa(i)}c_{i,\pa(i)}=0\}
\]
 excludes choices of $B_{i,\pa(i)}$ for which $I-B$ is non-invertible.

For any fixed choice of $B_{i,\pa(i)}$ and $\Omega_{i,\sib(i)}$, if
$Y_i-B_{i,\pa(i)}Y_{\pa(i)}-\Omega_{i,\sib(i)}Z_{\sib(i)}
\neq 0$,   then
\begin{equation}
\omega^\star_{ii.-i} =
\frac{1}{N}\|Y_i-B_{i,\pa(i)}Y_{\pa(i)}-\Omega_{i,\sib(i)}Z_{\sib(i)}\|^2 
\label{eq:varestimate}
\end{equation}
uniquely maximizes $\ell_{G,Y,i}$ with respect to $\omega_{ii.-i}$.
This fact could be used to form a profile log-likelihood function.
Before proceeding, however, we shall address the concern that for a
mixed graph $G$ that contains cycles, it may occur that
$Y_i \in \linspan(Y_{\pa(i)}, Z_{\sib(i)})$ even if the rows of $Y$
are linearly independent.
A simple example would be the graph with nodes 1 and 2 and three edges
$1\to 2$, $1\leftarrow 2$ and $1\bi 2$; see
Example~\ref{ex:twonodes:threeedges} below.

\begin{lemma}
  \label{lem:wii.-i:profile}
  Let the data matrix $Y\in\mathbb{R}^{V\times N}$ have linearly independent
  rows.  Then
  $Y_i-B_{i,\pa(i)}Y_{\pa(i)}-\Omega_{i,\sib(i)}Z_{\sib(i)}\not=0$ for
  all $B\in\mathbf{B}(G)$, $\Omega\in\mathbf{\Omega}(G)$ and $i\in V$.
\end{lemma}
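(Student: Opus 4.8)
The plan is to argue by contradiction, exploiting the structural equations to show that equality $Y_i = B_{i,\pa(i)}Y_{\pa(i)} + \Omega_{i,\sib(i)}Z_{\sib(i)}$ would force a nontrivial linear dependence among the rows of $Y$. First I would unpack the definitions: by~(\ref{eq:fixed_residuals}) we have $\epsilon_{-i} = (I-B)_{-i}Y$, and by~(\ref{eq:pseudo_variables}) the pseudo-variables are $Z_{-i} = \Omega_{-i,-i}^{-1}\epsilon_{-i} = \Omega_{-i,-i}^{-1}(I-B)_{-i}Y$. So the vector $\Omega_{i,\sib(i)}Z_{\sib(i)}$ is itself a (fixed) linear combination of the rows of $Y$, namely the row vector $\Omega_{i,-i}\Omega_{-i,-i}^{-1}(I-B)_{-i}$ applied to $Y$ (using that $\omega_{ik}=0$ for $k\notin\sib(i)\cup\{i\}$, so $\Omega_{i,\sib(i)}Z_{\sib(i)} = \Omega_{i,-i}Z_{-i}$). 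Likewise $B_{i,\pa(i)}Y_{\pa(i)} = B_{i,-i}Y$ since $\beta_{ij}=0$ for $j\notin\pa(i)$. Hence the putative identity reads
\[
e_i^T Y \;=\; B_{i,-i}Y \;+\; \Omega_{i,-i}\Omega_{-i,-i}^{-1}(I-B)_{-i}Y,
\]
where $e_i$ is the $i$-th standard basis vector.

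The next step is to show the row covector on the left cannot equal the one on the right. Rewrite the right-hand covector as $B_{i,-i} + \Omega_{i,-i}\Omega_{-i,-i}^{-1}(I-B)_{-i}$, viewed as an element of $\mathbb{R}^V$ (with a zero in coordinate $i$ coming from $B_{i,-i}$, but a possibly nonzero contribution in coordinate $i$ from the second term, since $(I-B)_{-i}$ has its $i$-th column equal to $-B_{-i,i}$, which need not vanish). The key observation is to examine the coefficient of $Y_i$ on each side: on the left it is $1$, while on the right it is $\big[\Omega_{i,-i}\Omega_{-i,-i}^{-1}(I-B)_{-i}\big]_i = -\Omega_{i,-i}\Omega_{-i,-i}^{-1}B_{-i,i}$. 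So if the identity held with the rows of $Y$ linearly independent, we would need $-\Omega_{i,-i}\Omega_{-i,-i}^{-1}B_{-i,i} = 1$, i.e. $\Omega_{i,-i}\Omega_{-i,-i}^{-1}B_{-i,i} = -1$, together with matching of all the other coefficients. I would then derive a contradiction from positive definiteness of $\Omega$: the quantity $\omega_{ii.-i} = \omega_{ii} - \Omega_{i,-i}\Omega_{-i,-i}^{-1}\Omega_{-i,i}$ is strictly positive (Schur complement of a positive definite matrix), and one should be able to combine the equation $\Omega_{i,-i}\Omega_{-i,-i}^{-1}B_{-i,i} = -1$ with the remaining coordinate equations to produce a vector $v$ with $v^T\Omega v \le 0$ or a genuine linear dependence among rows of $Y$.

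The cleanest route, which I would actually pursue, is to combine the two structural identities at once. From $Y = BY + \epsilon$ we get $\epsilon = (I-B)Y$, so in particular $\epsilon_i = Y_i - B_{i,-i}Y = Y_i - B_{i,\pa(i)}Y_{\pa(i)}$. Therefore the assumed equality is equivalent to $\epsilon_i = \Omega_{i,-i}Z_{-i} = \Omega_{i,-i}\Omega_{-i,-i}^{-1}\epsilon_{-i}$, i.e. $\epsilon_i = \Omega_{i,-i}\Omega_{-i,-i}^{-1}\epsilon_{-i}$ as vectors in $\mathbb{R}^N$. Now stack: this says $(I-B)Y$ has its $i$-th row equal to a linear combination of its other rows with coefficients $\Omega_{i,-i}\Omega_{-i,-i}^{-1}$; equivalently $w^T(I-B)Y = 0$ where $w = e_i - $ (the covector $\Omega_{i,-i}\Omega_{-i,-i}^{-1}$ padded with a zero in slot $i$), which is nonzero since its $i$-th entry is $1$. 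Since $I-B$ is invertible, $w^T(I-B)\ne 0$, so $w^T(I-B)Y = 0$ exhibits a nonzero covector annihilating $Y$ — contradicting the assumed linear independence of the rows of $Y$. The statement then follows for every $i$, $B\in\mathbf{B}(G)$, $\Omega\in\mathbf{\Omega}(G)$.

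The main obstacle is purely bookkeeping: being careful that $\Omega_{i,\sib(i)}Z_{\sib(i)} = \Omega_{i,-i}Z_{-i}$ (the sibling restriction kills exactly the missing terms), and that replacing $B_{i,\pa(i)}Y_{\pa(i)}$ by $B_{i,-i}Y$ and then by $Y_i - \epsilon_i$ is legitimate via~(\ref{eq:matrix_form_Y}). Once the identity is rephrased as $\epsilon_i = \Omega_{i,-i}\Omega_{-i,-i}^{-1}\epsilon_{-i}$ and one recalls $\epsilon_{\bullet} = (I-B)Y$ with $I-B$ invertible, the contradiction with linear independence of the rows of $Y$ is immediate; positive definiteness of $\Omega$ is not even needed for this argument, only invertibility of $\Omega_{-i,-i}$ (guaranteed since $\Omega \succ 0$) so that the conditional mean coefficients are defined.
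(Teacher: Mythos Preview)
Your proposal is correct and, in its ``cleanest route'' paragraph, is essentially identical to the paper's own proof: rewrite the expression as $\epsilon_i - \Omega_{i,-i}\Omega_{-i,-i}^{-1}\epsilon_{-i}$, note that its vanishing would make the rows of $\epsilon=(I-B)Y$ linearly dependent, and contradict this via invertibility of $I-B$ and full row rank of $Y$. The earlier detour through the Schur complement is unnecessary, as you yourself observe at the end.
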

\begin{proof}
  From (\ref{eq:pseudo_variables}), 
  \[
    Y_i-B_{i,\pa(i)}Y_{\pa(i)}-\Omega_{i,\sib(i)}Z_{\sib(i)} =
  \epsilon_i - \omega_{i,-i}\Omega_{-i,-i}^{-1}\epsilon_{-i}=0
  \]
  only if $\epsilon=(I-B)Y\in\mathbb{R}^{V\times N}$ has linearly
  dependent rows.  However, this cannot occur when $Y$ has linearly
  independent rows as matrices $B\in\mathbf{B}(G)$ have $I-B$
  invertible.
\end{proof}

According to Lemma~\ref{lem:wii.-i:profile}, we may indeed
substitute $\omega^\star_{ii.-i}$ from~(\ref{eq:varestimate}) into
$\ell_{G,Y,i}$ and maximize the resulting profile log-likelihood
function
\begin{equation}
(\Omega_{i,\sib(i)},B_{i,\pa(i)}) \;\mapsto\; \log(N)-1 -\log\left( \frac{\|Y_i-B_{i,\pa(i)}Y_{\pa(i)}-\Omega_{i,\sib(i)}Z_{\sib(i)}\|^2}{(c_{i,0} + B_{i,\pa(i)}c_{i,\pa(i)})^2} \right).
\label{eq:max-profile}
\end{equation}
By monotonicity of the logarithm, maximizing~(\ref{eq:max-profile})
with respect to 
$(\Omega_{i,\sib(i)},B_{i,\pa(i)})\in
\mathbb{R}^{\sib(i)}\times\mathbb{R}^{\pa(i)}_{\text{inv}}$ is equivalent to
minimizing
\begin{equation} 
g_i(\Omega_{i,\sib(i)},B_{i,\pa(i)}) = \frac{\|Y_i-B_{i,\pa(i)}Y_{\pa(i)}-\Omega_{i,\sib(i)}Z_{\sib(i)}\|^2}{(c_{i,0} + B_{i,\pa(i)}c_{i,\pa(i)})^2}. \label{eq:min}
\end{equation}

If $c_{i,\pa(i)} = 0$, which occurs when $i$ does not lie on any
directed cycle, then the denominator in (\ref{eq:min}) is constant and
the problem amounts to finding least squares estimates for
$\Omega_{i,\sib(i)}$ and $B_{i,\pa(i)}$.  In other words, we solve a
linear regression problem with response $Y_i$ and covariates $Z_k$,
$k\in\sib(i)$ and $Y_j$, $j\in\pa(i)$.  This is the setting considered
by \cite{drton:eichler:richardson:2009}.  

In the more difficult case where $c_{i,\pa(i)} \neq 0$, minimizing the
function $g_i$ from~(\ref{eq:min}) amounts to minimizing a ratio of
two univariate quadratic functions.  The numerator is a least squares
objective for a linear regression problem with design matrix
$(Z_{\sib(i)}^T,Y_{\pa(i)}^T)\in\mathbb{R}^{N\times
  (|\sib(i)|+|\pa(i)|)}$.  The denominator is the square of an affine
function whose slope vector satisfies the following property proven in
Appendix~\ref{sec:proof-prop:block-update-special}.

\begin{lemma}
  \label{lem:block-update-special}
  The vector $\begin{pmatrix} 0\\ c_{i,\pa(i)} \end{pmatrix}$ is orthogonal to the kernel of
$\begin{pmatrix} Z_{\sib(i)} \\ Y_{\pa(i)} 
       \end{pmatrix}^T$.
\end{lemma}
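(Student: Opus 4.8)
The plan is to translate the statement into a concrete claim about the row $(Y_i, Y_{\pa(i)}, Z_{\sib(i)})$ and to exploit the factorization $\epsilon = (I-B)Y$ used in the proof of Lemma~\ref{lem:wii.-i:profile}. First I would recall that, with $\Omega_{-i,-i}$ and $B_{-i}$ fixed, we have $\epsilon_{-i} = (I-B)_{-i}Y$ and $Z_{-i} = \Omega_{-i,-i}^{-1}\epsilon_{-i}$, so that the pseudo-variables $Z_{\sib(i)}$ are fixed linear combinations of the rows of $Y$. Thus a vector $v=(a^T,b^T)^T\in\mathbb{R}^{\sib(i)}\times\mathbb{R}^{\pa(i)}$ lies in the kernel of $\begin{pmatrix} Z_{\sib(i)}\\ Y_{\pa(i)}\end{pmatrix}^T$ precisely when $a^TZ_{\sib(i)}+b^TY_{\pa(i)}=0$ as a row vector in $\mathbb{R}^N$. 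Since the rows of $Y$ are assumed linearly independent, and $Z_{\sib(i)}$, $Y_{\pa(i)}$ are obtained from rows of $Y$ indexed by $-i$, such an identity forces the corresponding combination of rows of $Y_{-i}$ to vanish; but those rows are linearly independent, so the combination is trivial. In other words, the kernel of $\begin{pmatrix} Z_{\sib(i)}\\ Y_{\pa(i)}\end{pmatrix}^T$ consists exactly of the vectors $v$ whose image under the linear map $v\mapsto$ (coefficient vector on the rows $Y_{-i}$) is zero.

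Next I would make that linear map explicit. Writing $\beta := B_{i,-i}$ for the (mostly zero) $i$-th row of $B$ restricted to $-i$, the row $\epsilon_i$ equals $Y_i - \beta Y_{-i}$, and for $k\in\sib(i)$ the pseudo-variable $Z_k$ equals $e_k^T\Omega_{-i,-i}^{-1}(I-B)_{-i}Y$, i.e.\ a fixed row vector times $Y_{-i}$ (plus a multiple of $Y_i$ only if $i$ itself appeared, which it does not, since $Z$ is indexed by $-i$). Collecting these, there is a fixed matrix $M\in\mathbb{R}^{(|\sib(i)|+|\pa(i)|)\times(|V|-1)}$ with $\begin{pmatrix} Z_{\sib(i)}\\ Y_{\pa(i)}\end{pmatrix} = M\, Y_{-i}$, and by the previous paragraph $\ker\begin{pmatrix} Z_{\sib(i)}\\ Y_{\pa(i)}\end{pmatrix}^T = \ker M^T$. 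So the claim reduces to the purely combinatorial statement that $\begin{pmatrix} 0\\ c_{i,\pa(i)}\end{pmatrix}$ lies in the row space of $M$, equivalently in $(\ker M^T)^\perp$.

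The heart of the argument, and the step I expect to be the main obstacle, is identifying $\begin{pmatrix} 0\\ c_{i,\pa(i)}\end{pmatrix}$ as an explicit element of the row space of $M$. Here I would use the structure of $c_{i,\pa(i)}$ from Lemma~\ref{thm:detForm}: its $p$-th entry is $(-1)^{i+p-1}\det((I-B)_{-i,-p})$, a signed cofactor of $I-B$. By Cramer's rule / cofactor expansion, the vector $c_{i,\pa(i)}$ (extended by zeros to an index set $-i$, which is legitimate because $\beta_{ip}=0$ for $p\notin\pa(i)$, so the determinant $\det((I-B)_{-i,-p})$ need not concern us outside $\pa(i)$ — or rather, one checks the off-$\pa(i)$ entries are annihilated appropriately) is proportional to a particular row of the adjugate of $(I-B)_{-i,-i}$. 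Concretely I expect that $c_{i,-i}^T$ equals (up to the sign convention) the $i$-indexed entries of $\mathrm{adj}((I-B)_{-i,-i})$ times the $i$-th column block of $(I-B)$, so that $c_{i,-i}$ is a linear combination of rows of $(I-B)_{-i,-i}$; feeding this through the definition $Y_{-i} = $ (expression), one sees the vector $\begin{pmatrix}0\\ c_{i,\pa(i)}\end{pmatrix}$ paired against any $v\in\ker M^T$ produces $\det((I-B)_{-i,-i})$ times a combination of rows of $Y_{-i}$ that is already known to vanish. That realizes orthogonality. I would keep the bookkeeping of signs and the extension-by-zeros to $\pa(i)$ versus all of $-i$ to the appendix, since it is routine but fiddly; the conceptual content is that $c_{i,\pa(i)}$ is built from cofactors of $(I-B)_{-i,-i}$, hence lies in the row space of that matrix, hence in the row space of $M$.
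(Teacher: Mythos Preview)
Your strategy---factor the design matrix through $Y$, reduce to showing $c$ lies in the column span of the coefficient matrix $M$, and exploit the cofactor structure of $c_{i,\pa(i)}$---is the paper's, but two steps do not go through as written. First, $Z_{\sib(i)}$ \emph{does} depend on $Y_i$ in general: $(I-B)_{-i}$ has its $i$-th \emph{column} intact, with entries $-\beta_{ji}$ that are nonzero whenever $i$ has children; your justification ``since $Z$ is indexed by $-i$'' conflates row and column indices. The fix is easy: factor through all of $Y$, so $M\in\mathbb{R}^{m\times |V|}$; since $Y$ has full row rank, $\ker X=\ker M^T$ still holds and the reduction survives.

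Second, and more substantively, your cofactor argument is aimed at the wrong matrix. You need $w\in\mathbb{R}^V$ with $Mw=c$, i.e., $\Delta_{\sib(i),V}\,w=0$ and $w_{\pa(i)}=c_{i,\pa(i)}$. Invoking $\mathrm{adj}((I-B)_{-i,-i})$ or ``the row space of $(I-B)_{-i,-i}$'' does not help: that row space is all of $\mathbb{R}^{-i}$ whenever $c_{i,0}\neq 0$, so the claim is vacuous, and it says nothing about the $\sib(i)$ block of $M$. The correct $w$ is the row-$i$ cofactor vector of the \emph{full} matrix $I-B$, equivalently $w=-\det(I-B)\,(I-B)^{-1}e_i$. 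Then $(I-B)_{-i}w=-\det(I-B)\,(e_i)_{-i}=0$, so $\Delta_{\sib(i),V}\,w=0$; and Cramer's rule gives $w_p=(-1)^{i+p-1}\det((I-B)_{-i,-p})=c_{i,p}$ for every $p$, in particular for $p\in\pa(i)$. The paper builds this same $w$ combinatorially via sums over directed cycles containing $i$; the adjugate route you gesture at is in fact cleaner once pointed at $I-B$ rather than its principal submatrix.
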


\subsection{Minimizing a ratio of quadratic functions}
\label{sec:solving-block-update}

When $c_{i,\pa(i)} \neq 0$, the minimization of $g_i$
from~(\ref{eq:min}) is an instance of the general problem
\begin{align}
  \label{eq:block-update-general}
  \min_{\alpha\in\mathbb{R}^m} \frac{\|y-X\alpha\|^2}{(c_0+c^T \alpha)^2}
\end{align}
that is specified by a vector $y\in\mathbb{R}^N$ with $N\ge m$, a
matrix $X\in\mathbb{R}^{N\times m}$, a nonzero vector
$c\in\mathbb{R}^m\setminus\{0\}$ and a scalar $c_0\in\mathbb{R}$.  For
a correspondence to~(\ref{eq:min}), take as argument the vector
$\alpha =(\Omega_{i,\sib(i)},B_{i,\pa(i)})^T$, which is of length
$m=|\sib(i)|+|\pa(i)|$, and set
\begin{align}
  \label{eq:yXac}
    y &=Y_i^T,
  &
    X &=\begin{pmatrix} Z_{\sib(i)} \\ Y_{\pa(i)} 
       \end{pmatrix}^T, 
  &
    c &=\begin{pmatrix} 0\\ c_{i,\pa(i)} \end{pmatrix},
  &
    c_0&=c_{i,0}.
\end{align}
We now show that~(\ref{eq:block-update-general}) admits a closed-form
solution.  In doing so, we focus attention on problems in which the
matrix $X$ has full column rank.  Unless stated otherwise, we do not
require that $c$ be orthogonal to the kernel of $X$.  Rank deficient
cases are discussed in Remark~\ref{rem:not-unique} at the end of this
section.

\begin{theorem}
  \label{thm:main-BCD}
  Suppose the matrix $X$ has full rank $m\le N$.  Let
  $\hat\alpha=(X^TX)^{-1}X^Ty$ be the minimizer
  $\alpha\mapsto\|y-X\alpha\|^2$, and let
  $y_0^2=\|y-X\hat\alpha\|^2$. 
  \begin{enumerate}[label=(\roman{*}), ref=(\roman{*}),leftmargin=5.0em]
  \item If $c_0+c^T\hat\alpha\not=0$,
    then~(\ref{eq:block-update-general}) is uniquely solved by 
    \[
      \alpha^\star \;=\; \hat\alpha + \frac{y_0^2}{c_0+c^T\hat\alpha} (X^TX)^{-1}c.
    \]
  \item If $c_0+c^T\hat\alpha=0$ and $y_0^2=0$,
    then~(\ref{eq:block-update-general}) admits a solution, but not
    uniquely so.  The solution set is
    $\{\hat\alpha+\lambda (X^TX)^{-1}c:\lambda\in\mathbb{R}\setminus\{0\}\}$.
  \item If $c_0+c^T\hat\alpha=0$ and $y_0^2>0$, the minimum
    in~(\ref{eq:block-update-general}) is not achieved.  
  \end{enumerate}
\end{theorem}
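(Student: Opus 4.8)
The plan is to reduce the problem to a one-dimensional optimization by exploiting the structure of the least-squares fit. First I would write every $\alpha$ in the form $\alpha = \hat\alpha + \delta$ with $\delta\in\mathbb{R}^m$. By the normal equations $X^T(y-X\hat\alpha)=0$, the numerator decomposes orthogonally as $\|y-X\alpha\|^2 = \|y-X\hat\alpha\|^2 + \|X\delta\|^2 = y_0^2 + \delta^TX^TX\delta$. The denominator becomes $(c_0+c^T\hat\alpha+c^T\delta)^2$; writing $t_0 := c_0+c^T\hat\alpha$ and $s := c^T\delta$, the objective is $\bigl(y_0^2 + \delta^TX^TX\delta\bigr)/(t_0+s)^2$. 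Since $X^TX$ is positive definite, for a fixed value of $s=c^T\delta$ the numerator is minimized over the affine slice $\{\delta : c^T\delta = s\}$ at the point $\delta = s\,(X^TX)^{-1}c/\bigl(c^T(X^TX)^{-1}c\bigr)$, giving minimal numerator $y_0^2 + s^2/\kappa$ where $\kappa := c^T(X^TX)^{-1}c > 0$. So the whole problem collapses to
\[
\min_{s\in\mathbb{R},\ t_0+s\neq 0}\ \frac{y_0^2 + s^2/\kappa}{(t_0+s)^2}.
\]

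Next I would analyze this scalar problem in the three cases. In case (i), $t_0\neq 0$. Substituting $u := t_0 + s$ (so $s = u - t_0$, $u\neq 0$) turns the objective into $\bigl(y_0^2 + (u-t_0)^2/\kappa\bigr)/u^2 = 1/\kappa + (y_0^2 + t_0^2/\kappa)/u^2 - 2t_0/(\kappa u)$, a rational function of $1/u$; setting its derivative to zero (or completing the square in the variable $v=1/u$) gives the unique stationary point, which one checks is the global minimum because the function tends to $1/\kappa$ as $u\to\pm\infty$ and to $+\infty$ as $u\to 0$, and the leading quadratic coefficient $(y_0^2+t_0^2/\kappa)$ is strictly positive. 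Reading off the optimal $s$ and then the optimal $\delta$ yields $\alpha^\star = \hat\alpha + \lambda(X^TX)^{-1}c$ with the scalar $\lambda$ that must be $y_0^2/(t_0\,\kappa)\cdot\kappa = y_0^2/t_0$ after substituting through — I would carry out this bookkeeping carefully to land exactly on $\lambda = y_0^2/(c_0+c^T\hat\alpha)$. Note uniqueness of $s$ forces $c^T\delta$ to be determined, and then positive definiteness of $X^TX$ forces $\delta$ itself to be the unique minimal-norm vector with that inner product, so $\alpha^\star$ is unique.

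In case (ii), $t_0 = 0$ and $y_0^2 = 0$: the scalar objective is $(s^2/\kappa)/s^2 = 1/\kappa$ for every $s\neq 0$ (and the point $s=0$ is excluded since $t_0+s=0$ there), so the minimum value $1/\kappa$ is attained precisely on the slices $c^T\delta = s$ for $s\neq 0$; within each such slice the numerator $\delta^TX^TX\delta$ is no longer relevant to the ratio once we've fixed $s$... actually here one must be slightly careful: with $y_0^2=0$ the objective is $\delta^TX^TX\delta/(s)^2$ and for fixed $s\neq 0$ this is minimized only at $\delta = s(X^TX)^{-1}c/\kappa$, giving value $1/\kappa$; any $\delta$ not of this form with the same $s$ gives a strictly larger value. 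Hence the solution set is exactly $\{\hat\alpha + \lambda(X^TX)^{-1}c : \lambda\in\mathbb{R}\setminus\{0\}\}$, matching the claim. In case (iii), $t_0=0$, $y_0^2>0$: the scalar objective is $(y_0^2 + s^2/\kappa)/s^2 = y_0^2/s^2 + 1/\kappa$, whose infimum over $s\neq 0$ is $1/\kappa$ but is never attained, and no other choice of $\delta$ can do better, so the minimum is not achieved.

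The main obstacle I anticipate is purely the algebraic bookkeeping in case (i): tracking the substitutions $\delta\mapsto s\mapsto u\mapsto v$ and back so that the stationary point emerges cleanly as $\hat\alpha + \tfrac{y_0^2}{c_0+c^T\hat\alpha}(X^TX)^{-1}c$ rather than some equivalent-but-messier expression, and verifying that this stationary point is indeed the global minimum (handling the degenerate subcase $y_0^2=0$, $t_0\neq 0$, where the minimizer is just $\hat\alpha$ itself and the formula correctly gives $\lambda=0$). The conceptual content — reducing to one dimension via the orthogonal split of the least-squares residual and the constrained minimization of a positive-definite quadratic along level sets of a linear functional — is straightforward; everything else is careful calculus on a rational function of one variable.
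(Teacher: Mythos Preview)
Your proposal is correct and reaches the same destination as the paper, but by a genuinely different route. The paper proceeds \emph{algorithmically}: it first rotates by an orthogonal $Q_1$ so that $Q_1c=(0,\dots,0,\|c\|)^T$, then takes a QR decomposition $XQ_1^T=Q_2^TR$, which reduces the objective to the univariate ratio $\bigl[((Q_2y)_m-\alpha''_m)^2+y_0^2\bigr]/(c_0+\|c\|r^{-1}\alpha''_m)^2$; this scalar problem is handled by a separate lemma (Lemma~\ref{lem:optimum}), and the answer is then pulled back through $Q_1^TR_1^{-1}$ and translated into the rational formula via identities like $Q_1^TR_1^{-1}R_1^{-T}Q_1=(X^TX)^{-1}$. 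Your approach instead exploits the least-squares split $\|y-X\alpha\|^2=y_0^2+\delta^TX^TX\delta$ directly and then, for each value of $s=c^T\delta$, minimizes the positive-definite quadratic on the affine slice $\{c^T\delta=s\}$ to collapse to the one-variable problem $(y_0^2+s^2/\kappa)/(t_0+s)^2$.

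Both reductions are sound and end in equivalent scalar problems. Your path is shorter and more transparent for the purpose of \emph{proving} the theorem: no auxiliary orthogonal matrices, and the structure $(X^TX)^{-1}c$ emerges immediately from the Lagrangian rather than having to be recognized after the fact. The paper's path, on the other hand, is designed to double as a numerically stable computation (it becomes Algorithm~\ref{alg:BCD}), so the QR machinery is not incidental---it is what one would actually implement. Your bookkeeping in case~(i) checks out: the stationary point is $s=\kappa y_0^2/t_0$, hence $\delta=(y_0^2/t_0)(X^TX)^{-1}c$, and your uniqueness argument (strict convexity of the quadratic on each slice plus uniqueness of the optimal $s$) is complete. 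Case~(ii) and~(iii) are handled correctly, including your self-correction that for $t_0=0$, $y_0^2=0$ one must still minimize within each slice to identify the solution set as the punctured line.
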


\begin{remark}
  \label{rem:comp-effort}
  The computational complexity of
  solving~(\ref{eq:block-update-general}) is on the same order as that
  of solving the least squares problem with objective
  $\|y-X\alpha\|^2$.
\end{remark}

\begin{proof}[Proof of Theorem~\ref{thm:main-BCD}]
  We give a numerically stable algorithm for
  solving~(\ref{eq:block-update-general}), and then translate the
  solution into a rational function of the input $(y,X,c_0,c)$.

  \emph{(a) Algorithm.}  Find an orthogonal $m\times m$
  matrix $Q_1$ such that $Q_1 c=(0,\dots,0,\|c\|)^T$; note that in our
  context the support of $c$ is confined to the
  coordinates indexed by ${\pa(i)}$.  Reparametrizing to
  $\alpha'=Q_1\alpha$, (\ref{eq:block-update-general}) becomes
  \begin{align}
    \label{eq:block-update-general-Q}
    \min_{\alpha'\in\mathbb{R}^m}
    \frac{\|y-XQ_1^T\alpha'\|^2}{(c_0+\|c\|\,\alpha_m')^2}
  \end{align}
  with $\alpha_m'$ being the last coordinate of
  $\alpha'=(\alpha_1',\dots,\alpha_m')$.  Next, compute a QR
  decomposition $XQ_1^T= Q_2^TR$, where $Q_2$ is an orthogonal
  $N\times N$ matrix, and $R$ is an upper triangular $N\times m$
  matrix.  Observe that $R=\left(
    \begin{smallmatrix}
      R_1 \\ 0
    \end{smallmatrix}\right)$ with $R_1\in\mathbb{R}^{m\times m}$
  upper triangular.  Since orthogonal transformations
  leave Euclidean norms invariant,
  \begin{equation}
    \frac{\|y-XQ_1^T\alpha'\|^2}{(c_0+\|c\|\,\alpha_m')^2}
    \;=\;
    \frac{\|Q_2y-R \alpha'\|^2}{(c_0+\|c\|\,\alpha_m')^2}
    \;=\;
    \frac{\sum_{j=1}^m \left[(Q_2y)_j-(R_1 \alpha')_j\right]^2 +
      y_0^2}{(c_0+\|c\|\,\alpha_m')^2},
    \label{eq:minQtil} 
  \end{equation}
  where $y_0^2= \sum_{j=m+1}^N (Q_2y)_j^2$ is the squared length of the
  projection of $y$ on the orthogonal complement of the span of $X$.  
  Finally, we reparametrize to $\alpha''=R_1\alpha'$ and obtain the
  problem
  \begin{align}
    \label{eq:block-update-general-split}
    \min_{\alpha''\in\mathbb{R}^m}
    \frac{\sum_{j=1}^m \left[(Q_2y)_j-\alpha_j''\right]^2 +
    y_0^2}{(c_0+\|c\|r^{-1}\,\alpha_m'')^2}
  \end{align}
  with $r=R_{mm}$ being the $(m,m)$ entry in $R$ (and $R_1$).  We have
  $r\not=0$ as $X$ and thus $XQ_1^T$ and also $R$ have full column
  rank.  This also entails that $R_1$ is invertible.
  
  For
  $\alpha''$ to be a solution
  of~(\ref{eq:block-update-general-split}), it clearly must hold that
  \begin{equation}
    \alpha_j'' = (Q_2y)_j \quad\text{for}\quad j=1,\dots, m-1,
    \label{eq:soln1}
  \end{equation}
  and~(\ref{eq:block-update-general-split}) is solved by finding the
  coordinate $\alpha_m''$ by minimizing the univariate function
  \begin{equation} \label{eq:univariate_eqn}
    g(\alpha_m'') = \frac{\left((Q_2y)_m -
        \alpha_m''\right)^2+y_0^2}{(c_0 +
      \|c\|r^{-1}\,\alpha_m'')^2},  \qquad \alpha_m''\in\mathbb{R}.
  \end{equation}
  By Lemma \ref{lem:optimum} below and assuming that
  $c_0+\|c\|r^{-1}(Q_2y)_m\not=0$, the univariate function $g$ from
  \eqref{eq:univariate_eqn} attains its minimum at
  \begin{equation}
    \alpha_m'' = (Q_2y)_m + \frac{\|c\|y_0^2}{rc_0+\|c\|(Q_2y)_m}.
    \label{eq:soln2}
  \end{equation}
  If $c_0+\|c\|r^{-1}(Q_2y)_m=0$ and $y_0^2=0$, then $g$ is constant
  and any feasible $\alpha_m''\not= (Q_2y)_m$ is optimal. 
  If $c_0+\|c\|r^{-1}(Q_2y)_m=0$ and $y_0^2>0$, then $g$ does not
  achieve its minimum.

  In order to solve the problem posed at the beginning of this
  subsection, i.e., the problem from~(\ref{eq:block-update-general}), we
  convert the optimum $\alpha''$ from~(\ref{eq:soln1})
  and~(\ref{eq:soln2}) to 
  \begin{equation}
    \label{eq:alpha-opt-conversion}
    \alpha=Q_1^TR_1^{-1}\alpha''.
  \end{equation}

  \emph{(b) Rational formulas.}  Inspecting~(\ref{eq:minQtil}), we
  observe that $R_1^{-1}(Q_2y)_{\{1,\dots,m\}}$ is the coefficient
  vector that solves the least squares problem in which $y$ is
  regressed on $XQ_1^T$.  Therefore,
  \begin{equation}
    \label{eq:alpha-LS}
    Q_1^TR_1^{-1}(Q_2y)_{\{1,\dots,m\}}=(X^TX)^{-1}X^Ty=:\hat\alpha
  \end{equation}
  is the least squares coefficient vector for the regression of $y$ on
  $X$.  Because $R_1$ is rectangular, it follows that $r^{-1}(Q_2y)_m$
  is the $m$-th entry of the vector $R_1^{-1}(Q_2y)_{\{1,\dots,m\}}$.
  With $Q_1 c=(0,\dots,0,\|c\|)^T$, we deduce that
  \begin{equation}
    \label{eq:c+ad:rational}
    \|c\| r^{-1}(Q_2y)_m = \langle Q_1c,
    R_1^{-1}(Q_2y)_{\{1,\dots,m\}}\rangle = \langle c,
    Q_1^TR^{-1}(Q_2y)_{\{1,\dots,m\}}\rangle=\langle
    c,\hat\alpha\rangle.
  \end{equation}
  Let $e_m=(0,\dots,0,1)^T$ be the $m$-th canonical basis vector.
  Using that $R_1^{-T}$ has its last column equal to $r^{-1}e_m$, we
  find that
\begin{multline}
  \label{eq:offset}
  Q_1^TR_1^{-1}e_m \|c\|r^{-1}
  =Q_1^TR_1^{-1}R_1^{-T} Q_1c 
  =\left(Q_1^TR^TR Q_1\right)^{-1}c
  \\
  =\left(Q_1^TR^TQ_2Q_2^TR Q_1\right)^{-1}c
  =\left(X^TX\right)^{-1}c.
\end{multline}

In case (i), we obtain from~(\ref{eq:soln1}),~(\ref{eq:soln2})
and~(\ref{eq:alpha-opt-conversion}) that the unique minimum is
\[
  Q_1^TR_1^{-1}(Q_2y)_{\{1,\dots,m\}} + \frac{\|c\|r^{-1} y_0^2}{c_0+
    \|c\| r^{-1}(Q_2y)_m}Q_1^TR_1^{-1}e_m .
\]
Applying~(\ref{eq:alpha-LS})-(\ref{eq:offset}), we readily find the
rational formula asserted in the theorem.  Cases (ii) and (iii) are
similar. 
\end{proof}

The above proof relied on the following lemma about a ratio of
univariate quadratics.  The lemma is derived in
Appendix~\ref{app:ratio-lemma}.
  
  \begin{lemma} \label{lem:optimum} For constants $a,b,c_0,c_1\in\mathbb{R}$
    with $c_1\neq0$, define the function
    \[
      f(x) = \frac{(a-x)^2+b^2}{(c_0+c_1x)^2}, \qquad
      x\in\mathbb{R}\setminus\left\{-c_0/c_1\right\}. 
    \]
    \begin{enumerate}[label=(\roman{*}), ref=(\roman{*}),leftmargin=5.0em]
    \item If $c_0+ac_1 \neq 0$, then $f$ is uniquely minimized by
      \[
        x = \frac{ac_0+a^2c_1+b^2c_1}{c_0+ac_1} = a +
        \frac{b^2c_1}{c_0+ac_1}.
      \]
    \item   If $c_0+ac_1=0$ and
      $b=0$, then $f$ is constant and equal to $1/c_1^2$.
    \item 
      If $c_0+ac_1 = 0$ and $b^2>0$, then $f$ does not achieve its
      minimum, and
      $\inf f = \lim_{x\to\pm\infty} f(x) = 1/c_1^2$.
    \end{enumerate}
  \end{lemma}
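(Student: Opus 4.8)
The plan is to reduce the bivariate-looking ratio $f(x)$ to a single calculus exercise by clearing denominators and examining the critical-point equation, then to handle the degenerate cases by taking limits. First I would treat case (i), assuming $c_0+ac_1\neq 0$. Setting $h(x)=(c_0+c_1x)^2$, the derivative of $f=g/h$ with $g(x)=(a-x)^2+b^2$ vanishes exactly where $g'(x)h(x)=g(x)h'(x)$, i.e.\ where $2(x-a)(c_0+c_1x)^2 = 2c_1(c_0+c_1x)\bigl[(a-x)^2+b^2\bigr]$. Since $c_0+c_1x\neq 0$ on the domain, we may cancel one factor of $(c_0+c_1x)$ and obtain the \emph{linear} equation $(x-a)(c_0+c_1x) = c_1[(a-x)^2+b^2]$; expanding, the quadratic terms $c_1 x^2$ cancel on both sides, leaving $c_0 x - ac_0 - ac_1 x + a^2 c_1 = c_1 a^2 - c_1 a x \cdot(\text{wait})$ — more carefully, the $x^2$-coefficients are $c_1$ on each side and cancel, so one is left with a genuinely linear equation in $x$ whose unique solution is $x = a + \dfrac{b^2 c_1}{c_0+ac_1}$, which is exactly the claimed minimizer. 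I would then check it lies in the domain (equivalently $c_0+c_1x\neq 0$ there, which follows since $c_0+c_1x = (c_0+ac_1)+\tfrac{b^2c_1^2}{c_0+ac_1} = \tfrac{(c_0+ac_1)^2+b^2c_1^2}{c_0+ac_1}\neq 0$), and confirm it is a minimum rather than a maximum: since $f\ge 0$, $f\to 1/c_1^2$ as $x\to\pm\infty$, and $f\to+\infty$ as $x\to -c_0/c_1$, the sole interior critical point on each of the two domain intervals must be a minimum on the interval containing it, and evaluating $f$ there gives a value $\le 1/c_1^2$, so it is the global minimum; uniqueness follows because the critical-point equation was linear.

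For case (ii), substituting $c_0=-ac_1$ and $b=0$ gives $f(x) = \dfrac{(a-x)^2}{(c_1x-ac_1)^2} = \dfrac{(a-x)^2}{c_1^2(x-a)^2} = \dfrac{1}{c_1^2}$ for all $x$ in the domain $\mathbb{R}\setminus\{a\}$, so $f$ is the constant $1/c_1^2$, as asserted. For case (iii), with $c_0=-ac_1$ but $b^2>0$, we have $f(x) = \dfrac{(a-x)^2+b^2}{c_1^2(x-a)^2} = \dfrac{1}{c_1^2}\Bigl(1 + \dfrac{b^2}{(x-a)^2}\Bigr) > \dfrac{1}{c_1^2}$ for every admissible $x$, while $\dfrac{b^2}{(x-a)^2}\to 0$ as $|x|\to\infty$; hence $\inf f = 1/c_1^2$ is not attained and equals the limit at $\pm\infty$.

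I do not anticipate a serious obstacle: the only point requiring care is the algebraic cancellation in case (i). The key observation making the whole lemma elementary is that after differentiating the ratio and cancelling the common factor $c_0+c_1x$, the top-degree ($x^2$) terms on the two sides coincide and cancel, so the stationarity condition is linear rather than quadratic; this is what forces the critical point — hence the minimizer — to be unique and gives the clean closed form. The mild bookkeeping is then just verifying the critical point lies in the domain and is a minimum (not a maximum or saddle), which follows from the boundary behavior of $f$ ($+\infty$ at the excluded point, $1/c_1^2$ at $\pm\infty$, nonnegativity throughout). Cases (ii) and (iii) are immediate once $c_0$ is eliminated, since the denominator then factors as $c_1^2(x-a)^2$ and cancels against (part of) the numerator.
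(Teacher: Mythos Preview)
Your argument is correct and follows essentially the same route as the paper's proof: differentiate the ratio, observe that the critical-point equation is linear (the paper writes out $f'$ explicitly and sets the numerator to zero, while you cancel the common factor of $c_0+c_1x$ first, but the resulting equation is identical), and then certify the critical point as the global minimum. The only cosmetic differences are that the paper checks the second derivative $f''(x^\star)>0$ whereas you rely on the boundary behavior, and for cases (ii)--(iii) your direct substitution $c_0=-ac_1$ giving $f(x)=\tfrac{1}{c_1^2}\bigl(1+\tfrac{b^2}{(x-a)^2}\bigr)$ is a bit cleaner than the paper's appeal to the vanishing of critical points.
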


\begin{remark}
  \label{rem:not-unique}
  When $c$ is orthogonal to the kernel of $X$, then $c=X^T\tilde c$
  for a vector $\tilde c\in\mathbb{R}^N$.  The
  problem~(\ref{eq:block-update-general}) is then equivalent to
  \begin{equation}
    \label{eq:block-update-implicit}
    \min_{\tilde\alpha\in\linspan(X)}
    \frac{\|y-\tilde\alpha\|^2}{(c_0+\tilde c^T\tilde\alpha)^2}.
  \end{equation}
  Let $\mathcal{L}(X)$ be the column span of $X$, and let
  $\pi_{\mathcal{L}(X)}$ be the orthogonal projection onto
  $\mathcal{L}(X)$.  Then~(\ref{eq:block-update-implicit}) admits a
  unique solution if and only if $c_0+\tilde c^T
  \pi_{\mathcal{L}(X)}(y)\not=0$.  The unique solution is
  \[
    \tilde\alpha^\star\;=\;
    \pi_{\mathcal{L}(X)}(y) + \frac{\|y-\pi_{\mathcal{L}(X)}(y)\|^2}{c_0+\tilde c^T
  \pi_{\mathcal{L}(X)}(y)}\, \pi_{\mathcal{L}(X)}(\tilde c),
  \]
  which is meaningful also when $X$ does not have full rank.  If
  desired, a coefficient vector $\alpha^\star\in\mathbb{R}^m$
  satisfying $X\alpha^\star=\tilde \alpha^\star$ can be chosen.  
\end{remark}

\subsection{The BCD algorithm}
\label{sec:runn-bcd-algor}

By Theorem~\ref{thm:main-BCD}, or rather the algorithm outlined in its
proof, we are able to efficiently minimize the function $g_i$
from~(\ref{eq:min}).  In other words, we can efficiently update the
$i$-th row in $B$ and the $i$-th row and column in $\Omega$ by a
partial maximization of the log-likelihood function $\ell_{G,Y}$.  We
summarize our block-coordinate descent scheme for maximization of the
log-likelihood function $\ell_{G,Y}$ in Algorithm~\ref{alg:BCD}.  For
a convergence criterion, we may compare the norm of the change in
$(B,\Omega)$ or the resulting covariance matrix or the value of
$\ell_{G,Y}$ to a given tolerance.

\begin{algorithm}[t]
\caption{Block-coordinate descent \label{alg:BCD}}
\begin{algorithmic}[1]
\Require $Y$, $\Omega^{(0)}$ and $B^{(0)}$
\Repeat
\For {$i \in V$}
	\State{Fix $\Omega_{-i,-i}$ and $B_{-i}$, and compute residuals $\epsilon_{-i}$ and pseudo-variables $Z_{\sib{(i)}}$} 
	\State {Compute $c_{i,0}$ and $c_{i,\pa(i)}$ as in Lemma~\ref{thm:detForm}}
	\If {$c_\pa(i) \neq 0$}
        \State{Set up problem~(\ref{eq:block-update-general}) with
          $y=Y_i^T$, $X=(Y_{\pa(i)}^T,Z_{\sib(i)}^T)$, 
          $c=(c_{i,\pa(i)}^T,0)^T$ and $c_0=c_{i,0}$}
        \State{Compute an orthogonal matrix $Q_1$ with
          $Q_1c=(0,\dots,0,\|c\|)^T$} 
        \State{Compute QR decomposition $Q_1X=Q_2^TR$}
        \State{Extract submatrix $R_1=R_{\{1,\dots,m\}\times
            \{1,\dots,m\}}$}
        \State{Compute intermediate constants  $r=R_{mm}$, $y_0^2$, and $(Q_2y)_j$ for $j=1,\ldots,m$}
        \State{Compute $\alpha''$ using (\ref{eq:soln1}) and (\ref{eq:soln2})}
        \State {Compute
          $(\hat B_{i,\pa(i)},\hat
          \Omega_{i,\sib(i)})^T=\alpha=Q_1^TR_1^{-1}\alpha''$}  
	\Else 
		\State{Compute $(\hat B_{i,\pa(i)},\hat
          \Omega_{i,\sib(i)})$ by minimizing sum of squares in numerator of \eqref{eq:min}}
	\EndIf
        \State{Compute $\hat \omega_{ii.-i}$
          using~(\ref{eq:varestimate}) }
        \State{Update $B_i$ and $\Omega_{i,-i}=\Omega_{-i,i}^T$ using
          $\hat B_{i,\pa(i)}$ and $\hat\Omega_{i,\sib(i)}$, respectively}
        \State {Update $\Omega$ by setting $\omega_{ii} = \hat \omega_{ii.-i}+
          \Omega_{i,-i}\Omega_{-i,-i}^{-1} \Omega_{-i,i}$}
\EndFor
\Until{Convergence criterion is met}
\end{algorithmic}
\end{algorithm}

Because cases (ii) and (iii) of Theorem~\ref{thm:main-BCD} allow for
non-unique or non-existent solutions to block update problems, a
remaining concern is whether the BCD algorithm may fail to be
well-defined.  We address this problem in
Section~\ref{sec:assumptionDiscussion}, where we give a
characterization of the mixed graphs for which BCD updates are unique
and feasible.
In this characterization we treat generic data $Y$ and generically
chosen starting values for $(B,\Omega)$.  As discussed in
Section~\ref{sec:block-ident}, graphs for which the BCD algorithm is
not generically well-defined yield non-identifiable models.
Identifiability is not necessary, however, for the BCD algorithm to be
generically well-defined.  Furthermore, note that non-uniqueness of
block update solutions could be addressed as outlined in
Remark~\ref{rem:not-unique}.

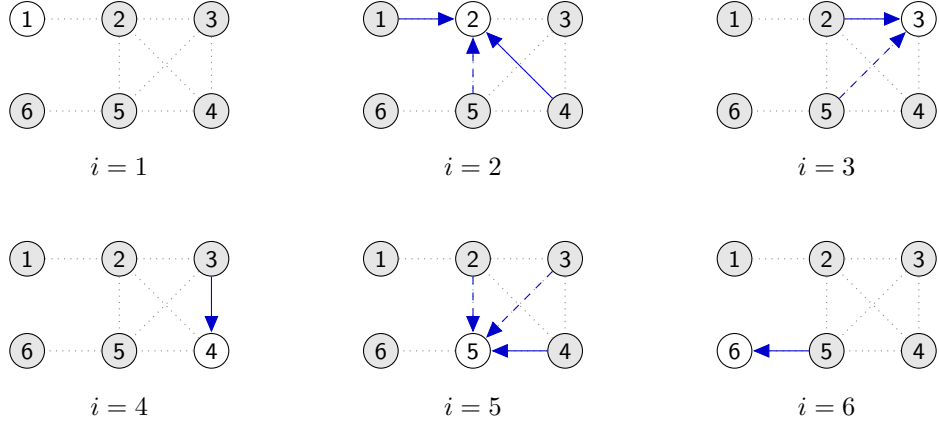
\begin{figure}[t]
  \centering
  \captionsetup[subfigure]{labelformat=empty,position=top}
  
\begin{subfigure}[t]{.3\linewidth}
        \centering

	\begin{tikzpicture}[->,>=triangle 45,shorten >=.2pt,
        auto,
        main node/.style={circle,inner
          sep=2pt,fill=gray!20,draw,font=\sffamily}]  
      
        \node[main node, fill=white] (1) {1}; 
        \node[main node] (2) [right=.75cm of 1] {2}; 
        \node[main node] (3) [right=.75cm of 2] {3}; 
        \node[main node] (4) [below=.75cm of 3] {4};
        \node[main node] (5) [left=.75cm of 4] {5};
      	\node[main node] (6) [left=.75cm of 5] {6};
      
        \path[color=gray,every
        node/.style={font=\sffamily\small}, style={-,dotted}] 
        (1) edge node {} (2)
        (2) edge node {} (3)
        (3) edge node {} (4)
        (4) edge node {} (5)
        (4) edge node {} (2)
        (5) edge node {} (6)
        (3) edge node {} (5)
        (2) edge node {} (5);
        \end{tikzpicture}
      \caption{$i=1$}
    \end{subfigure}
    ~
\begin{subfigure}[t]{.3\linewidth}
        \centering

	\begin{tikzpicture}[->,>=triangle 45,shorten >=.2pt,
        auto,
        main node/.style={circle,inner
          sep=2pt,fill=gray!20,draw,font=\sffamily}]  
      
        \node[main node] (1) {1}; 
        \node[main node, fill=white] (2) [right=.75cm of 1] {2}; 
        \node[main node] (3) [right=.75cm of 2] {3}; 
        \node[main node] (4) [below=.75cm of 3] {4};
        \node[main node] (5) [left=.75cm of 4] {5};
      	\node[main node] (6) [left=.75cm of 5] {6};
      
        \path[color=gray,style={-,dotted}] 
        (1) edge node {} (2)
        (2) edge node {} (3)
        (3) edge node {} (4)
        (4) edge node {} (5)
        (4) edge node {} (2)
        (5) edge node {} (6)
        (3) edge node {} (5)
        (2) edge node {} (5);
        
         \path[color=black!20!blue, style={->}] 
         (1) edge node {} (2)
         (4) edge node {} (2);
                
         \path[color=black!20!blue, style={->, dashed}]
         (5) edge node {} (2);
        \end{tikzpicture}
      \caption{$i=2$}
    \end{subfigure}
    ~
\begin{subfigure}[t]{.3\linewidth}
        \centering

\begin{tikzpicture}[->,>=triangle 45,shorten >=.2pt,
        auto,
        main node/.style={circle,inner
          sep=2pt,fill=gray!20,draw,font=\sffamily}]  
      
        \node[main node] (1) {1}; 
        \node[main node] (2) [right=.75cm of 1] {2}; 
        \node[main node, fill=white] (3) [right=.75cm of 2] {3}; 
        \node[main node] (4) [below=.75cm of 3] {4};
        \node[main node] (5) [left=.75cm of 4] {5};
      	\node[main node] (6) [left=.75cm of 5] {6};
      
        \path[color=gray,style={-,dotted}] 
        (1) edge node {} (2)
        (2) edge node {} (3)
        (3) edge node {} (4)
        (4) edge node {} (5)
        (4) edge node {} (2)
        (5) edge node {} (6)
        (3) edge node {} (5)
        (2) edge node {} (5);
        
         \path[color=black!20!blue, style={->}] 
         (2) edge node {} (3);
                
         \path[color=black!20!blue, style={->, dashed}]
         (5) edge node {} (3);
        \end{tikzpicture}
      \caption{$i=3$}
    \end{subfigure}
        \\ \vspace{.5cm}
          
  	\begin{subfigure}[t]{.3\linewidth}
                \centering
        
        \begin{tikzpicture}[->,>=triangle 45,shorten >=.2pt,
                auto,
                main node/.style={circle,inner
                  sep=2pt,fill=gray!20,draw,font=\sffamily}]  
              
                \node[main node] (1) {1}; 
                \node[main node] (2) [right=.75cm of 1] {2}; 
                \node[main node] (3) [right=.75cm of 2] {3}; 
                \node[main node, fill=white] (4) [below=.75cm of 3] {4};
                \node[main node] (5) [left=.75cm of 4] {5};
              	\node[main node] (6) [left=.75cm of 5] {6};
              
                \path[color=gray,style={-,dotted}] 
                (1) edge node {} (2)
                (2) edge node {} (3)
                (3) edge node {} (4)
                (4) edge node {} (5)
                (4) edge node {} (2)
                (5) edge node {} (6)
                (3) edge node {} (5)
                (2) edge node {} (5);
                
                 \path[color=black!20!blue, style={->}] 
                 (3) edge node {} (4);

                \end{tikzpicture}
              \caption{$i=4$}
            \end{subfigure}
            ~
        \begin{subfigure}[t]{.3\linewidth}
                \centering
				 \begin{tikzpicture}[->,>=triangle 45,shorten >=.2pt,  auto,
                       main node/.style={circle,inner
                         sep=2pt,fill=gray!20,draw,font=\sffamily}]  
                     
                       \node[main node] (1) {1}; 
                       \node[main node] (2) [right=.75cm of 1] {2}; 
                       \node[main node] (3) [right=.75cm of 2] {3}; 
                       \node[main node] (4) [below=.75cm of 3] {4};
                       \node[main node, fill=white] (5) [left=.75cm of 4] {5};
                     	\node[main node] (6) [left=.75cm of 5] {6};
                     
                       \path[color=gray,style={-,dotted}] 
                       (1) edge node {} (2)
                       (2) edge node {} (3)
                       (3) edge node {} (4)
                       (4) edge node {} (5)
                       (4) edge node {} (2)
                       (5) edge node {} (6)
                       (3) edge node {} (5)
                       (2) edge node {} (5);
                       
                        \path[color=black!20!blue, style={->}] 
                        (4) edge node {} (5)
                        ;
                         
                        \path[color=black!20!blue, style={->, dashed}]
                        (2) edge node {} (5)
                        (3) edge node {} (5)
                        ;      
         
                       \end{tikzpicture}
                       \caption{$i=5$}
        
            \end{subfigure}
            ~
            \begin{subfigure}[t]{.3\linewidth}
                    \centering
           \begin{tikzpicture}[->,>=triangle 45,shorten >=.2pt,  auto,
                                 main node/.style={circle,inner
                                   sep=2pt,fill=gray!20,draw,font=\sffamily}]  
                               
                                 \node[main node] (1) {1}; 
                                 \node[main node] (2) [right=.75cm of 1] {2}; 
                                 \node[main node] (3) [right=.75cm of 2] {3}; 
                                 \node[main node] (4) [below=.75cm of 3] {4};
                                 \node[main node] (5) [left=.75cm of 4] {5};
                               	\node[main node, fill=white] (6) [left=.75cm of 5] {6};
                               
                                 \path[color=gray,style={-,dotted}] 
                                 (1) edge node {} (2)
                                 (2) edge node {} (3)
                                 (3) edge node {} (4)
                                 (4) edge node {} (5)
                                 (4) edge node {} (2)
                                 (5) edge node {} (6)
                                 (3) edge node {} (5)
                                 (2) edge node {} (5);
                                 
                                  \path[color=black!20!blue, style={->}] 
                                  (5) edge node {} (6)
                                  ;
                   \end{tikzpicture}
                    \caption{$i=6$}
                \end{subfigure}

\caption{Illustration of the update steps for the BCD algorithm for each node. At each step, the edges corresponding to fixed parameters have been replaced with dotted edges. Arrowheads into nodes other than $i$ have been removed. Hence, solid and dashed directed edges into $i$ respectively represent directed and bi-directed edges with an arrowhead at $i$. Each remaining arrowhead signifies the relevant parameter to update during this step.}
  \label{fig:update_example}
\end{figure}

\begin{example}\rm
  \label{ex:run-bcd}
  We illustrate the BCD algorithm for the graph from Figure
  {\ref{fig:example}}, visiting the nodes in the order of their labels
  from 1 to 6.  Since the graph is simple (i.e., without bows), the
  theory from Section~\ref{sec:bcdAssump} shows that all updates are
  well-defined.

  Beginning with node $i=1$, we fix all but the first row of $B$ and
  the first row and column of $\Omega$.  In graphical terms, we fix
  the parameters that correspond to edges that do not have an
  arrowhead at node $1$.  Now, there are no arrowheads at node $1$,
  meaning that all entries in the first row of $B$ and all
  off-diagonal entries in the first row and column of $\Omega$ are
  constant zero.  Consequently, the algorithm merely updates the
  variance $\omega_{11}$.  The update simply sets
  $\omega_{11}=S_{11}$, the sample variance for variable 1.  This
  update is the same in later iterations, that is, node 1 can be
  skipped in subsequent iterations.

  For $i=2$, three edges have arrowheads at node 2, with corresponding
  parameters $\beta_{21}$, $\beta_{24}$ and $\omega_{25}$.  The
  directed edge $4 \to 2$ is contained in a cycle of the graph.  Its
  associated parameter, $\beta_{24}$, has coefficient
  $-\beta_{32}\beta_{43}$ in $\det(I-B)$. Thus, unless $\beta_{32}$ or
  $\beta_{43}$ is zero, $c_{\pa(2)} \neq 0$ and the more involved
  update from lines 6-11 in Algorithm~\ref{alg:BCD} applies.  If
  $\beta_{32}$ or $\beta_{43}$ is fixed to zero during this first
  iteration of the algorithm (i.e., one or both were initialized to
  zero), then the first update for $i=2$ is a least squares problem.

  Nodes $3$ and $4$ each have one arrowhead corresponding to a
  directed edge contained in a cycle of the graph. Hence, the updates
  for $i=3$ and $i=4$ proceed analogously to the update step
  $i=2$. For $i=3$, we update the parameters
  $\beta_{32}, \omega_{35}$, and $\omega_{33}$. For $i=4$, we update
  the parameters $\beta_{43}$ and $\omega_{44}$.

  For $i=5$, there are three arrowheads at node 5 corresponding to the
  three parameters $\beta_{54}, \omega_{25},$ and
  $\omega_{35}$. Observe that $4\to 5$ is the only directed edge into
  node 5 and is not contained in a cycle. Hence $c_{\pa(5)} = 0$, and
  we proceed with the least squares update in line 13 of
  Algorithm~\ref{alg:BCD}.  This least squares computation may change
  from one iteration of the algorithm to the next.

  For $i=6$, the only arrowhead corresponds to the directed edge
  $5\to 6$ with associated parameter $\beta_{65}$.  This directed edge
  is not involved in a cycle, so we estimate the parameter via a least
  squares regression and then solve for $\omega_{66}$.  This update
  remains the same throughout all iterations of the algorithm and only
  needs to be performed once.
\end{example}

\section{Properties of the block-coordinate descent algorithm}
\label{sec:assumptionDiscussion}

\subsection{Convergence properties}

Because the BCD algorithm performs partial maximizations, the value of
the log-likelihood function $\ell_{G,Y}$ is non-decreasing throughout the
iterations.  At every update, the algorithm finds a positive
definite covariance matrix.  The update steps clearly preserve the
structural zeros of the matrices $B$ and $\Omega$, and
$I-B$ remains invertible.  Hence, the algorithm constructs a sequence
in $\mathbf{B}(G)\times \mathbf{\Omega}(G)$.

Every accumulation point $(B^\star,\Omega^\star)$ of the sequence
constructed by the algorithm is a critical point of the likelihood
function and either a local maximum or a saddle point.  A local
maximum can be certified by checking negative definiteness of the
Hessian of $\ell_{G,Y}$.  However, as `always' in general non-linear
optimization there is no guarantee that a global maximum is found.
Indeed, even for seemingly simple mixed graphs, the likelihood
function can be multimodal \citep{drton:richardson:2004}.  In
practice, one may wish to run the algorithm from several different
initial values.
A strength of the BCD algorithm is that for nodes whose incoming
directed edges are not contained in any cycle of $G$ and that are not
incident to any bi-directed edges, the update of $B_{i,\pa(i)}$ and
$\omega_{ii}$ does not depend on the fixed pair
$(B_{-i},\Omega_{-i,-i})$ and thus needs to performed only once (in
the first iteration).  As we had noted, this happens for nodes 1 and 6
of the example discussed in Section~\ref{sec:runn-bcd-algor}.  Hence,
we may check for nodes of this type and exclude them from subsequent
iterations after the first iteration of the algorithm.  We also update
these nodes before the set of nodes that require multiple update
iterations.

\subsection{Existence and uniqueness of optima in block updates}
\label{sec:bcdAssump}

The BCD algorithm is well-defined if each block update problem has a
unique solution that is feasible, where feasibility refers to the new
matrix $\Omega$ being positive definite.  When updating at node $i$,
the positive definiteness of $\Omega$ is equivalent to
$\omega_{ii.-i}>0$.  Since the latter conditional variance is set
via~(\ref{eq:varestimate}), feasibility of a block update solution
$(\Omega_{i,\sib(i)},B_{i,\pa(i)})$ corresponds to
$\|Y_i-B_{i,\pa(i)}Y_{\pa(i)}-\Omega_{i,\sib(i)}Z_{\sib(i)}\|$ being
positive.

If the underlying graph is acyclic then the update at node $i$ solves
a least squares problem that has a unique solution if and only if the
$|\pa(i)|+|\sib(i)|$ vectors in the rows of $Y_{\pa(i)}$ and
$Z_{\sib(i)}$ form a linearly independent set in $\mathbb{R}^N$.
Moreover, the update yields a positive value of $\omega_{ii.-i}$ if
and only if $Y_i$ is not in the linear span of the rows of
$Y_{\pa(i)}$ and $Z_{\sib(i)}$.  We conclude that, in the acyclic
case, the block update admits a unique and feasible solution if and
only if the following condition is met:
\begin{enumerate}[label=(A\arabic{*})$_i$,
  ref=(A\arabic{*})$_i$,leftmargin=5.0em] 
\item \label{A1} The matrix $\begin{pmatrix} 
    Z_{\sib(i)}\\ Y_{\pa(i)\cup\{i\}} 
     \end{pmatrix}\in\mathbb{R}^{(|\sib(i)|+|\pa(i)|+1)\times
    N}$ has linearly independent rows.
\end{enumerate}
As we show in Theorem~\ref{thm:A1A2} below, if the underlying graph is
not acyclic, then a further condition is needed:
\begin{enumerate}[label=(A\arabic{*})$_i$,
  ref=(A\arabic{*})$_i$,leftmargin=5.0em]
  \setcounter{enumi}{1}
\item \label{A2} The inequality
  $c_{i,0}+\hat B_{i,\pa(i)} c_{i,\pa(i)}\not=0$ holds for
  $\hat B_{i,\pa(i)}=\left[Y_iX_i^T(X_iX_i^T)^{-1}\right]_{\pa(i)}$ and
  $X_i=\begin{pmatrix} Z_{\sib(i)}\\ Y_{\pa(i)}
     \end{pmatrix}\in\mathbb{R}^{(|\sib(i)|+|\pa(i)|)\times
    N}$.   
\end{enumerate}
Note that the acyclic case has $c_{i,0}=1$ and $c_{i,\pa(i)}=0$, so
condition~\ref{A2} is void.

\begin{example}\rm
  \label{ex:2cycle-nonexist}
  Let the graph $G=(V,E_{\to},E_{\bi})$ be a two-cycle, so
  $V=\{1,2\}$, $E_{\to}=\{1\to 2,\, 2\to 1\}$ and $E_{\bi}=\emptyset$.
  Consider the update for node $i=2$.  With $\pa(2)=1$, we have
  $c_{2,\pa(2)}=-\beta_{12}$ and $c_{2,0}=1$.  Since $\sib(2)=\emptyset$,
  the block update amounts to solving
  \[
    \min_{\beta_{21}\in\mathbb{R}}
    \frac{\|Y_2-\beta_{21}Y_1\|^2}{(1-\beta_{12}\beta_{21})^2} 
  \]
  for fixed $\beta_{12}$.  Condition~\ref{A1} holds for $i=2$ when the
  data vectors $Y_1$ and and $Y_2$ are linearly independent.  We are
  then in case (i) or (iii) of Theorem~\ref{thm:main-BCD}.  Hence, the
  solution either exists uniquely or does not exist.  It fails to
  exist when
  \[
    1-\beta_{12}\frac{\langle Y_2,Y_1\rangle}{\|Y_1\|^2} \;=\;0,
  \]
  that is, when~\ref{A2} fails for $i=2$.
\end{example}

\begin{theorem}
  \label{thm:A1A2}
  Let $G=(V,E_{\to},E_{\bi})$ be any mixed graph, and let
  $Y\in\mathbb{R}^{V\times N}$ be a data matrix of full rank
  $|V|\le N$.  Let $i\in V$ be any node.  Then the function $g_i$
  from~(\ref{eq:min}) has a unique minimizer
  $(\Omega_{i,\sib(i)},B_{i,\pa(i)})$ with
  \[
    \|Y_i-B_{i,\pa(i)}Y_{\pa(i)}-\Omega_{i,\sib(i)}Z_{\sib(i)}\|\;>\;0
  \]
  if and only if conditions~\ref{A1} and~\ref{A2} hold.
\end{theorem}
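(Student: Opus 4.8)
The plan is to read the statement off from Theorem~\ref{thm:main-BCD} and Remark~\ref{rem:not-unique} via the dictionary~(\ref{eq:yXac}). Set $m=|\sib(i)|+|\pa(i)|$, $\alpha=(\Omega_{i,\sib(i)},B_{i,\pa(i)})^{T}$, $y=Y_i^{T}$, $X=(Z_{\sib(i)}^{T},Y_{\pa(i)}^{T})$, $c=(0,c_{i,\pa(i)})^{T}$ and $c_0=c_{i,0}$, so that $g_i(\alpha)=\|y-X\alpha\|^{2}/(c_0+c^{T}\alpha)^{2}$ is exactly problem~(\ref{eq:block-update-general}), with domain $\{\alpha:c_0+c^{T}\alpha\neq0\}$ (the coordinates of $\alpha$ indexed by $\sib(i)$ being unconstrained). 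I would then record three elementary observations. First, by Lemma~\ref{lem:block-update-special} there is $\tilde c\in\mathbb{R}^{N}$ with $c=X^{T}\tilde c$; hence $g_i(\alpha)$ depends on $\alpha$ only through $X\alpha$, and $c^{T}(\alpha+v)=c^{T}\alpha$ for every $v\in\ker X$, so translating by $v$ stays in the domain. Second, writing $\hat\alpha=(X^{T}X)^{-1}X^{T}y$ for the least-squares coefficients (whenever $X$ has full column rank) and $y_0^{2}=\|y-\pi_{\mathcal{L}(X)}(y)\|^{2}$, the identities $X_iX_i^{T}=X^{T}X$ and $X_iY_i^{T}=X^{T}y$ show that the vector $\hat B_{i,\pa(i)}$ in~\ref{A2} is precisely the $\pa(i)$-block of $\hat\alpha$, so~\ref{A2} is the statement $c_0+c^{T}\hat\alpha\neq0$. (When $X$ is rank-deficient the inverse in~\ref{A2} does not exist; then~\ref{A1} already fails and~\ref{A2} is never invoked.) Third, the $m+1$ rows $\{Z_k:k\in\sib(i)\}\cup\{Y_j:j\in\pa(i)\cup\{i\}\}$ are linearly independent if and only if the $m$ rows $\{Z_k:k\in\sib(i)\}\cup\{Y_j:j\in\pa(i)\}$ are linearly independent \emph{and} $Y_i\notin\linspan$ of them; in other words,~\ref{A1} holds iff $X$ has full column rank and $y_0^{2}>0$.

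Granting these, the \emph{if} direction is immediate: assuming~\ref{A1} and~\ref{A2}, the matrix $X$ has full rank $m\le N$, $y_0^{2}>0$, and $c_0+c^{T}\hat\alpha\neq0$, so case~(i) of Theorem~\ref{thm:main-BCD} provides the unique minimizer $\alpha^{\star}=\hat\alpha+\frac{y_0^{2}}{c_0+c^{T}\hat\alpha}(X^{T}X)^{-1}c$. Decomposing $y-X\alpha^{\star}=(y-X\hat\alpha)+X(\hat\alpha-\alpha^{\star})$ with the first summand orthogonal to $\mathcal{L}(X)$ and the second in $\mathcal{L}(X)$, the Pythagorean identity gives $\|y-X\alpha^{\star}\|^{2}=y_0^{2}+\|X(\hat\alpha-\alpha^{\star})\|^{2}\ge y_0^{2}>0$, which is the required positivity of $\|Y_i-B_{i,\pa(i)}Y_{\pa(i)}-\Omega_{i,\sib(i)}Z_{\sib(i)}\|$.

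For the \emph{only if} direction I would argue by contraposition, splitting the failure of ``\ref{A1} and~\ref{A2}'' into three mutually exclusive cases according to the first and third observations. If $X$ is rank-deficient, then $\ker X\neq\{0\}$ and~\ref{A1} fails; by the first observation $g_i(\alpha)=g_i(\alpha+v)$ for any $v\in\ker X$, with $\alpha+v$ in the domain, so $g_i$ cannot have a unique minimizer (cf.\ also Remark~\ref{rem:not-unique}). If $X$ has full rank but $y_0^{2}=0$, then~\ref{A1} fails and $g_i\ge0$ vanishes exactly on $\{\alpha:X\alpha=y\}=\{\hat\alpha\}$; thus if $c_0+c^{T}\hat\alpha\neq0$ the unique minimizer is $\hat\alpha$ but it has residual norm $0$, while if $c_0+c^{T}\hat\alpha=0$ case~(ii) of Theorem~\ref{thm:main-BCD} yields a one-parameter family of minimizers --- in either situation a unique minimizer with positive residual fails to exist. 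Finally, if $X$ has full rank and $y_0^{2}>0$ but~\ref{A2} fails, i.e.\ $c_0+c^{T}\hat\alpha=0$, then case~(iii) of Theorem~\ref{thm:main-BCD} shows the infimum of $g_i$ is not attained. These cases exhaust the negation of ``\ref{A1} and~\ref{A2}'', completing the equivalence.

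The only genuine difficulty is organizational: the property ``a unique minimizer with positive residual'' can be destroyed in three qualitatively distinct ways --- non-attainment of the infimum, non-uniqueness of minimizers, and attainment with a vanishing residual --- and the argument must match each failure mode with the failure of exactly one of~\ref{A1},~\ref{A2}, making sure the rank-deficient case of $X$ (handled through $g_i$ factoring through $\alpha\mapsto X\alpha$) is not overlooked. Verifying that~\ref{A2} is equivalent to $c_0+c^{T}\hat\alpha\neq0$ and that the optimal residual is at least $y_0^{2}$ are both one-line computations.
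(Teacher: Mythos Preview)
Your proposal is correct and follows essentially the same route as the paper's proof: both reduce to Theorem~\ref{thm:main-BCD} via the dictionary~(\ref{eq:yXac}), use Lemma~\ref{lem:block-update-special} to handle the rank-deficient $X$ case by shifting along $\ker X$, and split the contrapositive into the same three subcases (rank-deficient $X$; full-rank $X$ with $y_0^2=0$; full-rank $X$ with $y_0^2>0$ but~\ref{A2} failing). Your explicit Pythagorean bound $\|y-X\alpha^\star\|^2\ge y_0^2$ is a minor elaboration of the paper's one-line remark that $Y_i^T\notin\mathcal{L}(X)$ forces a nonzero residual, and your upfront identification of~\ref{A2} with $c_0+c^T\hat\alpha\neq0$ is the same observation the paper uses implicitly.
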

\begin{proof}
  \emph{$(\Longleftarrow)$} When~\ref{A1} holds,
  Theorem~\ref{thm:main-BCD} applies to the minimization of $g_i$
  because the matrix $X$ defined in~(\ref{eq:yXac}) has full rank.
  Condition~\ref{A2} ensures we are in case (i) of the theorem.
  Hence, $g_i$ has a unique minimizer
  $(\Omega_{i,\sib(i)},B_{i,\pa(i)})$.  According to~\ref{A1},
  $Y_i^T$ is not in the span of $X$.  Thus,
  $Y_i-B_{i,\pa(i)}Y_{\pa(i)}-\Omega_{i,\sib(i)}Z_{\sib(i)}\not=0$.
  \smallskip

  \emph{$(\Longrightarrow)$} First, suppose~\ref{A1} holds
  but~\ref{A2} fails.  Then Theorem~\ref{thm:main-BCD} applies in
  either case (ii) or (iii).  Hence, the minimizer of $g_i$ is either
  not unique or does not exist.

  Second, suppose condition~\ref{A1} fails because
  $X=(Z_{\sib(i)}^T, Y_{\pa(i)}^T)$ is not of full rank.  Let
  $\eta\in\mathbb{R}^{|\sib(i)|+|\pa(i)|}$ be any nonzero vector in
  the kernel of $X$.  Let $c=(0,c_{i,\pa(i)}^T)^T$.  With the
  orthogonality from Lemma~\ref{lem:block-update-special}, we have
  $X\alpha=X(\alpha+\eta)$ and $c^T\alpha=c^T(\alpha+\eta)$ for any
  $\alpha\in\mathbb{R}^{|\sib(i)|+|\pa(i)|}$.  Consequently, $g_i$
  does not have a unique minimizer.
  
  Third, suppose that $X=(Z_{\sib(i)}^T, Y_{\pa(i)}^T)$ has full rank
  but~\ref{A1} still fails.  Then $y=Y_i^T$ is in the column span of
  $X$ so that Theorem~\ref{thm:main-BCD} applies with the quantity
  $y_0^2$ zero.  We are thus in either case (i) or case (ii) of the
  theorem.  In case (ii) the minimizer is not unique.  This leaves us
  with case (i), in which $y_0^2=0$ implies that $g_i$ is uniquely
  minimized by the least squares vector $\hat\alpha$, i.e., the
  minimizer of $\alpha\mapsto\|y-X\alpha\|^2$.  Since $y=Y_i^T$ is in
  the span of $X$, we have $\|y-X\alpha\|^2=0$, which translates into
  $Y_i-B_{i,\pa(i)}Y_{\pa(i)}-\Omega_{i,\sib(i)}Z_{\sib(i)}=0$.  We
  conclude that $g_i$ has a unique and feasible minimizer only
  if~\ref{A1} and~\ref{A2} hold.
\end{proof}

\begin{example}
  \rm
  \label{ex:twonodes:threeedges}
  Let $G=(V,E_{\to},E_{\bi})$ be the graph with vertex set
  $V=\{1,2\}$, and edge sets $E_{\to}=\{1\to 2,\, 2\to 1\}$ and
  $E_{\bi}=\{1\bi 2\}$.  Note that the model $\mathbf{N}(G)$ comprises
  all centered bivariate normal distributions.  Therefore, the
  log-likelihood function $\ell_{G,Y}$ achieves its maximum for any data
  matrix $Y\in\mathbb{R}^{2\times N}$ of rank $2$.

  The two block updates in this example are symmetric, so consider the
  update for $i=1$ only.  Fix any two values of
  $\beta_{21}\in\mathbb{R}$ and $\omega_{22}>0$.  Then the map from
  $(\beta_{12},\omega_{12},\omega_{11})$ to the covariance matrix
  $(I-B)^{-1}\Omega(I-B)^{-T}$ is easily seen to have a Jacobian
  matrix of rank 2.  Because the rank drops from 3 to 2, for each
  triple $(\beta_{12},\omega_{12},\omega_{11})$ there is a
  one-dimensional set of other triples that yield the same covariance
  matrix and, thus, the same value of the likelihood function.  Due to
  this lack of block-wise identifiability, the block update cannot
  have a unique solution.

  In this example, we have $\sib(1)=\pa(1)=\{2\}$ and
  $\det(I-B)=1-\beta_{12}\beta_{21}$, so that $c_{i,0}=1$ and
  $c=(0,-\beta_{21})^T$.  Moreover,
  \[
    X^T=\begin{pmatrix}
     Z_{\sib(i)}\\ Y_{\pa(i)}
    \end{pmatrix}
     \;=\;
    \begin{pmatrix}
      \frac{1}{\omega_{22}} (Y_2-\beta_{21}Y_1)\\ Y_2
    \end{pmatrix}
    \;=\;
    \begin{pmatrix}
      -\frac{\beta_{21}}{\omega_{22}} & \frac{1}{\omega_{22}}\\
      0 & 1
    \end{pmatrix}
    Y.
  \]
  If $\beta_{21}=0$, then~\ref{A1} fails for $i=1$ because $X$ is rank
  deficient.  If $\beta_{21}\not=0$ and $\rank(Y)=2$, then
  $\rank(X)=2$ and $y=Y_1^T$ is in the span of $X$, with
  \[
    Y_1 =
    \begin{pmatrix}
       -\frac{\omega_{22}}{\beta_{21}} & \frac{1}{\beta_{21}}
    \end{pmatrix}
    X^T
    .
  \]
  Consequently, $y_0^2=0$ and the least squares coefficients for the
  regression of $y$ on $X$ are
  $(-\omega_{22}/\beta_{21},1/\beta_{21})$.  Then condition~\ref{A2}
  fails for $i=1$ because with $c_{1,\pa(1)}=-\beta_{21}$ and least
  squares coefficient $\hat B_{1,\pa(1)}=1/\beta_{21}$ we find that
  \[
    c_{1,0}+\hat B_{1,\pa(1)} c_{1,\pa(1)} = 1 +
    \frac{1}{\beta_{21}}(-\beta_{21}) = 0.
  \]
\end{example}

\begin{remark}
  \label{rem:y02zeroA2fails}
  The findings from Example~\ref{ex:twonodes:threeedges} generalize.
  Indeed, for any graph $G$, if $Y$ has full rank and $Y_i^T$ is in
  the span of $X=(Z_{\sib(i)}^T, Y_{\pa(i)}^T)$, then one can show
  that~\ref{A2} fails and, thus, the block update has
  infinitely many solutions; see
  Appendix~\ref{app:proof:rem:y02zeroA2fails}.
\end{remark}

\subsection{Well-defined BCD iterations}
\label{sec:well-defined-bcd}

Although Theorem~\ref{thm:A1A2} characterizes the existence of a
unique and feasible solution for a particular block update, it does
not yet clarify when its conditions~\ref{A1} and~\ref{A2} hold
throughout all iterations of the BCD algorithm.  In practice, there is
freedom in choosing the starting value
$(B_0,\Omega_0)\in\mathbf{B}(G)\times \mathbf{\Omega}(G)$ and, in
particular, we may choose it randomly to alleviate problems of having
the triple $(Y,B_0,\Omega_0)$ in undesired special position; recall
Example~\ref{ex:2cycle-nonexist}.  Since our models consider a
continuously distributed data matrix $Y\in\mathbb{R}^{V\times N}$, the
natural problem becomes to characterize the graphs $G$ such that any
finite number of BCD iterations are well-defined for generic triples
$(Y,B_0,\Omega_0)$.  As before, our treatment assumes
$N\ge |V|$.

We begin by studying condition~\ref{A1}.  
Let $G=(V,E_{\to},E_{\bi})$ be a mixed graph.  Let $\pi$ be a path in
$G$, and let $i_1,\dots,i_k$ be the not necessarily distinct vertices
on $\pi$.  Then $\pi$ is a \emph{half-collider path} if either all
edges on $\pi$ are bi-directed, or the first edges is $i_1\to i_2$ and
all other edges are bi-directed.  Both a single edge $i_1\to i_2$ and
an empty path comprising only node $i_1$ are half-collider paths.  The
\emph{bi-directed portion} of a half-collider path $\pi$ is the set of
nodes that are incident to a bi-directed edge on $\pi$.  In other
words, if $\pi$ starts with $i_1\to i_2$, then its bi-directed portion
is $\{i_2,\dots,i_k\}$.  If $\pi$ does not contain a directed edge,
then its bi-directed portion is the set of all of its nodes
$\{i_1,\dots,i_k\}$.  Valid half-collider paths are shown in Figure
\ref{eq:coHalfTrek}. 

We note that half-collider paths are dual to the half-treks of
\citet{foygel:draisma:drton:2012}.  A half-trek is a path whose
first edge is either directed or bi-directed, and whose remaining edges
are directed.

\begin{figure}[t]
\normalsize
\begin{align*}
i_1 \rightarrow \fbox{$i_2 \leftrightarrow i_3 \leftrightarrow \ldots
  \leftrightarrow i_k$}\,,  && i_1 \rightarrow \fbox{$i_2$}\,, &&
\fbox{$i_1 \leftrightarrow i_2 \leftrightarrow i_3 \ldots
                                                              \leftrightarrow i_k$}\,, &&  \fbox{$i_1$}\,.
\end{align*}
\caption{\label{eq:coHalfTrek} Four half-collider paths with boxes drawn
  around their bi-directed portions.}
\end{figure}

Let $S_b,S_e\subset V$ be two sets of nodes.  A collection of paths
$\pi^1,\dots,\pi^s$ is a \emph{system of half-collider paths} from $S_b$ to
$S_e$ if $|S_b|=|S_e|=s$, each $\pi^l$ is a half-collider path from a
node in $S_b$ to a node in $S_e$, every node in $S_b$ is the first
node on some $\pi^l$, and every node in $S_e$ is the last node on some
$\pi^l$.  

\begin{proposition} \label{lem:fullRankX} Let $G=(V,E_{\to},E_{\bi})$
  be a mixed graph, and let $i\in V$.  Then the following two
  statements are equivalent:
  \begin{enumerate}
  \item[(a)] Condition~\ref{A1} holds for generic triples
    $(Y,B,\Omega)\in\mathbb{R}^{V\times
      N}\times\mathbf{B}(G)\times\mathbf{\Omega}(G)$.
  \item[(b)] The induced subgraph $G_{-i}$ contains a system of
    half-collider paths from a subset of
    $V \setminus (\pa(i)\cup\{i\})$ to $\sib(i)$ such that the
    bi-directed portions are pairwise disjoint.
  \end{enumerate}
\end{proposition}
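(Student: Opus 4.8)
The plan is to convert statement~(a) into the non-vanishing of a polynomial in the model parameters, strip off the trivial rows contributed by $Y_{\pa(i)\cup\{i\}}$, and then read the remaining subdeterminant of $\Omega_{-i,-i}^{-1}(I-B)_{-i}$ combinatorially as a signed sum over systems of half-collider paths, so that its non-vanishing is decided by a Lindström--Gessel--Viennot-type cancellation argument of the kind used for the half-trek criterion of \citet{foygel:draisma:drton:2012}, to which half-collider paths are dual, as noted above.

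\emph{Reduction to a polynomial condition.} The parameter space $\mathbb{R}^{V\times N}\times\mathbf{B}(G)\times\mathbf{\Omega}(G)$ is a nonempty Euclidean-open subset of an affine space, hence dense in it, so a polynomial vanishing on a generic triple vanishes identically. After multiplying the rows $Z_{\sib(i)}$ by $\det(\Omega_{-i,-i})$, which does not affect linear dependence, the negation of~\ref{A1} is the simultaneous vanishing of all maximal minors of the matrix in~\ref{A1}, each a polynomial in the entries of $(Y,B,\Omega)$; thus (a) is equivalent to the existence of a single triple for which that matrix has full row rank. Fix a generic full-rank $Y$, so that the rows of $Y$ form a basis of $\mathbb{R}^V$ (here $N\ge|V|$ is used). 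Then full row rank is equivalent to linear independence in $\mathbb{R}^V$ of the coefficient vectors of the rows, namely $e_j$ for $j\in\pa(i)\cup\{i\}$ together with the $\sib(i)$-indexed rows of $M(B,\Omega):=\Omega_{-i,-i}^{-1}(I-B)_{-i}$. Because the $e_j$ are standard basis vectors, a block-triangular (Schur complement) argument reduces this to the existence of a set $C'\subseteq V\setminus(\pa(i)\cup\{i\})$ with $|C'|=|\sib(i)|$ such that $(B,\Omega)\mapsto\det M(B,\Omega)_{\sib(i),C'}$ is not the zero polynomial.

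\emph{Combinatorial expansion.} Writing $\Omega_{-i,-i}^{-1}=\mathrm{adj}(\Omega_{-i,-i})/\det(\Omega_{-i,-i})$ and using the Leibniz expansion of the adjugate, the entry $M_{k,p}$ is, up to the common factor $\det\Omega_{-i,-i}$, a signed sum over half-collider paths in the induced subgraph $G_{-i}$ from $p$ to $k$: the trailing bi-directed stretch is produced by $\mathrm{adj}(\Omega_{-i,-i})$ and the optional leading directed edge by a $\beta$-entry of $(I-B)_{-i}$, with each path contributing the monomial of its edge parameters times a minor of $\Omega_{-i,-i}$. Expanding $\det M_{\sib(i),C'}$ by the Leibniz formula therefore gives a signed sum indexed by systems of half-collider paths in $G_{-i}$ from $C'$ to $\sib(i)$. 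A tail-swapping involution --- exchange the two bi-directed stretches beyond the first node, in a fixed order, shared by the bi-directed portions of two of the paths --- cancels in pairs all contributions of systems whose bi-directed portions overlap, preserving the edge monomial while reversing the sign of the underlying bijection; since a path system has distinct sources by definition, the surviving systems are precisely those with pairwise disjoint bi-directed portions. For such a system, a suitable degeneration of the $\beta$- and $\omega$-parameters isolates a monomial that no other system produces, so $\det M_{\sib(i),C'}$ is a nonzero polynomial for some $C'$ exactly when $G_{-i}$ contains a system of half-collider paths from a size-$|\sib(i)|$ subset of $V\setminus(\pa(i)\cup\{i\})$ to $\sib(i)$ with pairwise disjoint bi-directed portions --- that is, exactly condition~(b).

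\emph{Main obstacle.} The delicate point is the last step: one must verify that the tail-swap is a well-defined, sign-reversing, monomial-preserving involution on all half-collider path systems --- handling uniformly the purely bi-directed paths and the paths with a leading directed edge, and checking that each swapped pair again consists of legitimate half-collider paths --- and then exhibit a specialization of the structural parameters under which some disjoint system contributes an uncancelled monomial. Both are close analogues of arguments in \citet{foygel:draisma:drton:2012}; alternatively, one can make the duality with half-treks explicit by a transpose and relabeling and transfer their result directly.
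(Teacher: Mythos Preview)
Your reduction is sound and matches the paper: both arguments boil the question down to whether, for some $C'\subseteq V\setminus(\pa(i)\cup\{i\})$ of size $|\sib(i)|$, the minor $\det M(B,\Omega)_{\sib(i),C'}$ with $M=\Omega_{-i,-i}^{-1}(I-B)_{-i}$ is a nonzero polynomial. Where you diverge from the paper is in how this minor is analyzed. The paper does not attempt a direct Lindstr\"om--Gessel--Viennot expansion over half-collider paths. Instead it applies Cauchy--Binet to write $\det M_{\sib(i),C'}=\sum_A \det(\Omega_{-i,-i}^{-1})_{\sib(i),A}\det\Lambda_{A,C'}$ with $\Lambda=(I-B)_{-i}$, then treats the two factors separately: for (a)$\Rightarrow$(b) it invokes a result of Sullivant--Talaska--Draisma (their Corollary~3.8) to read off a vertex-disjoint bi-directed path system from the non-vanishing of the $\Omega^{-1}$-minor, and a Leibniz expansion to read off a matching from the $\Lambda$-minor; for (b)$\Rightarrow$(a) it constructs an explicit point $(\Omega^\dagger,B^\dagger)$ supported only on the edges of a given path system, argues that after an ordering of $\sib(i)$ the matrix $\Delta_{\sib(i),\tilde V}$ becomes triangular, and checks the diagonal is nonzero.

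Your direct LGV route has a genuine gap at the combinatorial expansion. The assertion that the entry $M_{k,p}$ (after clearing $\det\Omega_{-i,-i}$) is ``a signed sum over half-collider paths from $p$ to $k$, each contributing its edge monomial times a minor'' is not correct as stated: the adjugate entry $[\mathrm{adj}(\Omega_{-i,-i})]_{kj}$ is a full cofactor, and its Leibniz expansion is indexed not by bi-directed paths from $j$ to $k$ but by permutations, i.e., a path from $j$ to $k$ \emph{together with} a cycle cover of all remaining vertices in $V\setminus\{i\}$. Consequently, $\det M_{\sib(i),C'}$ is a sum over much richer configurations than half-collider path systems, and your tail-swap involution is not defined on the objects that actually appear; nor is it clear that after cancellation only the disjoint systems survive with uncancelled monomials. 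The hoped-for duality with half-treks does not transfer so easily: in \citet{foygel:draisma:drton:2012} the crucial matrix is $(I-B)^{-1}$, whose entries genuinely are path sums via the Neumann series, whereas $\Omega_{-i,-i}^{-1}$ has no such simple path interpretation. If you want to repair the argument, the clean fix is precisely the paper's Cauchy--Binet split, which confines the delicate part to the $\Omega^{-1}$-minor and lets you cite the existing LGV-type result for symmetric matrices rather than invent one; alternatively, for (b)$\Rightarrow$(a), the explicit-point construction is far simpler than a monomial-isolation argument.
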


The proof is deferred to Appendix
\ref{sec:fullRankXProof}.  It merely requires $Y$ to be of full
rank and $(B,\Omega)$ to be chosen from a set of generic points that
is independent of $Y$.

\begin{example}\rm
  Suppose a graph with vertex set $V=\{1,\dots,6\}$ contains the
  paths
  \[
    1\to 3 \bi 4 \bi 5 \quad\text{and}\quad   2\bi 1 \bi 6.
  \]
  These form a system of half-collider paths from $\{1,2\}$ to
  $\{5,6\}$.  The system is not vertex disjoint as node 1 appears on
  both paths.  However, the bi-directed portions $\{3,4,5\}$ and
  $\{1,2,6\}$ are disjoint.
\end{example}

Next, we turn to condition~\ref{A2} and show that in generic cases it
does not impose any additional restriction.

\begin{proposition}
  \label{prop:A2generic}
  Suppose the mixed graph $G$ is such that~\ref{A1} holds
  for generic triples
  $(Y,B,\Omega)\in\mathbb{R}^{V\times
    N}\times\mathbf{B}(G)\times\mathbf{\Omega}(G)$.  Then~\ref{A2}
  holds for generic triples $(Y,B,\Omega)$.
\end{proposition}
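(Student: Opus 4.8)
\quad The plan is to show that the scalar $h := c_{i,0}+\hat B_{i,\pa(i)}\,c_{i,\pa(i)}$ from condition~\ref{A2} is a rational function of $(Y,B,\Omega)$ that does not vanish identically; since the parameter space $\mathbb{R}^{V\times N}\times\mathbf{B}(G)\times\mathbf{\Omega}(G)$ is irreducible, this will force $h\neq 0$ — and hence~\ref{A2} — off a proper subvariety, which is exactly genericity. Rationality is immediate: with $X_i=\bigl(\begin{smallmatrix} Z_{\sib(i)}\\ Y_{\pa(i)}\end{smallmatrix}\bigr)$ as in~\ref{A2}, $S=\tfrac1N YY^T$, and $M=M(B,\Omega)$ the $m\times|V|$ matrix ($m=|\sib(i)|+|\pa(i)|$) with $MY=X_i$, one computes $\hat\alpha:=(X_iX_i^T)^{-1}X_iY_i^T=(MSM^T)^{-1}MSe_i$ and $\hat B_{i,\pa(i)}=[\hat\alpha]_{\pa(i)}$; the denominator of $h$ is a power of $\det(MSM^T)\det(\Omega_{-i,-i})$, which is nonzero precisely where the rows of $X_i$ are linearly independent — a dense open set, by the standing hypothesis that~\ref{A1} holds generically. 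The identity of Lemma~\ref{thm:detForm} also shows $h=\det(I-\tilde B)$ where $\tilde B$ is $B$ with its $i$-th row replaced by $\hat B_{i,\pa(i)}$, so~\ref{A2} fails precisely when this substitution makes $I-\tilde B$ singular.

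It therefore suffices to produce a single triple with~\ref{A1} holding at which $h\neq 0$. Since $c_{i,0}=\det((I-B)_{-i,-i})$ and $c_{i,\pa(i)}$ are determined by $B_{-i}$, I would first fix $(B,\Omega)$ generic: by the hypothesis this makes $M$ of full row rank $m$ and $\bigl(\begin{smallmatrix}M\\ e_i^T\end{smallmatrix}\bigr)$ of rank $m+1$ (equivalently, $e_i$ is not in the row span of $M$), and, since $c_{i,0}=1$ at $B=0$, it also makes $c_{i,0}\neq 0$. If $c_{i,\pa(i)}=0$ — which happens, e.g., whenever $i$ lies on no directed cycle of $G$ — then $h=c_{i,0}\neq 0$ for any full-rank $Y$ and we are done; so assume $c_{i,\pa(i)}\neq 0$ and put $c=\bigl(\begin{smallmatrix}0\\ c_{i,\pa(i)}\end{smallmatrix}\bigr)$.

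Holding $(B,\Omega)$ fixed, I would then let $Y$ — equivalently the positive definite matrix $S$ — vary and show that $S\mapsto h(S)=c_{i,0}+\langle c,\hat\alpha(S)\rangle$ is non-constant on the connected cone of positive definite matrices, so that it takes a nonzero value. Differentiating $\hat\alpha(S)=(MSM^T)^{-1}MSe_i$ in a symmetric direction $\Delta$ gives $d\hat\alpha(\Delta)=(MSM^T)^{-1}M\Delta\,w(S)$ with $w(S):=e_i-M^T\hat\alpha(S)$, hence $dh(\Delta)=\langle M^T(MSM^T)^{-1}c,\ \Delta\,w(S)\rangle$. Here $M^T(MSM^T)^{-1}c\neq 0$ because $M$ has full row rank and $c\neq 0$, and $w(S)\neq 0$ because $e_i$ is not in the row span of $M$; since for any two nonzero vectors $a,w$ there is a symmetric $\Delta$ with $\langle a,\Delta w\rangle\neq 0$, we get $dh\not\equiv 0$ at every positive definite $S$, so $h$ is non-constant. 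Taking $S$ with $h(S)\neq 0$ and any full-rank $Y$ with $\tfrac1N YY^T=S$ (which exists as $N\ge|V|$) yields a triple at which~\ref{A1} holds — full rank of $Y$ together with rank $m+1$ of $\bigl(\begin{smallmatrix}M\\ e_i^T\end{smallmatrix}\bigr)$ gives full rank of $\bigl(\begin{smallmatrix}Z_{\sib(i)}\\ Y_{\pa(i)\cup\{i\}}\end{smallmatrix}\bigr)$ — and at which $h\neq 0$. Hence $h$ is not the zero rational function, and~\ref{A2} holds generically.

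I expect the non-vanishing of $dh$ to be the one real obstacle. The shortcut familiar from the acyclic case is not available here: at the ``true'' covariance $S=\Sigma(B,\Omega)$ the regression of $Y_i$ on $Z_{\sib(i)},Y_{\pa(i)}$ does \emph{not} reproduce $B_{i,\pa(i)}$ once there are feedback cycles through $i$ (such cycles correlate $Y_{\pa(i)}$ with $\epsilon_i$), so one cannot simply evaluate $h$ at one natural point. The differential argument circumvents this, its only input beyond linear algebra being that $e_i$ is not in the row span of $M$, which is itself a consequence of~\ref{A1} being generically satisfied. The remaining ingredients — rationality of $h$, irreducibility of the parameter space, and the passage between ``generic $Y$'' and ``generic positive definite $S$'' — are routine.
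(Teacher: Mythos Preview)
Your argument is correct, but the paper does take the ``natural point'' shortcut you dismissed. Fixing generic $(B,\Omega)$ so that~\ref{A1} holds for every full-rank $Y$, the paper chooses $Y$ with $\tfrac{1}{N}YY^T=(I-B)^{-1}\Omega(I-B)^{-T}$. You are right that the least-squares vector $\hat\alpha$ need not equal $(\Omega_{i,\sib(i)},B_{i,\pa(i)})$ in the cyclic case, but the paper does not claim this. Its observation is that at this $Y$ the pair $(B,\Omega)$ is a global maximizer of $\ell_{G,Y}$, so the block objective $g_i$ from~\eqref{eq:min} \emph{has a minimizer}; since~\ref{A1} gives $y_0^2>0$, Theorem~\ref{thm:main-BCD} then rules out cases~(ii) and~(iii), forcing case~(i), which is precisely~\ref{A2}. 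Your differential argument is self-contained---it does not invoke Theorem~\ref{thm:main-BCD}---and would survive in settings where $h$ has no optimization interpretation, but here the paper's route is shorter and more conceptual: \ref{A2} is exactly the condition for the block update to admit a solution, and at the true parameters it trivially does.
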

\begin{proof}
  The matrix $X_i$ and the least squares vector $\hat B_{i,\pa(i)}$ in
  condition~\ref{A2} are rational functions of the triple
  $(Y,B,\Omega)$.  Hence, there is a polynomial $f(Y,B,\Omega)$ such
  that~\ref{A2} fails only if $f$ vanishes.  A polynomial that is not
  the zero polynomial has a zero set that is of reduced dimension and
  of measure zero \citep[Lemma 1]{okamoto:1973}.  Therefore, it
  suffices to show that~\ref{A2} holds for a single choice of
  $(Y,B,\Omega)$.
  
  By assumption, we may pick $B\in\mathbf{B}(G)$ and
  $\Omega\in\mathbf{\Omega}(G)$ such that~\ref{A1} holds for any full
  rank $Y$.  Take $Y$ such that
  $(I-B)^{-1}\Omega(I-B)^{-T} = \frac{1}{N}YY^T$.
  When $Y$ has full rank, the normal distribution with covariance
  matrix $\frac{1}{N}YY^T$ has maximal likelihood.  Therefore,
  $\Omega$ and $B$ are maximizers of the log-likelihood function
  $\ell_{G,Y}$.  Consider now the block update for node $i$.
  Because~\ref{A1} holds, the matrix $X_i$ has full rank and $Y_i$ is
  not in the span of $X_i$.  It follows that
  Theorem~\ref{thm:main-BCD} applies with $y_0^2>0$.  Since our
  special choice of $(Y,B,\Omega)$ guarantees the existence of an
  optimal solution, we must be in case (i) of the theorem.  The
  inequality defining this case corresponds to~\ref{A2}.
\end{proof}

The following theorem gives a combinatorial characterization of the
graphs for which the BCD algorithm is well-defined.  It readily
follows from the above results, as we show in
Appendix \ref{sec:unique_updatesProof}.

\begin{theorem} \label{thm:unique_updates} For a mixed graph
  $G=(V,E_{\to},E_{\bi})$, the following two statements are equivalent:
  \begin{enumerate}
  \item[(a)] For all $i\in V$, the induced subgraph $G_{-i}$ contains
    a system of half-collider paths from a subset of
    $V \setminus (\pa(i)\cup\{i\})$ to $\sib(i)$ such that the
    bi-directed portions are pairwise disjoint.
  \item[(b)] For generic triples
    $(Y,B_0,\Omega_0)\in\mathbb{R}^{V\times
      N}\times\mathbf{B}(G)\times\mathbf{\Omega}(G)$, any finite
    number of iterations of the BCD algorithm for $\ell_{G,Y}$ have
    unique and feasible block updates when 
    $(B_0,\Omega_0)$ is used as starting value.
  \end{enumerate}
\end{theorem}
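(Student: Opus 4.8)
The plan is to prove the two implications separately, using Theorem~\ref{thm:A1A2} to reduce ``the block update at node $i$ has a unique feasible solution'' to ``\ref{A1} and~\ref{A2} hold at node $i$'', and Propositions~\ref{lem:fullRankX} and~\ref{prop:A2generic} to connect these conditions with the combinatorial statement in~(a). For $(b)\Rightarrow(a)$ I would argue by contraposition. Condition~\ref{A1} for node $i$ is the non-vanishing of a polynomial in the triple $(Y,B,\Omega)$ --- a Gram determinant, after clearing the non-zero factor $\det(\Omega_{-i,-i})$ --- so the set of triples where it holds is Zariski open in $\mathbb{R}^{V\times N}\times\mathbf{B}(G)\times\mathbf{\Omega}(G)$; since this product is Zariski dense in an affine space, that set is either empty or generic. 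Hence ``\ref{A1} is not generic at node $i$'' forces ``\ref{A1} fails at every admissible triple''. By Proposition~\ref{lem:fullRankX}, the failure of~(a) means precisely that~\ref{A1} is not generic at some node $i$, so at that node~\ref{A1} fails for every configuration with $B\in\mathbf{B}(G)$, $\Omega\in\mathbf{\Omega}(G)$, and full-rank $Y$. Because the BCD algorithm keeps $(B,\Omega)\in\mathbf{B}(G)\times\mathbf{\Omega}(G)$ and $Y$ of full rank, and every sweep updates node $i$, Theorem~\ref{thm:A1A2} shows that this block update never has a unique feasible solution, so~(b) fails (even for non-generic starting triples).

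For $(a)\Rightarrow(b)$ the delicate point is that the genericity in Propositions~\ref{lem:fullRankX} and~\ref{prop:A2generic} is phrased in the parametrization $(Y,B,\Omega)$, whereas after a few sweeps the BCD configuration has been pushed along an iterated rational map and may land in a non-generic locus. I would handle this with a rational-map-plus-witness argument. Fix a number of sweeps $T$, so that the algorithm performs $L=T|V|$ block updates. As long as the first $\ell-1$ updates were unique and feasible, the current $(B,\Omega)$ is a rational function $\Phi_\ell$ of $(Y,B_0,\Omega_0)$ via the closed forms of Theorem~\ref{thm:main-BCD}(i), and ``the $\ell$-th update is unique and feasible'' is equivalent to a polynomial $h_\ell(Y,B_0,\Omega_0)$ --- assembled from the~\ref{A1} Gram determinant, the~\ref{A2} inequality, and the denominators appearing in $\Phi_\ell$ --- being non-zero. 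The starting triples for which some of the first $L$ updates fail are thus contained in the zero set of $\prod_{\ell=1}^{L}h_\ell$, which by Okamoto's lemma \citep{okamoto:1973} is Lebesgue-null provided $\prod_\ell h_\ell$ is not the zero polynomial.

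It remains to exhibit one triple at which $\prod_\ell h_\ell\ne0$, which I would construct as in the proof of Proposition~\ref{prop:A2generic}. Since~(a) holds, Proposition~\ref{lem:fullRankX} makes~\ref{A1} generic at every node, and a finite intersection of generic conditions is generic, so one can pick $(B_0,\Omega_0)\in\mathbf{B}(G)\times\mathbf{\Omega}(G)$ for which~\ref{A1} holds at every node for every full-rank $Y$; then choose $Y$ with $\tfrac1N YY^T=(I-B_0)^{-1}\Omega_0(I-B_0)^{-T}$, so that $Y$ has full rank and $(B_0,\Omega_0)$ is a global maximizer of $\ell_{G,Y}$. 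Running the algorithm from $(B_0,\Omega_0)$, each block update at a node $i$ has~\ref{A1} by construction, hence $Y_i$ is not in the span of its design matrix and Theorem~\ref{thm:main-BCD} applies with $y_0^2>0$; since the corresponding entries of the global maximizer $(B_0,\Omega_0)$ attain the block-wise maximum of $\ell_{G,Y}$, a minimizer exists, so we are in case~(i), which is exactly~\ref{A2}, and the update is unique and feasible --- and by uniqueness it returns the relevant entries of $(B_0,\Omega_0)$ itself. Thus the configuration never leaves $(B_0,\Omega_0)$, all $L$ updates are well-defined, and $\prod_\ell h_\ell\ne0$. Hence the set of starting triples for which some iteration ever fails is the countable union over $T$ of the corresponding null sets, which is null, and its complement is the generic set claimed in~(b). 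The main obstacle is this last direction: propagating~\ref{A1} and~\ref{A2} along the iterated rational BCD map rather than checking them at a single generic point, and the trick of parking the algorithm at a global maximizer, where it is a fixed point, is what makes the witness construction work.
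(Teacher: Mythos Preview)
Your proposal is correct and follows essentially the same line as the paper's proof: the contrapositive $(\neg a)\Rightarrow(\neg b)$ via Proposition~\ref{lem:fullRankX} and Theorem~\ref{thm:A1A2}, and the forward direction $(a)\Rightarrow(b)$ via the rational-map/Okamoto argument with a witness triple chosen so that $(B_0,\Omega_0)$ is a global maximizer of $\ell_{G,Y}$ and hence a fixed point of every block update. Your write-up is in fact somewhat more explicit than the paper's in two places---the dichotomy argument that ``\ref{A1} not generic'' forces ``\ref{A1} fails on all admissible triples'' (via the Gram determinant after clearing $\det(\Omega_{-i,-i})$), and the final countable-union step over $T$ to get a single generic set on which \emph{all} finite runs succeed---but these are elaborations rather than a different approach.
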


In \cite{drton:eichler:richardson:2009}, the focus was on bow-free
acyclic graphs, where bow-free means that there do not exist two nodes
$i$ and $j$ with both $i\to j$ and $i\bi j$ in $G$.  For such graphs,
the BCD algorithm is easily seen to be well-defined.  More generally,
by taking $S_i=\sib(i)$ we obtain the following generalization to
graphs that may contain directed cycles.

\begin{proposition}
  \label{prop:bow-free}
  If $G$ is a simple mixed graph, i.e., every pair of nodes is
  incident to at most one edge, then condition (a) in
  Theorem~\ref{thm:unique_updates} holds.
\end{proposition}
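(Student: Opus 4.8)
The plan is to verify condition~(a) of Theorem~\ref{thm:unique_updates} directly, exhibiting for each node $i$ the simplest conceivable witness: a system of half-collider paths all of whose members are trivial single-vertex paths. First I would fix $i\in V$ and take both the start and end sets to be $\sib(i)$. For every $j\in\sib(i)$, let $\pi^j$ be the empty half-collider path consisting of the single node $j$; by the conventions introduced right after Figure~\ref{eq:coHalfTrek}, this is a legitimate half-collider path whose bi-directed portion is the singleton $\{j\}$. Each $\pi^j$ lies in the induced subgraph $G_{-i}$ since $j\neq i$, and the collection $\{\pi^j:j\in\sib(i)\}$ plainly realises every element of $\sib(i)$ as both a first and a last node, so it is a system of half-collider paths from $\sib(i)$ to $\sib(i)$.

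Next I would check the two remaining requirements. The bi-directed portions of the $\pi^j$ are the singletons $\{j\}$, which are pairwise disjoint because the elements $j\in\sib(i)$ are distinct. It then only remains to confirm that the start set $\sib(i)$ is contained in $V\setminus(\pa(i)\cup\{i\})$. Clearly $i\notin\sib(i)$; and if some $j$ lay in $\sib(i)\cap\pa(i)$, then $G$ would contain both the edge $i\bi j$ and the edge $j\to i$, i.e., two distinct edges incident to the pair $\{i,j\}$, contradicting that $G$ is simple. Hence $\sib(i)\cap\pa(i)=\emptyset$, so $\sib(i)$ is an admissible start set, and (together with the empty-system degenerate case when $\sib(i)=\emptyset$) condition~(b) of Proposition~\ref{lem:fullRankX} is met at $i$.

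Since $i\in V$ was arbitrary, condition~(a) of Theorem~\ref{thm:unique_updates} holds at every node, which is exactly the assertion. I do not expect any genuine obstacle here: the only step requiring attention is invoking the correct convention --- that a single-vertex path counts as a half-collider path with bi-directed portion equal to that vertex --- after which simplicity of $G$ instantly delivers the disjointness of $\sib(i)$ and $\pa(i)$ needed to make $\sib(i)$ a valid start set. In particular the argument never uses acyclicity, so it genuinely extends the bow-free acyclic setting of \citet{drton:eichler:richardson:2009} to all simple mixed graphs, cyclic or not.
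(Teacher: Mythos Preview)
Your proof is correct and follows exactly the approach the paper indicates: the text immediately preceding the proposition says ``by taking $S_i=\sib(i)$ we obtain the following generalization,'' and your argument is precisely the natural elaboration of that hint---use the trivial single-vertex half-collider paths at each sibling, and invoke simplicity to guarantee $\sib(i)\cap\pa(i)=\emptyset$. There is nothing to add.
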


When the graph $G$ is not simple, checking condition (a) from
Theorem~\ref{thm:unique_updates} is more involved.  It can, however,
be checked in polynomial time.

\begin{proposition}
  \label{prop:poly-time}
  For any mixed graph $G=(V,E_{\to},E_{\bi})$, condition (a) in
  Theorem~\ref{thm:unique_updates} can be checked in
  $\mathcal{O}(|V|^5)$ operations.
\end{proposition}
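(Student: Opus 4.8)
The plan is to reduce condition (a) from Theorem~\ref{thm:unique_updates} to a sequence of $|V|$ instances of a flow or bipartite matching problem, one for each node $i\in V$. Fix a node $i$. By Proposition~\ref{lem:fullRankX}, I need to decide whether the induced subgraph $G_{-i}$ contains a system of half-collider paths from a subset of $V\setminus(\pa(i)\cup\{i\})$ to $\sib(i)$ whose bi-directed portions are pairwise disjoint. The key structural observation is that a half-collider path is either (i) a single directed edge $a\to b$ followed by a bi-directed walk through distinct vertices ending at a target in $\sib(i)$, or (ii) a pure bi-directed walk, or a trivial single node. Because only the bi-directed portions must be disjoint, the initial directed edge of a type-(i) path contributes a vertex $a\in V\setminus(\pa(i)\cup\{i\})$ that is not subject to the disjointness constraint. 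Thus the problem is really about routing $|\sib(i)|$ vertex-disjoint bi-directed paths in the bi-directed part of $G_{-i}$, where each such path either starts at a ``free'' source vertex (any vertex of $V\setminus(\pa(i)\cup\{i\})$, possibly after traversing one directed edge into the bi-directed component) and ends at a distinct element of $\sib(i)$.

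Concretely, I would build an auxiliary directed graph $H_i$ as follows: split each vertex $v\in V\setminus\{i\}$ into $v^{\mathrm{in}}$ and $v^{\mathrm{out}}$ joined by an arc of capacity $1$ (to enforce vertex-disjointness of the bi-directed portions); for each bi-directed edge $u\bi v$ in $G_{-i}$, add arcs $u^{\mathrm{out}}\to v^{\mathrm{in}}$ and $v^{\mathrm{out}}\to u^{\mathrm{in}}$; add a super-source $s$ with an arc to $v^{\mathrm{in}}$ for every $v\in V\setminus(\pa(i)\cup\{i\})$ and also an arc to $w^{\mathrm{in}}$ whenever there is a directed edge $a\to w$ in $G_{-i}$ with $a\in V\setminus(\pa(i)\cup\{i\})$ (this encodes the optional leading directed edge of a half-collider path; note the bi-directed portion starts at $w$, so $a$ need not be protected); add a super-sink $t$ with an arc $v^{\mathrm{out}}\to t$ for every $v\in\sib(i)$. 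Then a system of half-collider paths with pairwise disjoint bi-directed portions exists if and only if the maximum $s$--$t$ flow in $H_i$ has value $|\sib(i)|$, equivalently (by Menger's theorem) there are $|\sib(i)|$ internally vertex-disjoint $s$--$t$ paths. I would verify the ``if and only if'' carefully: any such flow decomposes into $|\sib(i)|$ paths whose interiors avoid one another on the split vertices, which are exactly the bi-directed portions; conversely a valid path system yields such a flow after deleting any repeated-vertex cycles (a half-collider path's bi-directed portion may be taken to be a simple path without loss of generality, since revisiting a vertex only adds a cycle to the walk).

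For the complexity bound, $H_i$ has $\mathcal{O}(|V|)$ vertices and $\mathcal{O}(|V|^2)$ arcs, and the maximum flow value is at most $|V|$; by augmenting-path methods (e.g.\ Ford--Fulkerson with BFS, or simply $|\sib(i)|$ augmentations) the max flow is computed in $\mathcal{O}(|V|\cdot|E(H_i)|)=\mathcal{O}(|V|^3)$ operations. Repeating over all $i\in V$ gives $\mathcal{O}(|V|^4)$, which is within the claimed $\mathcal{O}(|V|^5)$; a looser analysis of the per-node flow computation (e.g.\ bounding each augmentation by $\mathcal{O}(|V|^2)$ edge inspections and allowing up to $\mathcal{O}(|V|^2)$ augmentations in a naive implementation) still lands at $\mathcal{O}(|V|^4)$ per node and $\mathcal{O}(|V|^5)$ overall, matching the statement. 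I would state the bound with enough slack that the exact choice of flow algorithm is immaterial.

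The main obstacle is the correctness of the reduction, not the running-time accounting: I need to argue precisely that allowing an optional leading directed edge is correctly captured by the extra super-source arcs, and that the disjointness requirement on \emph{bi-directed portions only} (rather than on all vertices of the paths) corresponds exactly to the vertex-capacity-$1$ constraint placed on the split vertices while leaving the source side unconstrained. A subtle point to check is that a half-collider path of the trivial form $\fbox{$i_1$}$ (a single node, no edges) is also allowed as a member of the system, so a target $w\in\sib(i)$ that happens to lie in $V\setminus(\pa(i)\cup\{i\})$ can be ``served by itself''; this is handled by the arc $s\to w^{\mathrm{in}}$ together with the internal arc $w^{\mathrm{in}}\to w^{\mathrm{out}}$ and $w^{\mathrm{out}}\to t$, which is already present. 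Once these correspondences are pinned down, the equivalence with a max-flow computation and hence the polynomial-time claim follow routinely.
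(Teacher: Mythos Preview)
Your overall strategy---reduce the per-node check to a max-flow computation on an auxiliary network, then sum over $i\in V$---matches the paper's approach, and your complexity bookkeeping is fine. However, your network $H_i$ has a correctness gap: it does not enforce that the \emph{starting} nodes of the half-collider paths are distinct. The definition of a system of half-collider paths from $S_b$ to $\sib(i)$ requires $|S_b|=|\sib(i)|$, so the first nodes of the paths must be pairwise different. In your construction you add an arc $s\to w^{\mathrm{in}}$ whenever some $a\in V\setminus(\pa(i)\cup\{i\})$ has $a\to w$, and you explicitly say ``$a$ need not be protected''; but then two flow units can enter via $s\to w_1^{\mathrm{in}}$ and $s\to w_2^{\mathrm{in}}$ that both correspond to the \emph{same} starting vertex $a$ (with $a\to w_1$ and $a\to w_2$), and there is no valid system with $|S_b|=2$ in that situation.

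Concretely, take $\sib(i)=\{s_1,s_2\}\subseteq\pa(i)$ (two bows at $i$), a single node $a\notin\pa(i)\cup\{i\}$ with directed edges $a\to s_1$ and $a\to s_2$, and no bi-directed edges in $G_{-i}$. Your $H_i$ admits flow $2$ via $s\to s_1^{\mathrm{in}}\to s_1^{\mathrm{out}}\to t$ and $s\to s_2^{\mathrm{in}}\to s_2^{\mathrm{out}}\to t$, yet the only candidate starting set is $\{a\}$, so no system with $|S_b|=2$ exists and condition (a) fails at $i$. The paper's network avoids this by introducing, for each potential start $j$, a capacity-$1$ ``bottleneck'' node through which both the directed-start role and the bi-directed-start role of $j$ must pass, so that $j$ can originate at most one path in the system. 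Adding such a layer (a node $a^{\mathrm{src}}$ of capacity $1$ with $s\to a^{\mathrm{src}}$ and arcs $a^{\mathrm{src}}\to w^{\mathrm{in}}$ for each $a\to w$, plus $a^{\mathrm{src}}\to a^{\mathrm{in}}$ when $a$ can itself begin a bi-directed portion) repairs your reduction without affecting the $\mathcal{O}(|V|^5)$ bound.
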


The proof, which is deferred to Appendix~\ref{sec:poly-time}, casts
checking the condition as a network flow problem.

\subsection{Identifiability}
\label{sec:block-ident}

There is a close connection between well-defined block updates and
parameter identifiability.  Suppose the data matrix $Y$ is such that
the sample covariance is
$S=\frac{1}{N}YY^T = (I-B)^{-1}\Omega(I-B)^{-T}$ for a pair
$(B,\Omega)\in\mathbf{B}(G)\times \mathbf{\Omega}(G)$.  Consider the
block update of the $i$-th row of $B$ and $i$-th row and column of
$\Omega$.  Based on Theorem~\ref{thm:main-BCD}, if the update does
not have a unique solution then there is an infinite set of solutions
$(B',\Omega')$.  Each such solution $(B',\Omega')$ must have
$(I-B')^{-1}\Omega'(I-B')^{-T}$ equal to $S$ because $S$ is the unique
covariance matrix with maximum likelihood.  Hence, there is an
infinite set of parameters $(B',\Omega')$ that define the same
normal distribution as $(B,\Omega)$.  

\begin{corollary}
 \label{cor:identifiable}  
 If the graphical condition in statement (a) of
 Theorem~\ref{thm:unique_updates} fails for the graph $G$, then the
 from Lemma \ref{lem:fullRankX} does not hold for any $i \in V$, then
 the parameters of model $\mathbf{N}(G)$ are not identifiable.
\end{corollary}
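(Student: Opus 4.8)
The plan is to make rigorous the argument sketched immediately before the statement. Assume the graphical condition in statement~(a) of Theorem~\ref{thm:unique_updates} fails, and fix a node $i\in V$ witnessing the failure, so that $G_{-i}$ contains no system of half-collider paths from a subset of $V\setminus(\pa(i)\cup\{i\})$ to $\sib(i)$ with pairwise disjoint bi-directed portions. By Proposition~\ref{lem:fullRankX}, condition~\ref{A1} then does not hold for generic triples $(Y,B,\Omega)$. Since \ref{A1} asserts that a matrix whose rows are fixed rational functions of $(Y,B,\Omega)$ has full row rank, it is an open algebraic condition, and on the irreducible parameter space $\mathbb{R}^{V\times N}\times\mathbf{B}(G)\times\mathbf{\Omega}(G)$ such a condition holds either on a dense subset or nowhere; here it holds nowhere. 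Moreover, when $Y$ has rank $|V|$ the matrix in~\ref{A1} equals $M(B,\Omega)\,Y$ for a matrix $M(B,\Omega)$ depending on $(B,\Omega)$ alone, and $M(B,\Omega)\,Y$ has full row rank precisely when $M(B,\Omega)$ does. Hence \ref{A1} fails for \emph{every} $(B,\Omega)\in\mathbf{B}(G)\times\mathbf{\Omega}(G)$ paired with \emph{every} full-rank $Y\in\mathbb{R}^{V\times N}$.

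Now fix an arbitrary $(B,\Omega)\in\mathbf{B}(G)\times\mathbf{\Omega}(G)$ and put $\Sigma=(I-B)^{-1}\Omega(I-B)^{-T}$. As $\Sigma$ is positive definite and $N\ge|V|$, there is $Y\in\mathbb{R}^{V\times N}$ of rank $|V|$ with $S:=\frac{1}{N}YY^T=\Sigma$, for example $Y=[\sqrt N\,\Sigma^{1/2}\mid 0]$. The Gaussian log-likelihood regarded as a function of the covariance matrix alone is uniquely maximized at $S$, and $\mathcal{N}(0,S)\in\mathbf{N}(G)$ is realized by $(B,\Omega)$; hence $(B,\Omega)$ is a global maximizer of $\ell_{G,Y}$. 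Passing to the $i$-th block update with $B_{-i}$ and $\Omega_{-i,-i}$ fixed at their values in $(B,\Omega)$, it follows that $(B_{i,\pa(i)},\Omega_{i,\sib(i)})$ is optimal for that update; that is, $g_i$ from~(\ref{eq:min}) attains its infimum, at the vector $\alpha^\star=(\Omega_{i,\sib(i)},B_{i,\pa(i)})^T$.

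It remains to pass from one minimizer to infinitely many. Because \ref{A1} fails for the triple $(Y,B,\Omega)$, the ``$\Longrightarrow$'' direction in the proof of Theorem~\ref{thm:A1A2} shows that $g_i$ either has more than one minimizer or is minimized uniquely by the least-squares vector $\hat\alpha=(X^TX)^{-1}X^Ty$ with $\|y-X\hat\alpha\|=0$; the latter is ruled out, since $\alpha^\star$ is a minimizer and, by Lemma~\ref{lem:wii.-i:profile}, $\|y-X\alpha^\star\|=\|Y_i-B_{i,\pa(i)}Y_{\pa(i)}-\Omega_{i,\sib(i)}Z_{\sib(i)}\|>0$. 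So $g_i$ has more than one minimizer, and, inspecting the two ways \ref{A1} can fail (via Theorem~\ref{thm:main-BCD} and Remark~\ref{rem:y02zeroA2fails}), the minimizer set is infinite: it contains either a whole line $\{\alpha^\star+t\eta:t\in\mathbb{R}\}$ through $\alpha^\star$ in the direction of a kernel vector $\eta$ of $X$, along which Lemma~\ref{lem:block-update-special} keeps the denominator of $g_i$ constant, or the family $\{\hat\alpha+\lambda(X^TX)^{-1}c:\lambda\ne0\}$ of case~(ii) of Theorem~\ref{thm:main-BCD}. Each minimizer $(B_{i,\pa(i)}',\Omega_{i,\sib(i)}')$ extends, through $\hat\omega_{ii.-i}'$ from~(\ref{eq:varestimate}) and $\omega_{ii}'=\hat\omega_{ii.-i}'+\Omega_{i,-i}'\Omega_{-i,-i}^{-1}\Omega_{-i,i}'$, to a pair $(B',\Omega')\in\mathbf{B}(G)\times\mathbf{\Omega}(G)$: $\Omega'$ is positive definite by Lemma~\ref{lem:wii.-i:profile}, and $I-B'$ is invertible because $\det(I-B')=c_{i,0}+B_{i,\pa(i)}'c_{i,\pa(i)}$ by Lemma~\ref{thm:detForm} and this is nonzero along both families. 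For each such $(B',\Omega')$ one has $\ell_{G,Y}(B',\Omega')=\ell_{G,Y}(B,\Omega)$, the global maximum, so since $S$ is the unique covariance matrix of maximal likelihood, $(I-B')^{-1}\Omega'(I-B')^{-T}=S=(I-B)^{-1}\Omega(I-B)^{-T}$. Thus infinitely many distinct parameter pairs induce $\mathcal{N}(0,S)$, and the parameters of $\mathbf{N}(G)$ are not identifiable.

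The main obstacle is the genericity bookkeeping of the first paragraph, namely upgrading Proposition~\ref{lem:fullRankX}'s conclusion from ``\ref{A1} fails for generic triples'' to ``\ref{A1} fails for the particular triple obtained by fitting $Y$ to a chosen $(B,\Omega)$''. This rests on \ref{A1} being an open algebraic condition on an irreducible parameter space, together with the fact that for full-rank $Y$ it depends on $(B,\Omega)$ only. Everything afterwards is bookkeeping: once a minimizer of the block update is known to exist --- because $(B,\Omega)$ is a global maximizer of $\ell_{G,Y}$ --- and \ref{A1} is known to fail, Theorem~\ref{thm:A1A2}, Theorem~\ref{thm:main-BCD} and Remark~\ref{rem:y02zeroA2fails} yield the infinite family of observationally equivalent parameterizations.
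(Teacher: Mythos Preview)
Your proof is correct and follows the same approach as the paper, which only offers the informal paragraph preceding the corollary as justification. You make rigorous precisely the points the paper glosses over: (i) that failure of the half-collider criterion at node $i$ forces \ref{A1} to fail for \emph{every} $(B,\Omega)$, not merely generically (your open-condition/Okamoto argument), and (ii) that the resulting non-uniqueness in the block update translates into an infinite family of feasible $(B',\Omega')\in\mathbf{B}(G)\times\mathbf{\Omega}(G)$ with the same covariance. Two small remarks on presentation: calling the parameter space ``irreducible'' is not quite right (e.g.\ $\mathbf{B}(G)$ need not be connected), but your argument only uses Okamoto's lemma on the ambient affine space, which is fine; and the appeal to Lemma~\ref{lem:wii.-i:profile} for positive definiteness of $\Omega'$ is slightly circular---better to note directly that along the kernel line the residual norm is constant, and along the family of Theorem~\ref{thm:main-BCD}(ii) it equals $|\lambda|\sqrt{c^T(X^TX)^{-1}c}>0$.
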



\section{Simulation studies}
\label{sec:simulations}

In this section, we analyze the performance of our BCD algorithm in
two contexts. First, we use it to compare the fit of two nested models
(one of which is cyclic) for data on protein abundances. Second, we
examine the problem of parameter estimation in a specified model.
There we compare our algorithm on a number of simulated graphs against
the fitting routine from the `sem' package in R \citep{fox:2006,
  R:2011}.

\subsection{Protein-signaling network}\label{sec:sachs_data}
Figure 2 in \cite{sachs:2005} presents a protein-signaling network
involving 24 molecules.  Abundance measurements are available for 11
of these.  The remaining 13 are unobserved.  For our illustration, we
select two plausible mixed graphs over the 11 observed variables.  The
graphs differ only by the presence of a directed edge that induces a
cycle and a bow; see Figure \ref{fig:protein_network}.  The
edge PIP2 $\to$ PIP3, which makes for the difference, is
highlighted in red.  Before proceeding to our analysis, we note that
the results in \citet{sachs:2005} are based on discretized data and
are thus not directly comparable to our computations.

\begin{figure}[t]
  \centering

	\begin{tikzpicture}[->,>=triangle 45,shorten >=1pt,
            auto,
            main node/.style={ellipse,inner sep=0pt,fill=gray!20,draw,font=\sffamily,
             minimum width = 1.2cm, minimum height = .7cm}]  
          
            \node[main node] (PKC) {PKC};
            \node[main node] (JNK) [below=3cm of PKC] {JNK};
			\node[main node] (PIP2) [left=3cm of JNK] {PIP2};
            \node[main node] (PLCg) [above=1cm of PIP2] {PLCg}; 
          	\node[main node] (PIP3) [right=2cm of PLCg] {PIP3}; 
          	\node[main node] (Akt) [right=1.5cm of PIP3] {Akt};
          	\node[main node] (P38) [right=1.5cm of Akt] {P38};
          	\node[main node] (Raf) [right=1.8cm of P38] {Raf};
          	\node[main node] (PKA) at (P38 |- JNK) {PKA};
          	\node[main node] (Mek) at (PKA -| Raf) {Mek};
          	\node[main node] (Erk) [below=.75cm of Mek] {Erk};

          	 \path[color=black,style={->}]
          	 (PKC) edge node {} (JNK)
          	 (PKC) edge node {} (Raf)
          	 (PKC) edge node {} (P38)
          	 ;
          	 
          	 \path[color=black,style={->}]
          	 (PLCg) edge node {} (PKC)
          	 (PLCg) edge node {} (PIP2);  
				
			\path[color=black,style={->}]
			(PIP3) edge node {} (PLCg)
			(PIP3) edge node {} (Akt)
			;
       
       		\path[color=black,style={->}]
       		(Raf) edge node {} (Mek)
       		;
       		
       		\path[color=black,style={->}]
       		(PIP2) edge node {} (PKC)
       		;
       		
       		\path[color=black!20!red,style={->, bend right}]
       		(PIP2) edge node {} (PIP3)
       		;
       
       		\path[color=black,style={->}]
       		(PKA) edge node {} (JNK)
       		(PKA) edge node {} (Akt)
       		(PKA) edge node {} (P38)
       		(PKA) edge node {} (Raf)
       		(PKA) edge node {} (Erk)
       		;
       
       		\path[color=black,style={->}]
       		(Mek) edge node {} (Erk)
       		;
       		
       		\path[color=black,style={<->, dashed}]
			(PIP2) edge node {} (PIP3)
			(Raf) edge node {} (PIP2)
       		;
       		
       		\path[color=black,style={<->, dashed, bend right}]
       					(Raf) edge node {} (PIP3)
       		       		;
       
            \end{tikzpicture}
  \caption{Plausible mixed graph for the protein-signaling network dataset. The relevant acyclic sub-model can be formed by removing the red directed edge from PIP2 to PIP3.}
  \label{fig:protein_network}
\end{figure}
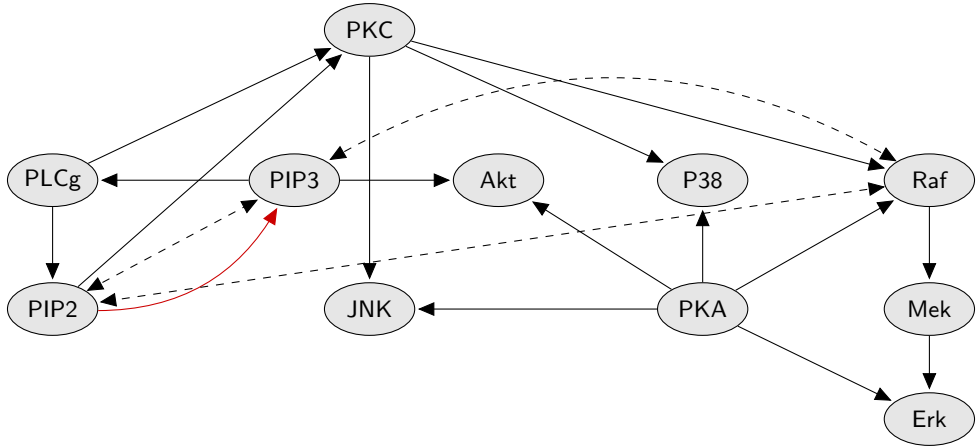

We proceed by comparing the two candidate models via the likelihood
ratio test. The data we consider consist of 11 simultaneously observed
signaling molecules measured independently across $N=853$ individual
primary human immune system cells. Specifically, we consider the data
from experimental condition CD3+CD28 and center/rescale the data,
ensuring that each variable has zero mean and variance one. Although
the likelihood ratio test statistic is invariant to scale, the
rescaling improves the conditioning of the sample covariance matrix
which improves the performance of BCD.

The corresponding likelihood ratio test statistic for the data is
.075, and under the standard $\chi^2_1$ asymptotic distribution for
the null hypothesis, this corresponds to a p-value of 0.78.  However,
in the considered models it is not immediately clear whether a
$\chi^2_1$ approximation has (asymptotic) validity, as the models
generally have a singular parameter space \citep{drton:2009}.
Therefore, we enlist subsampling as a guard against a possible
non-standard asymptotic distribution. Subsampling only requires the
existence of a limiting distribution for the likelihood ratio
statistic \citep[Chapter 2.6]{subsampling}. This limiting
distribution, while not necessarily chi-squared, is guaranteed to
always exist \citep{drton:2009}. Each random subsample consists of $b$
observations where $b$ is chosen large enough to approximate the true
asymptotic distribution under the null, but small compared to $N=853$
to still provide reasonable power under the alternative. We consider
5000 subsamples of sizes $b = 30$ and $b=50$.

\begin{figure}[t]
  \centering
  \includegraphics[scale=.5]{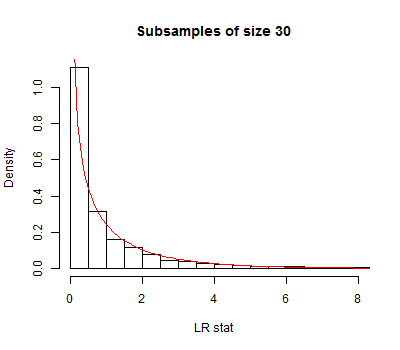}
  \includegraphics[scale=.5]{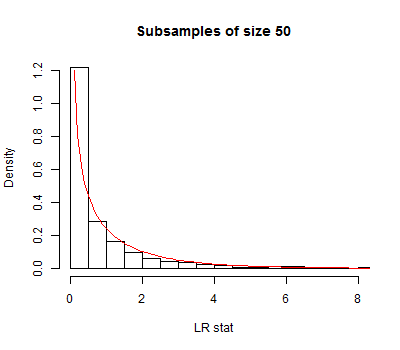}
  \caption{Histograms for the likelihood ratio test statistic for 5000
    subsamples of size 30 and 50, respectively. The superimposed red
    line depicts the $\chi^2_1$ density.}
  \label{fig:histograms}
\end{figure}

For each subsample, we first fit the sub-model corresponding to the
mixed graph depicted in Figure \ref{fig:protein_network} without the
edge PIP2 $\rightarrow$ PIP3.  For this procedure, we initialize the
free entries of $B$ using least squares regression estimates (i.e.,
fitting the model that ignores
the error correlations). We then calculate
the covariance between the regression residuals to estimate the
non-zero elements of $\Omega$. Although the sample covariance of the
regression residuals is positive definite, the resulting matrix which
also encodes the structural zeros may not be. To ensure that $\Omega$
is positive definite, we scale the off diagonal elements such that
$\sum_{i\neq j} |\omega_{ij}| = .9\times \omega_{ii}$ so that the
resulting matrix is diagonally dominant. After the BCD algorithm
converges to a stationary point in the sub-model, we take the fitted
values $\hat{B}$ and $\hat{\Omega}$ to initialize the algorithm run on
the model that includes the additional PIP2 $\rightarrow$ PIP3
edge. We evaluate the likelihood function at each of the two maxima
and formulate the corresponding likelihood ratio test statistic. The
choice of $\hat{B}$ and $\hat{\Omega}$ as initial values for
estimating the larger model guarantees that the test statistics are
non-negative.

Histograms for the subsampled log-likelihood ratio statistics are
shown in Figure \ref{fig:histograms}.  The empirical distributions for
$b=30$ and $b=50$ are seen to be similar to one another and also
rather close to a $\chi^2_1$ distribution. The observed test statistic
for the full data has empirical p-value of 0.76 and 0.73 for $b=30$
and $b=50$ respectively. These p-values are slightly smaller than the
p-value of 0.78 from $\chi^2_1$ approximation.  Altogether there is
little evidence to reject the sub-model in favor of the more
complicated cyclic model.

\subsection{Simulated data}
\label{sec:simulated_data}

We now demonstrate how the BCD algorithm behaves on different types of mixed graphs.  We consider the existing R package `sem' \citep{fox:2006} as an alternative and compare the performance of these algorithms for maximum likelihood estimation on simulated data.

To simulate a mixed graph, we begin with the empty graph on $\cardY$
nodes. For $0\leq k\leq \cardY$, we add directed edges
$1\to 2 \to \cdots \to (k-1) \to k \to 1$, creating a directed cycle
of length $k$. For all $p(p-1)/2-(k-1)$ remaining pairs of nodes
$(i,j)$ with $i<j$, we generate independent uniform random variables
$U_{ij} \sim U(0,1)$. If $U_{ij} \leq d$, we introduce the directed
edge $i\to j$. Alternatively, if $d < U_{ij} \leq b+ d$, we introduce
the bi-directed edge $i\bi j$. If $U_{ij} > b+d$, there is no edge
between $i$ and $j$. After all edges have been determined, we randomly
permute the node labels. This construction ensures that the resulting
mixed graph $G$ has the following properties:
\begin{enumerate}
\item[(i)] $G$ has a unique cycle of length $k$;
\item[(ii)] $G$ is bow-free and simple.
\end{enumerate}
For this simulation, we use 24 different configurations of
$(V,N,k,d,b)$, where $N$ is the sample size.  We examine graphs of
size $\cardY=10$ and $\cardY=20$ and $N=3\cardY/2$ and $N=10\cardY$
observations. In each of these 4 configurations, we consider 3
distinct choices of the maximum cycle length: $k=0$, $\cardY/5$, and
$2\cardY/5$. For each combination of $(\cardY,N,k)$, we let $d = 0.1$
and $d = 0.2$, fixing $b=d/2$ in each case. Note that in the case of
$k=0$, every generated graph will be acyclic and simple, the class of mixed graphs considered by \citet{drton:eichler:richardson:2009}. 

\begin{table}[htbp]
\begin{center}
\begin{tabular}{cccccccccc}
\hline
\hline
\multicolumn{4}{c}{} & \multicolumn{2}{c}{Convergence} & \multicolumn{1}{c}{Both} & \multicolumn{1}{c}{Both} & \multicolumn{2}{c}{Running time} \\
\multicolumn{1}{c}{$\cardY$} & \multicolumn{1}{c}{$N$} & \multicolumn{1}{c}{$k$} & \multicolumn{1}{c}{$d$} & \multicolumn{1}{c}{BCD} & \multicolumn{1}{c}{SEM} & converge & agree & BCD & SEM\\
\hline
10 & 15 & 0 & 0.1 & 1000 & 991 & 991 & 932 & 3.8 & 24.1 \\ 
  10 & 15 & 0 & 0.2 & 1000 & 949 & 949 & 884 & 9.5 & 31.8 \\ 
  10 & 15 & 2 & 0.1 & 1000 & 479 & 479 & 456 & 10.7 & 28.7 \\ 
  10 & 15 & 2 & 0.2 & 1000 & 559 & 559 & 518 & 16.0 & 36.2 \\ 
  10 & 15 & 4 & 0.1 & 997 & 672 & 672 & 637 & 10.7 & 30.5 \\ 
  10 & 15 & 4 & 0.2 & 997 & 553 & 553 & 520 & 16.7 & 38.0 \\ 
  10 & 100 & 0 & 0.1 & 1000 & 996 & 996 & 985 & 6.5 & 30.9 \\ 
  10 & 100 & 0 & 0.2 & 1000 & 991 & 991 & 991 & 20.9 & 53.3 \\ 
  10 & 100 & 2 & 0.1 & 1000 & 517 & 517 & 517 & 40.1 & 48.0 \\ 
  10 & 100 & 2 & 0.2 & 1000 & 635 & 635 & 635 & 51.5 & 58.9 \\ 
  10 & 100 & 4 & 0.1 & 999 & 726 & 726 & 725 & 33.4 & 50.2 \\ 
  10 & 100 & 4 & 0.2 & 998 & 688 & 688 & 688 & 46.3 & 63.0 \\ 
  20 & 30 & 0 & 0.1 & 1000 & 989 & 989 & 971 & 54.0 & 324.7 \\ 
  20 & 30 & 0 & 0.2 & 1000 & 921 & 921 & 881 & 166.7 & 550.5 \\ 
  20 & 30 & 4 & 0.1 & 999 & 836 & 836 & 824 & 77.3 & 319.5 \\ 
  20 & 30 & 4 & 0.2 & 998 & 731 & 731 & 701 & 197.0 & 652.6 \\ 
  20 & 30 & 8 & 0.1 & 1000 & 709 & 709 & 696 & 97.0 & 342.2 \\ 
  20 & 30 & 8 & 0.2 & 999 & 534 & 534 & 505 & 237.5 & 766.3 \\ 
  20 & 200 & 0 & 0.1 & 1000 & 998 & 998 & 993 & 119.8 & 330.1 \\ 
  20 & 200 & 0 & 0.2 & 1000 & 983 & 983 & 958 & 299.0 & 585.4 \\ 
  20 & 200 & 4 & 0.1 & 1000 & 847 & 847 & 829 & 199.5 & 356.8 \\ 
  20 & 200 & 4 & 0.2 & 999 & 806 & 806 & 773 & 359.6 & 712.3 \\ 
  20 & 200 & 8 & 0.1 & 999 & 765 & 765 & 755 & 257.6 & 409.8 \\ 
  20 & 200 & 8 & 0.2 & 1000 & 659 & 659 & 630 & 471.7 & 851.4 \\ 
\hline
\hline
\end{tabular}
\end{center}
\caption{Data simulated from a random distribution in a randomly
  generated mixed graph model is fit to the model using BCD and the
  quasi-Newton method invoked by `sem'. Each row summarizes 1000
  simulations. `Both agree' counts the cases with ML estimates equal up to small tolerance. Running time is average CPU time (in milliseconds) for the cases in which both algorithms converged and agreed.}
\label{table:results}
\end{table}

In each simulation, we generate a random mixed graph $G$ according to
the procedure above. We then select a random distribution from the
corresponding normal model $\mathbf{N}(G)$ by taking the covariance
matrix  to be
$ \Sigma = (I-B)^{-1}\Omega(I-B)^{-T}$ for $ B\in\mathbf{B}(G)$ and
$\Omega\in\mathbf{\Omega}(G)$ selected as follows.  We set all free,
off-diagonal entries of $B$ and $\Omega$ to independent realizations
from a $\mathcal{N}(0,1)$ distribution. The diagonal entries of
$\Omega$ are chosen as one more than the sum of the absolute values of
the entries in the corresponding row of $\Omega$ plus a random draw
from a $\chi^2_1$ distribution.  Hence, $\Omega$ is diagonally
dominant and positive definite.  The model $\mathbf{N}(G)$ is then fit
to a sample of size $N$ that is generated from the selected
distribution.  We use the routine `sem' and our BCD algorithm.  The
BCD algorithm isallowed to run for a maximum of 5000 iterations, at
which point divergence was assumed.
The BCD algorithm is initialized using the procedure described in
Section \ref{sec:sachs_data}.  The `sem' method is initialized by
default using a modification of the procedure described by
\citet{mcdonald1992procedure}.

Each row of Table \ref{table:results} corresponds to 1000
simulations at a configuration of $(V,N,k,d,b)$.  In
particular, we record how often each algorithm converges.  The columns
`both converge' and `both agree' report the number of simulations for
which both algorithms converged, and the number of these simulations
for which the resulting estimates were equal up to a small tolerance.
For the routine `sem', which uses a generic `nlm' Newton optimizer, it
is not uncommon that convergence occurs but yields estimates that are
not positive definite. In these cases, we consider the algorithm to
have not converged.
 
The last two columns show the average CPU running times (in
milliseconds) over simulations for which both methods converged and
agreed\footnote{The simulations were run on a laptop with a quad-core
  2.4Ghz processor.}. We caution that these times are not directly
comparable, since `sem' computes a number of other quantities of
interest in addition to the maximum likelihood estimate.   However, the
BCD algorithm is up to 6 times faster than `sem' in some instances. 
One potential reason is that when the graph is relatively sparse, many
of the nodes may only require a single BCD update.

\section{Discussion}
\label{sec:conclusion}

This work gives is an extension of the RICF algorithm from
\citet{drton:eichler:richardson:2009} to cyclic models.  The RICF
algorithm and its BCD extension iteratively perform partial
maximizations of the likelihood function via joint updates to the
parameter matrices $B$ and $\Omega$.  Each update problem admits a
unique solution.  Like its predecessor, the generalized algorithm is
guaranteed to produce feasible positive definite covariance matrices
after every iteration. Moreover, any accumulation point of the
sequence of estimated covariance matrices is necessarily either a
local maximum or a saddle point of the likelihood function.

Despite these desirable properties, the general scope of this
algorithm to cyclic models is not without limitations. As with any
iterative maximization procedure, there is no guarantee that
convergence of the algorithm is to a global maximum, due to possible
multi-modality of the likelihood function.  In addition, for certain
models the algorithm may be ill-defined, due to collinearity of the
covariates and pseudo-covariates in our update step.  However, we show
that the models for which this occurs are non-identifiable.  Moreover,
we give necessary and sufficient graphical conditions for generically
well defined updates, which were not previously known for the acyclic
case.

In some of our simulated examples the BCD algorithm, which does not
use any overall second-order information, needed many iterations to
meet a convergence criterion.  It is possible that in those cases a
hybrid method that also consider quasi-Newton steps would converge
more quickly.  Nevertheless, our numerical experiments 
in Section \ref{sec:simulated_data} show that the BCD algorithm is
competitive in terms of computation time with the generic optimization
tools as used in the R package `sem' all the while alleviating
convergence problems.

\bibliographystyle{imsart-nameyear}
\bibliography{BCD}	

\clearpage

\begin{appendices}

\section{Proofs for claims in Section \ref{sec:sem}}
\subsection{Proof of Lemma~\ref{lem:det}}
\label{app:lem:det}

\begin{replemma}{lem:det}
  Let $B=(\beta_{ij})\in\mathbf{B}(G)$
  for a mixed graph $G$. Then
  \[
    \det{(I-B)} \ = \sum_{\sigma\in\mathbf{S}_V(G)}(-1)^{n(\sigma)}
    \prod_{i\in V(\sigma)}\beta_{\sigma(i),i}. 
  \]
\end{replemma}

\begin{proof}
  By the Leibniz formula, 
  \begin{equation}
    \det(I-B) \;=\; \sum_{\sigma\in\mathbf{S}_V} \sign(\sigma) \prod_{i\in V} (I-B)_{\sigma(i),i} \;=\; \sum_{\sigma\in\mathbf{S}_V(G)} \sign(\sigma) \prod_{i\in V} (I-B)_{\sigma(i),i}. \label{eq:SvG}
  \end{equation}
  The second equality in (\ref{eq:SvG}) holds because for all
  $\sigma\notin\mathbf{S}_V(G)$ there exists an index $i$ with
  $\sigma(i)\not=i$ and $i\to\sigma(i)\not\in E_{\to}$, which implies that
  $(I-B)_{\sigma(i),i}=-B_{\sigma(i),i}=0$ for $B\in\mathbf{B}(G)$.
  For a permutation $\sigma\in\mathbf{S}_V(G)$, and a cycle
  $\gamma \in \mathcal{C}(\sigma)$, define $V(\gamma) \subseteq V$ to be the set
  of nodes contained in the cycle $\gamma$. We may then rewrite (\ref{eq:SvG})
  as 
  \begin{align}
    \sum_{\sigma\in\mathbf{S}_V(G)} \sign(\sigma) \prod_{i\in V}
    (I-B)_{\sigma(i),i} 
    &= \sum_{\sigma\in\mathbf{S}_V(G)}
      \prod_{\gamma\in\mathcal{C}(\sigma)}\left(\sign(\gamma)
      \prod_{i\in V(\gamma)} (I-B)_{\sigma(i),i}\right)
      \notag \\ 
    &= \sum_{\sigma\in\mathbf{S}_V(G)} \prod_{\gamma\in\mathcal{C}_2(\sigma)}\left(\sign(\gamma)(-1)^{V(\gamma)} \prod_{i\in V(\gamma)} \beta_{\sigma(i),i}\right) \notag \\
    &= \sum_{\sigma\in\mathbf{S}_V(G)}(-1)^{n(\sigma)} \prod_{i\in
      V(\sigma)}\beta_{\sigma(i),i}. \label{eq:minus1_simplify}
\end{align}
The last equation (\ref{eq:minus1_simplify}) is obtained from the fact that
$\sign(\gamma)(-1)^{V(\gamma)} = -1$ for every cycle
$\gamma\in\mathcal{C}_2(\sigma)$. This follows from noting that the sign of
every even-length cycle is -1 and the sign of every odd-length cycle
is 1.
\end{proof}

\subsection{Derivation of the likelihood equations} \label{chap:lik_eqns}
Recall that the likelihood function for normal structural equation models takes the form
\begin{align*}
\ell_{G,Y}(B,\Omega) &= \frac{N}{2}\log\det\left[(I-B)^T\Omega^{-1}(I-B)\right] - \frac{N}{2}\text{tr}\left[(I-B)^T\Omega^{-1}(I-B)S\right] \\
&= \frac{N}{2}\log\det\left[(I-B)^T(I-B)\right]- \frac{N}{2}\log\det(\Omega) - \frac{N}{2}\text{tr}\left[(I-B)^T\Omega^{-1}(I-B)S\right]. 
\end{align*}
Furthermore, recall that $\beta$ and $\omega$ are the vectors of free parameters in $B$ and $\Omega$ respectively. These vectors satisfy $\text{vec}(B) = P\beta$ and $\text{vec}(\Omega)=Q\omega$. 

The first derivatives of the log-likelihood function with respect to $\beta$ and $\omega$ are
\begin{align*}
\frac{\partial \ell_{G,Y}(B,\Omega)}{\partial \beta} &= \frac{\partial P\beta}{\partial\beta} \times  \frac{\partial \ell_{G,Y}(B,\Omega)}{\partial \text{vec}(B)} \\
& = P^T \ \text{vec}\left(\frac{\partial \ell_{G,Y}(B,\Omega)}{\partial B}\right) \\
&= -\frac{N}{2}P^T \ \text{vec}\left[2(I-B)^{-T}\right.\\
  &\qquad\qquad\qquad \left.+\frac{\partial}{\partial B} \text{tr}\left(\Omega^{-1}S - B^T\Omega^{-1}S-\Omega^{-1}BS+B^T\Omega^{-1}BS\right)\right] \\
&= -\frac{N}{2}P^T \ \text{vec}\left[2(I-B)^{-T}-2\Omega^{-1}S + \frac{\partial}{\partial B} \text{tr}\left(BSB^T\Omega^{-1}\right)\right] \\
&= -\frac{N}{2} P^T \ \text{vec}\left[2(I-B)^{-T}-2\Omega^{-1}(I-B)S\right] \\
&= N P^T \ \text{vec}\left[\Omega^{-1}(I-B)S -(I-B)^{-T} \right],  \\ \\
\frac{\partial \ell_{G,Y}(B,\Omega)}{\partial \omega} &= \frac{\partial Q\omega}{\partial\omega} \times  \frac{\partial \ell_{G,Y}(B,\Omega)}{\partial \text{vec}(\Omega)} \\
& = Q^T \ \text{vec}\left(\frac{\partial \ell_{G,Y}(B,\Omega)}{\partial \Omega}\right) \\
&= -\frac{N}{2}Q^T \
  \text{vec}\left[\Omega^{-1}-\Omega^{-1}(I-B)S(I-B)^T\Omega^{-1}\right].
\end{align*} 
\qed

\section{Proofs of claims in Section \ref{sec:cycles}}
\subsection{Proof of Lemma~\ref{lem:block-update-special}}
\label{sec:proof-prop:block-update-special}

\begin{replemma}{lem:block-update-special}
In~(\ref{eq:yXac}), the vector $c$ is orthogonal to the kernel of
  $X$.
\end{replemma}
\proof The kernel of $X$ is orthogonal to the span of $X^T$.  Hence,
we have to show that
\[
c=\begin{pmatrix} 0\\ c_{i,\pa(i)} \end{pmatrix}
\;\in\;\linspan\left(\begin{pmatrix} Z_{\sib(i)}\\ Y_{\pa(i)} 
   \end{pmatrix} \right).
\]
To be clear, the $N$ columns of the displayed matrix span a
subspace of $\mathbb{R}^{\sib(i)}\times\mathbb{R}^{\pa(i)}$.  We will
in fact show something stronger, namely,
\[
c=\begin{pmatrix} 0\\ c_{i,\pa(i)}\\-c_{i,0} \end{pmatrix}
\;\in\;\linspan\left(\begin{pmatrix} Z_{\sib(i)}\\ Y_{\pa(i)} \\Y_i
   \end{pmatrix} \right).
\]

For notational convenience, let 
\begin{equation}
  \label{eq:Delta}
\Delta  \;:=\; \Omega^{-1}_{-i,-i}(I - B)_{-i}.
\end{equation}
Then 
\[
\begin{pmatrix} Z_{\sib(i)}\\ Y_{\pa(i)} \\Y_i
   \end{pmatrix} = 
\begin{pmatrix} 
 0& \Delta_{ \sib(i) , \pa(i)} & \Delta_{ \sib(i)
    ,V \setminus (\pa(i)\cup\{i\})}\\
 0&I_{\pa(i)} & 0\\
  1&0&0
    \end{pmatrix} 
\begin{pmatrix}
  Y_i\\
  Y_{\pa(i)} \\ Y_{V\setminus(\pa(i)\cup\{i\})}
\end{pmatrix}.
\]
Since the data matrix $Y$ is assumed to have full rank, it suffices to
show that there is a vector $w\in\mathbb{R}^V$ such that
\begin{equation}
  \label{eq:a-ortho-X-to-show}
  \begin{pmatrix}
    0&\Delta_{ \sib(i) , \pa(i)} & \Delta_{ \sib(i)
    , V \setminus  (\pa(i)\cup\{i\})} \\
  0&I_{\pa(i)} & 0\\
  1&0&0
   \end{pmatrix}  w \;=\;\begin{pmatrix} 0\\ c_{i,\pa(i)}\\-c_{i,0} \end{pmatrix} .
\end{equation}

Consider any node $p \in \pa(i)$.  For a permutation
$\sigma\in\mathbf{S}_V$, i.e., a permutation of the vertex set $V$,
let $\gamma_i(\sigma)$ be the permutation cycle containing $i$.
Then, from Lemma \ref{lem:det}, the vector $c$ has the coordinate
indexed by $p$ equal to
\begin{align}
\notag
c_p &\;=\; \sum_{\substack{\sigma \in \textbf{S}_V(G) \\ \sigma:\sigma(p) = i}} (-1)^{n(\sigma)} \prod_{j \in V(\gamma_i(\sigma))\setminus \{p\}} \beta_{\sigma(j),j} \prod_{\gamma \in \mathcal{C}_2(\sigma)\setminus\{\gamma_i(\sigma)\}}\prod_{k \in V(\gamma)} \beta_{\sigma(k),k}\,.
\end{align}
Let $\Psi_{i}$ be the set of all directed cycles in the graph $G$ that
contain node $i$.  For later convenience, we also include in $\Psi_i$
a self-loop $i\to i$.  If $\gamma\in\Psi_i$, then write
$P^{\gamma}_{i:k}$ for the product of coefficients $\beta_{lj}$ for
edges $j\to l$ that lie on the directed path from $i$ to $k$ that is
part of cycle $\gamma$, with $P^{\gamma}_{i:i}=1$.  We set
$P^{\gamma}_{i:k}=0$ if $k\not \in \gamma$.  Then
\begin{align}
\label{eq:def-xi}
c_p &\;=\; \sum_{\gamma \in \Psi_{i} } P^{\gamma}_{i:p} \left(\sum_{\substack{\sigma \in \textbf{S}_V(G) \\ \sigma:\gamma \subseteq \sigma}} (-1)^{n(\sigma)} \prod_{\substack{\gamma' \in \mathcal{C}_2(\sigma)\\
 \gamma' \neq  \gamma }}\prod_{j \in V(\gamma')}
\beta_{\sigma(j),j}\right) \;=:\; \sum_{\gamma \in \Psi_{i} } P^{\gamma}_{i:p} \xi_{\gamma}.
\end{align}

Similarly, we have 
\[
c_{i,0} \;=\; -\sum_{\substack{\sigma \in \textbf{S}_V(G) \\
    \sigma:\sigma(i) = i}} (-1)^{n(\sigma)}  \prod_{\gamma \in
  \mathcal{C}_2(\sigma)\setminus\{(i)\}}\prod_{k \in V(\gamma)}
\beta_{\sigma(k),k}\,.
\]
The negative sign in front is due to the fact that in the matrix $I-B$
the off-diagonal entries are negated but the diagonal entries are not.
Having included the self-loop $i\to i$ in $\Psi_i$, we obtain that
\begin{equation}
\label{eq:-c0}
  c_{i,0} \;=\; - \sum_{\gamma \in \Psi_{i} } P^{\gamma}_{i:i}
  \xi_{\gamma}
\;=\; - \sum_{\gamma \in \Psi_{i} } \xi_{\gamma}.
\end{equation}

  Using the sums $\xi_\gamma$ from~(\ref{eq:def-xi}),
define $w$ to be the vector with coordinates
\begin{equation}
  \label{eq:w}
w_k = \sum_{\gamma \in \Psi_{i}} P^{\gamma}_{i:k} \xi_\gamma,  \quad k\in V.
\end{equation} 
By~(\ref{eq:def-xi}) and~(\ref{eq:-c0}), the vector $w$ satisfies the
equations in~(\ref{eq:a-ortho-X-to-show}) that are indexed by
$\pa(i)\cup\{i\}$.  We now show that it also satisfies those indexed
by $\sib(i)$.

Let $k\in V\setminus\{i\}$.  Since every directed path from $i$ to $k$
passes through one of the parents of $k$, we have
\[
P^\gamma_{i:k} \;=\; \sum_{p\in\pa(k)} P^\gamma_{i:p}\beta_{kp}
\]
for any cycle $\gamma\in\Psi_i$ that contains $k$.  Therefore,
\[
\left((I - B)_{-i} w \right)_k = w_k - \sum_{p \in \pa(k)} \beta_{kp}w_p
= \sum_{\gamma\in \Psi_i} P^\gamma_{i:k}\xi_\gamma - \sum_{p \in \pa(k)}
\beta_{kp}\sum_{\gamma\in \Psi_i} P^\gamma_{i:p}\xi_\gamma
=0.
\]
In other words, $w$ is in the kernel of $(I - B)_{-i}$.  This kernel
is contained in the kernel of $\Delta_{\sib}$, which yields~(\ref{eq:a-ortho-X-to-show}).
\qed

\subsection{Proof of Lemma \ref{lem:optimum}}
\label{app:ratio-lemma}

 \begin{replemma}{lem:optimum}
 
  For constants $a,b,c_0,c_1\in\mathbb{R}$
    with $c_1\neq0$, define the function
    \[
      f(x) = \frac{(a-x)^2+b^2}{(c_0+c_1x)^2}, \qquad
      x\in\mathbb{R}\setminus\left\{-c_0/c_1\right\}. 
    \]
    \begin{enumerate}[label=(\roman{*}), ref=(\roman{*}),leftmargin=5.0em]
    \item If $c_0+ac_1 \neq 0$, then $f$ is uniquely minimized by
      \[
        x = \frac{ac_0+a^2c_1+b^2c_1}{c_0+ac_1} = a +
        \frac{b^2c_1}{c_0+ac_1}.
      \]
    \item   If $c_0+ac_1=0$ and
      $b=0$, then $f$ is constant and equal to $1/c_1^2$.
    \item 
      If $c_0+ac_1 = 0$ and $b^2>0$, then $f$ does not achieve its
      minimum, and
      $\inf f = \lim_{x\to\pm\infty} f(x) = 1/c_1^2$.
    \end{enumerate}
  \end{replemma}

\proof
The lemma is concerned with the univariate rational function
\[
f(x) = \frac{(a-x)^2+b^2}{(c_{i,0}+c_1x)^2}, \qquad x\in\mathbb{R},
\]
where $a,b,c_{i,0},c_1\in\mathbb{R}$ are constants with $c_1\neq0$.  The
function $f$ is defined everywhere except at the point $x=-c_{i,0}/c_1$. The
limits of $f$ are
\begin{equation} \label{eq:limits}
\lim_{x\to\infty} f(x) = \lim_{x\to-\infty} f(x) = \frac{1}{c_1^2}.
\end{equation}
Note that
\begin{equation} \label{eq:first_derivative}
f'(x) = - \frac{2(ac_{i,0}+a^2c_1+b^2c_1-c_{i,0}x-ac_1x)}{(c_{i,0}+c_1x)^3}.
\end{equation}
Equating (\ref{eq:first_derivative}) to zero and solving results in one critical point:
\begin{equation} \label{eq:crit_pt}
x^\star = \frac{ac_{i,0}+a^2c_1+b^2c_1}{c_{i,0}+ac_1},
\end{equation}
which is finite if $c_{i,0}+ac_1 \neq 0$. Moreover, observe that
\begin{equation} \label{eq:crit_val}
f(x^\star) = \frac{b^2}{(c_{i,0}+ac_1)^2+b^2c_1^2} < \frac{1}{c_1^2}.
\end{equation}
The second derivative of $f$ is given by
\[
f''(x) = \frac{2(c_{i,0}^2+4ac_{i,0}c_1+3a^2c_1^2+3b^2c_1^2-2c_{i,0}c_1x-2ac_1^2x)}{(c_{i,0}+c_1x)^4},
\]
from which it is revealed that
\[
f''(x^\star) = \frac{2(c_{i,0}+ac_1)^4}{\left((c_{i,0}+ac_1)^2+b^2c_1^2\right)^3} >0.
\]
Hence, we see that $x^\star$ in (\ref{eq:crit_pt}) is the unique critical point and is a local minimum. Moreover, from (\ref{eq:limits}) and (\ref{eq:crit_val}), we see that $x^\star$ must be the global minimum. 
If instead $c_{i,0}+ac_1 = 0$ and $b\not=0$, then (\ref{eq:crit_pt}) reveals
that there are no critical points, and at the pole we have
$\lim_{x\to -c_{i,0}/c_1}f(x) = \infty$. It thus follows that $f$ achieves its
minimum at $x = \pm \infty$. The case of  $c_{i,0}+ac_1=0$ and $b=0$ is clear. \qed

\begin{figure}[t]
\centering
\includegraphics[scale = .75]{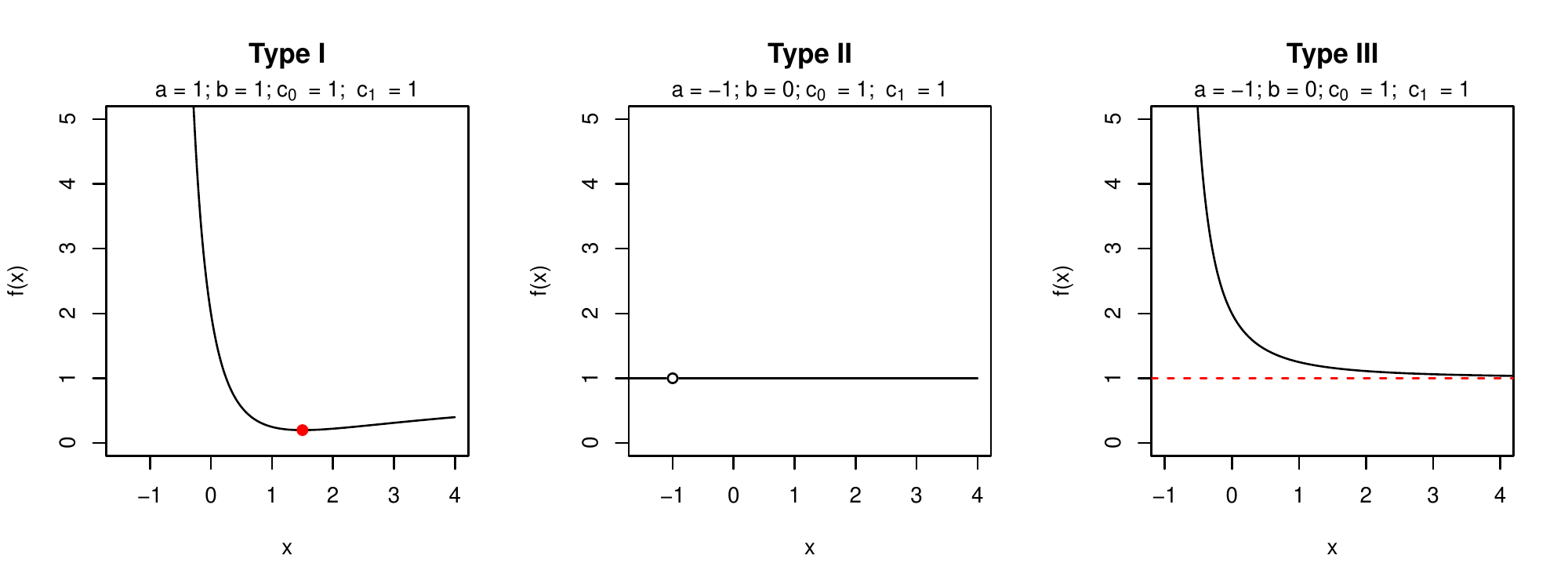}
\caption{Examples of functions from the three cases in Lemma
  \ref{lem:optimum}. A unique minimum is achieved only in Type I
  (indicated by the point).}
\end{figure}

\subsection{Proof of claim in Remark~\ref{rem:y02zeroA2fails}}
\label{app:proof:rem:y02zeroA2fails}

We use the notation from the proof of
Lemma~\ref{lem:block-update-special}.  Since $Y_i^T$ is in the span of
$X$, we have $y_0^2=0$ and the least squares vector $\hat\alpha$
satisfies that
  \[
    Y_i = e_i^T Y = \hat\alpha^T \begin{pmatrix} Z_{\sib(i)}\\ Y_{\pa(i)} 
    \end{pmatrix} = \hat\alpha^T
    \begin{pmatrix} 
      \Delta_{ \sib(i) , \pa(i)} & \Delta_{ \sib(i)
        , V \setminus \pa(i)}\\
      I_{\pa(i)} & 0
    \end{pmatrix} 
    \begin{pmatrix}
      Y_{\pa(i)} \\ Y_{V\setminus\pa(i)}
    \end{pmatrix},
  \]
  where $e_i$ is the $i$-th canonical basis vector.
  Since $Y$ has full rank, it follows that 
  \[
   e_i^T \;=\;  \hat\alpha^T \begin{pmatrix} 
      \Delta_{ \sib(i) , \pa(i)} & \Delta_{ \sib(i)
        , V \setminus \pa(i)}\\
      I_{\pa(i)} & 0
    \end{pmatrix}.
  \]
  In the proof of Lemma~\ref{lem:block-update-special}, we constructed
  a vector $w\in\mathbb{R}^V$ such that 
  \[
    c=
    \begin{pmatrix}
      0 \\ c_{i,\pa(i)}\\ -c_{i,0}
    \end{pmatrix}
    =\begin{pmatrix} 
      0&\Delta_{ \sib(i) , \pa(i)} & \Delta_{ \sib(i)
        , V \setminus \pa(i)}\\
      0&I_{\pa(i)} & 0\\
      1&0&0
    \end{pmatrix} w.
  \]
  We find that
  \[
    c^T\hat\alpha^T = \langle \begin{pmatrix} 
      \Delta_{ \sib(i) , \pa(i)} & \Delta_{ \sib(i)
        , V \setminus \pa(i)}\\
      I_{\pa(i)} & 0
    \end{pmatrix}w, \hat\alpha\rangle = \langle w, e_i\rangle = w_i = -c_{i,0}.
  \]

\section{Proofs of claims in Section \ref{sec:assumptionDiscussion}}
\subsection{Proof of Proposition \ref{lem:fullRankX}}
\label{sec:fullRankXProof}

\begin{repproposition}{lem:fullRankX}
  Let $G=(V,E_{\to},E_{\bi})$
  be a mixed graph, and let $i\in V$.  Then the following two
  statements are equivalent:
  \begin{enumerate}
  \item[(a)] Condition~\ref{A1} holds for generic triples
    $(Y,B,\Omega)\in\mathbb{R}^{V\times
      N}\times\mathbf{B}(G)\times\mathbf{\Omega}(G)$.
  \item[(b)] The induced subgraph $G_{-i}$ contains a system of
    half-collider paths from a subset of
    $V \setminus (\pa(i)\cup\{i\})$ to $\sib(i)$ such that the
    bi-directed portions are pairwise disjoint.
  \end{enumerate}
\end{repproposition}

\proof

For notational convenience, let
\[
\Lambda \;:=\; (I - B)_{-i} \qquad\text{and}\qquad
\Delta  \;:=\; \Omega^{-1}_{-i,-i}\Lambda   = \Omega^{-1}_{-i,-i}(I -
B)_{-i} .
\]

Then 
\begin{equation}
  \label{eq:ZsibYpaii}
\begin{pmatrix} Z_{\sib(i)}\\ Y_{\pa(i)} \\Y_i
   \end{pmatrix} = 
\begin{pmatrix} 
 0& \Delta_{ \sib(i) , \pa(i)} & \Delta_{ \sib(i)
    ,V \setminus (\pa(i)\cup\{i\})}\\
 0&I_{\pa(i)} & 0\\
  1&0&0
    \end{pmatrix} 
\begin{pmatrix}
  Y_i\\
  Y_{\pa(i)} \\ Y_{V\setminus(\pa(i)\cup\{i\})}
\end{pmatrix}.
\end{equation}

$(a)\Longrightarrow(b)$. If the matrix above has full row rank, then
$rk\left(\Delta_{ \sib(i) , V\setminus \{\pa(i)\cup i\}}\right) =
|\sib(i)| = S_i$. This implies that there exists a $S_i \times S_i$
submatrix of
$\Delta_{ \sib(i) , V\setminus \left(\pa(i)\cup\{i\}\right)}$ which
has full rank. Let that full rank sub-matrix be
$\Delta_{\sib(i) , \tilde V}$ where
$\tilde V \subseteq V\setminus \{\pa(i) \cup i\}$ where
$|\tilde V| = S_i$.  By the Cauchy-Binet formula, we have
\begin{equation*}
\det \Delta_{\sib(i) , \tilde V} = \sum_{A \in \left\{V \setminus i \atop S_i\right\}} \det\left[\Omega^{-1}_{-i,-i}\right]_{\sib(i) , A} \det \Lambda_{A , \tilde V}. 
\end{equation*}
Since $\det\Delta_{\sib(i) , \tilde V} \neq 0$ by construction, there
must exist a set $A \subseteq V \setminus \{i\}$ with $|A| = S_i$ for
which both $\det[\Omega^{-1}_{-i,-i}]_{\sib(i) , A} \neq 0$ and
$\det\Lambda_{A , \tilde V} \neq 0 $. 

Let $D$ be a $\mathbb{R}^{V-1 \times V-1}$ diagonal matrix with
$\left(\sqrt{\omega_{11}}, \ldots \sqrt{\omega_{i-1,i-1}},
  \sqrt{\omega_{i+1,i+1}} \ldots \sqrt{\omega_{VV}}\right)$ on the
diagonal.  Let $I - W$ be the correlation matrix corresponding to
$\Omega_{-i,-i}$, so that $D(I-W)D = \Omega_{-i,-i}$ and $(I-W)^{-1} =
D\Omega^{-1}_{-i,-i}D$. Then, $\det[\Omega^{-1}_{-i,-i}]_{\sib(i) , A}
\neq 0$ implies that
\begin{align*}
0 &\neq \det D_{\sib(i) , \sib(i)} \det[\Omega^{-1}_{-i,-i}]_{\sib(i) , A} \det D_{A , A}\\
 &=  \sum_{B,C \in {V\setminus i \choose S_i}}\det D_{\sib(i) , B} \det[\Omega^{-1}_{-i,-i}]_{B , C} \det D_{C , A}\\
 & = \det \left[D\Omega^{-1}_{-i, -i}D\right]_{\sib(i) , A}\\
 & = \det \left[(I-W)^{-1}\right]_{\sib(i), A},
\end{align*}
where the first equality holds because $\det D_{J,K} \neq 0$ iff
$J = K$ since $D$ is diagonal.  Applying Corollary 3.8 from
\citet{sullivant:2010}, we see that
$\det \left[(I-W)^{-1}\right]_{\sib(i), A} \neq 0$ implies that there
exists a system $\mathcal{P}$ of vertex disjoint bi-directed paths
from $\sib(i)$ to $A$.

A Leibniz expansion of $\det\Lambda_{A , \tilde V}$ shows that
$\det\Lambda_{A , \tilde V} \not=0$ implies that
the graph contains a matching of $\tilde V$ and $A$.  In other words,
we can enumerate the sets as $\tilde V=\{v_1,\dots,v_{S_i}\}$ and
$A=\{a_1,\dots,a_{S_i}\}$ such that $v_s=a_s$ or $v_s\in\pa(a_s)$ for
$s=1,\dots, S_i$.  Since we have the system $\mathcal{P}$ of vertex
disjoint bi-directed paths from $\sib(i)$ to $A$, there is thus a
system of half-collider paths with pairwise disjoint bi-directed
portions.  The paths are fully contained in $\mathcal{G}_{-i}$ because
we considered $\Omega_{-i,-i}$ and
$\tilde V \subseteq V\setminus \left(\pa(i) \cup \{i\}\right)$.

$(a) \Longleftarrow(b)$. When $Y$ is full rank (which is true for
generic $Y$ when $N \geq |V|$), a drop in the row rank of the matrix
displayed in~(\ref{eq:ZsibYpaii}) is equivalent to a drop in rank of
$\Delta_{ \sib(i), V \setminus \left(\pa(i)\cup \{i\}\right)}$.  We
show that given a set of nodes $\tilde V$
that satisfies the assumed graphical condition, the matrix
$\Delta_{ \sib(i), V \setminus \left(\pa(i) \cup \{i\} \right)}$ is
generically of full rank since there is an $S_i \times S_i$ minor
which only vanishes on a set of pairs $(B,\Omega)$ with Lebesgue
measure 0.  Here, $|\tilde V|=|\sib(i)|=S_i$.  In what follows we consider systems of half-collider paths
from $\tilde V$ to $\sib(i)$.  We always index the paths as $\pi^s$
where $s$ is the endpoint in $\sib(i)$.  We write $v_s$ for the other
endpoint of $\pi^s$, so $v_s\in\tilde V$.

First, given a valid half-collider path system
$\mathcal{P} = \{\pi^s\}_{s\in \sib(i)}$, we claim that there exists
an ordering $\prec$ of $\sib(i)$ and a system of half-collider paths
$\mathcal{\hat P}=\{\hat\pi^s\}_{s\in \sib(i)}$ such that $r \prec s$
implies that the first node of $\hat \pi^s$ is not
in the bi-directed portion of $\hat \pi^r$. Note that if the first
node of a half-collider path is the tail of a directed edge, it can
still appear in the bi-directed portion of another path in a valid
path system. Suppose that the specified ordering does not exist.  Then
there is a set of nodes
$G_{\text{cycle}} = \{g_1, g_2\dots g_q\} \subseteq \sib(i)$ where the
first node of $\pi^{g_k}$ is in the bi-directed portion of
$\pi^{g_{k-1}}$ for $k > 1$ and the first node of $\pi^{g_1}$ is in
the bi-directed portion of $\pi^{g_q}$. Then this implies that for
each $g_i \in G_\text{cycle}$, there exists some node in the
half-collider path $\pi^{g_i}$ closer to $g_i$ which is not in
$\pa(i)$, namely the first node of $\pi^{g-1}$. Thus, simply removing
the first directed edge of each of the half-collider paths produces a
valid system $\mathcal{\hat P}$ that can be ordered as claimed.  For
the remainder of the proof, we assume without loss of generality that
the considered system of half-collider paths $\mathcal{P}$ has the
desired ordering.

Consider the $S_i \times S_i$ sub-matrix
$\Delta_{ \sib(i), \tilde V}$, and note that the determinant of
$\Delta_{\sib(i), \tilde V}$ is a rational function of the elements of
$\Omega$ and $B$.  To show that both the numerator and denominator
only vanish on a null set, we appeal to Lemma 1 from
\citet{okamoto:1973} which states that the zero set of a nonzero
polynomial is a null set.  To show that the numerator and denominator
are not 0 everywhere, consider the  point
$\left(\Omega^\dagger, B^\dagger \right)$ whose coordinates are specified
as follows:
\begin{itemize}
\item Set all diagonal entries $\omega_{kk}^\dagger$ to 1;
\item Set $\omega_{jk}^\dagger > 0$ (but sufficiently small so that $\Omega$ is positive definite) if and only if $j \leftrightarrow k \in  \pi^s$ for some $s \in \sib(i)$;
\item Set $\beta_{k v_s}^\dagger = -1$ if and only if $ k \leftarrow v_s \in \pi_s$;
\item Set all other parameters to 0.
\end{itemize}
  Note that the
support of $\left(\Omega^\dagger, B^\dagger \right)$ matches the edges
of the half-collider paths in $\mathcal{P}$.
Let $\delta^\dagger_{jk}$ and $\lambda^\dagger_{jk}$ be the entries of the matrices
$\Delta_{ \sib(i), \tilde V}$ and $\Lambda$ constructed from
$\left(\Omega^\dagger, B^\dagger \right)$.   Let $\omega^\star_{jk}$ be
the entries of $(\Omega^\dagger_{-i,-i})^{-1}$.  Then
\begin{equation*}
0 \neq \tilde \delta^\dagger_{sv_s} = \sum_{k \in V \setminus i}\omega^\star_{sk} \lambda^\dagger_{kv_s} = \begin{cases} 
\omega^\star_{ss} &\mbox{ if } v_s = s, \\
\omega^\star_{sm}  &\mbox{ if the first edge in $ \pi^s$ is $v_s \rightarrow m$}, \\
 \omega^\star_{s v_s} &\mbox{ if the first edge in $\pi^s$ is bi-directed} .
\end{cases}
\end{equation*}
For any $r\neq s$, 
\begin{equation*}
\delta^\dagger_{s v_r} = \sum_{k \in V \setminus i}\omega^\star_{sk} \lambda^\dagger_{k v_s} = 
\begin{cases} \omega^\star_{s a_r} & \mbox{ if the first node of $ \pi^r$ is in $\pi^s$},\\
0 &\mbox{ else} .
\end{cases}
\end{equation*} 
By the assumed ordering of $\sib(i)$, if the first node of $\pi_r$
lies on $\pi_s$ then $s \succ r$.  Therefore, there is a permutation
of the rows and columns of $\Delta_{ \sib(i), \tilde V}$ that makes
the matrix upper triangular with $\delta^\dagger_{s\hat v_s}$ on the
diagonal.  Hence, the determinant is
\begin{equation*}
\det \Delta^\dagger_{\sib(i) , \tilde V} = \prod_s\tilde \delta^\dagger_{s \hat v_s}  \neq 0.
\end{equation*}
It follows that the determinant of $ \Delta_{\sib(i) , \tilde V}$ is
nonzero almost everywhere.  The assumed graphical condition thus
implies that the matrix in~(\ref{eq:ZsibYpaii}) has generically
full rank. \qed

\subsection{Proof of Theorem \ref{thm:unique_updates}}
\label{sec:unique_updatesProof}

\begin{reptheorem}{thm:unique_updates}
For a mixed graph
  $G=(V,E_{\to},E_{\bi})$, the following two statements are equivalent:
  \begin{enumerate}
  \item[(a)] For all $i\in V$, the induced subgraph $G_{-i}$ contains
    a system of half-collider paths from a subset of
    $V \setminus (\pa(i)\cup\{i\})$ to $\sib(i)$ such that the
    bi-directed portions are pairwise disjoint.
  \item[(b)] For generic triples
    $(Y,B_0,\Omega_0)\in\mathbb{R}^{V\times
      N}\times\mathbf{B}(G)\times\mathbf{\Omega}(G)$, any finite
    number of iterations of the BCD algorithm for $\ell_{G,Y}$ have
    unique and feasible block updates when 
    $(B_0,\Omega_0)$ is used as starting value.
  \end{enumerate}
\end{reptheorem}
\begin{proof}
  If the graphical condition (a) fails then, by
  Proposition~\ref{lem:fullRankX} and Theorem~\ref{thm:A1A2}, there
  exists a node $i\in V$ at which the BCD algorithm does not have a
  unique and feasible update, irrespective of the choice of
  $(Y,B_0,\Omega_0)$.

  Conversely, suppose condition (a) holds.  Let $i(t)$ be the node
  considered in step $t$ of the BCD algorithm, and let
  $(B_t,\Omega_t)$ be the pair of parameter matrices after the $t$-th
  block update.  
  By Theorem~\ref{thm:A1A2}, each pair $(B_t,\Omega_t)$ is a rational
  function of the input triple $(Y,B_{t-1},\Omega_{t-1})$.  For
  $i=i(t)$, the conditions~\ref{A1} and~\ref{A2} from
  Proposition~\ref{lem:fullRankX} are rational conditions on
  $(Y,B_{t-1},\Omega_{t-1})$.  Therefore, for any $T\ge 1$ there is a
  rational function $f_T(Y,B,\Omega)$ such that
  $f_T(Y,B_0,\Omega_0)\not=0$ if and only if the BCD updates at
  $i(1),\dots,i(T)$ all have unique feasible solutions.  By
  \citet[Lemma 1]{okamoto:1973}, it now suffices to show that there exists
  a triple $(Y,B_0,\Omega_0)$ such that when started at
  $(B_0,\Omega_0)$ the BCD updates at $i(t)$, $t\le T$, all have
  unique and feasible solutions.

  When condition (a) holds and $Y$ is full rank,~\ref{A1} holds for
  generic $\Omega$ and $B$ for all $i\in V$.  Thus, we may pick
  $\Omega_0$ and $B_0$ such that~\ref{A1} holds for every $i \in V$.
  Now choose $Y$ as in the proof of Proposition~\ref{prop:A2generic},
  with $(I-B_0)^{-1}\Omega_0(I-B_0)^{-T}=\frac{1}{N} YY^T$.  Then as
  shown when proving Proposition~\ref{prop:A2generic},
  condition~\ref{A2} holds for every $i\in V$.  By
  Proposition~\ref{lem:fullRankX}, the first BCD update problem has a
  unique feasible solution.  This solution is
  $(B_1,\Omega_1)=(B_0,\Omega_0)$ because $(B_0,\Omega_0)$ is a global
  maximizer of the likelihood function by definition of $Y$.  By
  induction, we have $(B_t,\Omega_t)=(B_0,\Omega_0)$ at all steps $t$.
  Consequently, for the triple $(Y,B_0,\Omega_0)$, any finite number
  of BCD updates have unique and feasible solutions, as we needed to
  show.
\end{proof}

\subsection{Verifying graphical condition in polynomial time} 
\label{sec:poly-time}

\begin{repproposition}{prop:poly-time}
  For any mixed graph $G=(V,E_{\to},E_{\bi})$, condition (a) in
  Theorem~\ref{thm:unique_updates} can be checked in
  $\mathcal{O}(|V|^5)$ operations.
\end{repproposition}

\begin{proof}
  In order to show that condition (a) from Theorem
  \ref{thm:unique_updates} can be checked in $\mathcal{O}(|V|^5)$
  operations, it suffices to show that the condition imposed at each
  node $i\in V$ can be checked in $\mathcal{O}(|V|^4)$ operations.  So
  fix an arbitrary node $i\in V$.

  If $\pa(i) \bigcap \sib(i) = \emptyset$, the condition is trivially
  satisfied by taking the set $\tilde V = \sib(i)$.  If
  $\pa(i) \bigcap \sib(i) \not= \emptyset$ (i.e.,
  $j \rightarrow i \in E_\rightarrow$ and
  $j\leftrightarrow i\in E_\leftrightarrow$ for some $j \in V$), we
  can check the condition by considering a relevant flow network $\hat G_i$
  which captures all half-collider paths to $\sib(i)$.  In this
  network, we include a sink node connected to each node in $\sib(i)$
  and a source node connected to each ``allowable'' node, that is, any
  node $j \not \in \pa(i)$ that can reach $\sib(i)$ through a
  half-collider path. In this network $\hat G_i$, the half-collider
  path criterion of Theorem $\ref{thm:unique_updates}$ is satisfied if
  and only if the maximum flow from the source to the sink is equal to
  $|\sib(i)|$.

  More specifically, the flow network $\hat G_i$ is constructed as
  follows:
  \begin{enumerate}
  \item[(i)] Include a source $q$ and a sink $t$.
  \item[(ii)] Let $C\subseteq V \setminus\{i\}$ be the set of all nodes
    with a bi-directed path to $i$ in ${G}$.
    \begin{enumerate}
    \item Add a node $b(j)$ to $\hat G_i$ for each $j \in C$.
    \item For all $j,k \in C$ with
      $j \leftrightarrow k \in E_\leftrightarrow$, add edges
      $b(j) \rightarrow b(k)$ and $b(k) \rightarrow b(j)$ to
      $\hat G_i$.
    \item For each $j \in C \setminus \pa(i)$, if $j \in \pa(C)$,
      then include node $q(j)$ and add edges
      $q \rightarrow q(j) \rightarrow b(j)$ to $\hat G_i$.  If
      $j\not\in\pa(C)$, then add edge $q \rightarrow b(j)$ to
      $\hat G_i$.
    \item For each
      $j \in \pa(C) \setminus \left(\pa(i)\cup \{i\}\right)$, add
      node $d(j)$ and edges $d(j) \rightarrow b(m)$ for all
      $m \in C$ with $j\in\pa(m)$.  If $j \in C $ then add edge
      $q(j) \rightarrow d(j)$, and if $j\not\in C$, add edge
      $q \rightarrow d(j)$.
    \item For each $j \in \sib(i) \cap C$, add edge $b(j) \rightarrow t$.
    \end{enumerate}
  \item[(iii)] Let the capacity of $q$ and $t$ be $|\sib(i)|$. Let
    all other node capacities be 1.  Set all edge capacities to 1.
  \end{enumerate}
  The network $\hat G_i$ includes a directed path from source $q$ to
  sink $t$ to represent each valid half-collider path from
  $V \setminus \left(\pa(i)\cup \{i\}\right)$ to $s \in \sib(i)$.  In
  doing so, it is important to keep track of whether a nodes is in the
  bi-directed part of one half-collider path and the directed part of
  another half-collider path.  To capture the directed and bi-directed
  roles of a node, respectively, we represent any node
  $j \in C \cap \pa(C) \setminus \left(\pa(i)\cup \{i\}\right)$ with
  the two nodes $d(j)$ and $b(j)$ in $\hat G_i$.  However, in order to
  ensure that each node $j$ is the beginning node for only one
  half-collider path in the path system (so that $\tilde V$ has
  cardinality $|\sib(i)|$), a bottleneck node $q(j)$ is included to
  ensure that at most 1 total unit of flow ``originates" at the
  representations of $j$.

  As shown in Lemma~\ref{thm:checkCondition} below, the graphical
  condition can be checked by solving the maximum flow problem for
  $\hat G_i$.  The standard max flow problem (with edge capacities but
  not node capacities) can be solved in
  $\mathcal{O}(|\hat V|^2 |\hat E|)$, where $\hat V$ and $\hat E$
  are the vertex and edge set of the network, respectively; see
  \cite{edmonds:1970}.  To encode the node capacity constraints, we
  augment our network so that each node has an additional in/out node
  with a single edge pointing to/from the original node with the
  original node capacity.  In the constructed network
  $|\hat V | \leq 5|V|$ and $|\hat E| \leq |\hat V|^2$ so the max flow
  problem for each individual node is $\mathcal{O}(|V|^4)$, as was our
  claim.
\end{proof}

\begin{lemma} \label{thm:checkCondition} In the given mixed graph $G$,
  there exists a set of $|\sib(i)|$ nodes
  $\tilde V \subseteq V \setminus (\pa(i)\cup\{ i\})$ such that there
  is a system of half-collider paths $\mathcal{P}$ from $\tilde V$ to
  $\sib(i)$ where the bi-directed components are vertex distinct and
  do not include $i$, if and only if the constructed network
  $\hat G_i$ has maximum flow from source $q$ to sink $t$ of $|\sib(i)|$.
\end{lemma}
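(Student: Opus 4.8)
The plan is to prove both implications by matching integral $q$–$t$ flows in $\hat G_i$ with systems of half-collider paths in $G$, exploiting that the unit node capacities of $\hat G_i$ encode exactly the two disjointness requirements: the capacity-$1$ nodes $b(j)$ force the bi-directed portions to be pairwise vertex-disjoint; the capacity-$1$ bottleneck nodes $q(j)$ force the start nodes to be distinct, so that the collected set $\tilde V$ has size $|\sib(i)|$ rather than being a multiset; and the capacity-$1$ nodes $b(s)$, $s\in\sib(i)$, together with the sink capacity $|\sib(i)|$, force the endpoints in $\sib(i)$ to be distinct.

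For $(\Leftarrow)$, all capacities are nonnegative integers, so $\hat G_i$ has an integral maximum flow; assuming its value equals $|\sib(i)|$, a flow decomposition (deleting circulations, which does not change the value) writes it as $|\sib(i)|$ unit-flow paths from $q$ to $t$, and by the unit capacities on internal nodes these paths are internally vertex-disjoint. One then reads each such network path back into $G$ by a fixed dictionary: a leading stretch through $d(j)$ and then $b(m)$ becomes a directed edge $j\to m$; a leading stretch arriving at $b(j)$ directly (or a lone $d(j)$ when $j\notin C$) merely records $j$ as the start node; each later edge $b(k)\to b(l)$ becomes a bi-directed edge $k\bi l$; and any intervening bottleneck node $q(j)$ is ignored. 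This yields a half-collider path from some $j\notin\pa(i)\cup\{i\}$ to the sibling adjacent to $t$ on that path; since no $b(i)$ exists, $i$ lies on no bi-directed portion. The resulting $|\sib(i)|$ paths have pairwise vertex-disjoint bi-directed portions (their $b$-nodes are disjoint), distinct endpoints in $\sib(i)$ (the $b(s)\to t$ edges are vertex-disjoint and all saturated), and distinct start nodes (each $q(j)$ caps the flow originating at $j$ at one unit); taking $\tilde V$ to be the set of start nodes completes this direction.

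For $(\Rightarrow)$, let $\mathcal{P}=\{\pi^s\}_{s\in\sib(i)}$ be a system from $\tilde V$ to $\sib(i)$ with pairwise vertex-disjoint bi-directed portions; since $|\tilde V|=|\sib(i)|$, the assignment $s\mapsto v_s$ of start nodes is a bijection onto $\tilde V$. One routes one unit of flow per $s$ along the inverse of the above dictionary: if $\pi^s$ opens with a directed edge $v_s\to m$, send it from $q$ to $d(v_s)$ (through $q(v_s)$ when that node exists) and on to $b(m)$, otherwise send it from $q$ to $b(v_s)$; then follow the bi-directed part of $\pi^s$ through the matching $b$-nodes and finish with $b(s)\to t$. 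Every capacity-$1$ constraint is respected: each $b(j)$ is used by at most one path because the bi-directed portions are disjoint, and each $q(v_s)$ and $d(v_s)$ is used only by $\pi^s$ because the $v_s$ are distinct; the source and sink capacities are met with equality $|\sib(i)|$. Thus $\hat G_i$ admits a flow of value $|\sib(i)|$, which is maximal since the edges into $t$ form a cut of capacity $|\sib(i)|$.

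I expect the main obstacle to be verifying that the two dictionaries are genuinely inverse in the presence of dual-role nodes $j\in C\cap\pa(C)\setminus(\pa(i)\cup\{i\})$, represented in $\hat G_i$ by both $d(j)$ and $b(j)$. One must check that whenever a path uses $b(j)$ as an interior bi-directed node, vertex-disjointness of bi-directed portions prevents any other path from also needing $b(j)$, so that a further path may still originate at $j$ via $d(j)$ without overloading $b(j)$; and conversely that the $q(j)$ bottleneck rules out a node being counted twice in $\tilde V$ while never excluding a legitimate system. A secondary check is that the ``allowable'' start nodes the source is attached to — those $j\notin\pa(i)$ admitting a half-collider path to $\sib(i)$ — are precisely those with $j\in C$ or $j\in\pa(C)$, so the translation neither loses nor invents paths; both of these checks follow directly from the definitions of $C$, $\pa(C)$, and the edge set of $\hat G_i$.
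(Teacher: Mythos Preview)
Your proposal is correct and follows essentially the same approach as the paper: both directions are proved by translating between integral $q$--$t$ flows in $\hat G_i$ and systems of half-collider paths, using integer-valued flow decomposition for $(\Leftarrow)$ and checking that the unit node capacities (on the $b(j)$'s and the bottleneck nodes $q(j)$'s) encode precisely the required disjointness of bi-directed portions and distinctness of start nodes. Your write-up is more explicit than the paper's terse version---in particular you spell out the ``dictionary'' between network paths and half-collider paths and flag the dual-role nodes---but the underlying argument is the same.
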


\proof Suppose there exists a system of half-collider paths
$\mathcal{P}$ that satisfies the given condition.  Then there is a
corresponding system of paths from source $q$ to source $t$ in
$\hat G_i$ with flow 1 over each edge.  Since the bi-directed
components are vertex distinct and each node in $\tilde V$ is
distinct, we do not use any of the nodes in $\hat G_i$ more than once
so none of the node or edge capacities are exceeded.  Since there are
$|\sib(i)|$ half-collider paths, the total flow is also $|\sib(i)|$.

Conversely, suppose the maximum flow on $\hat G_i$ is $|\sib(i)|$.
Because each capacity is integer valued, the flow can be decomposed
into a system $\mathcal{P}$ of directed paths with integer flow
\citep{ford1962flows}.  By construction, each path in $\mathcal{P}$ is
a valid half-collider path which begins at some node
$j \not \in \left(\pa(i)\cup \{i\}\right)$ and ends at some
$s \in \sib(i)$. In addition, since $q(j)$ only has capacity 1, the
edges $q(j) \rightarrow d(j)$ and $q(j) \rightarrow b(j)$ cannot
simultaneously be utilized in $\mathcal{P}$.  Thus, each edge which
carries flow from the source $q$ represents a distinct node in
$V\setminus \left(\pa(i)\cup \{i\}\right)$ and
$|\tilde V| = |\sib(i)|$. Because there are only $|\sib(i)|$ edges to
the sink, each connected to some node representing $s \in \sib(i)$,
there is a half-collider path to each $s \in \sib(i)$. Finally, since
the capacity of each node in $\hat G_i$ is one, each node $b(j)$ only
appears once in the system which makes the bi-directed portions
vertex distinct. \qed

\end{appendices}

\end{document}